\newtheorem{problem}{Problem}
\newcounter{proofcount}
\crefname{proofcount}{Appendix Proof}{Proof}
\newenvironment{proofcount}[1][\proofname]{
\refstepcounter{proofcount}
\par
  \pushQED{\qed}%
  \trivlist
  \item[\hskip\labelsep
        \color{lipicsGray}\sffamily\bfseries
    #1.]\ignorespaces
}{%
  \popQED\endtrivlist
}
\title{Explicit Combinatoric Structures of Palindromes and Chromatic Number of Restriction Graphs}
\author{Amihood Amir}{Department of Computer Science, Bar Ilan University, Ramat Gan 52900, Israel; and Georgia Tech, College of Computing}{amir@cs.biu.ac.il}{[orcid]}{Partially supported by ISF grant 168/23 and BSF grant 2018-141.}
\authorrunning{Amihood Amir and Michael Itzhaki}
\author{Michael Itzhaki}{Department of Computer Science, Bar Ilan University, Ramat Gan 52900, Israel}{michaelitzhaki@gmail.com}{[orcid]}{Partially supported by ISF grant 168/23.}
\begin{document}
\nolinenumbers
\keywords{Lower bound,Palindrome,String}
%
%
%
%

\titlerunning{Combinatorial Structures of Palindromes}
\maketitle              

\begin{abstract}
The palindromic fingerprint of a string $S[1\ldots n]$ is the set $PF(S) = \{(i,j)~|~ S[i\ldots j] \textit{ is a maximal }\\ \textit{palindrome substring of } S\}$. In this work, we consider the problem of string reconstruction from a palindromic fingerprint. That is, given an input set of pairs $PF \subseteq [1\ldots n] \times [1\ldots n]$ for an integer $n$, we wish to determine if $PF$ is a valid palindromic fingerprint for a string $S$, and if it is, output a string $S$ such that $PF= PF(S)$. I et al. [SPIRE2010] showed a linear reconstruction algorithm from a palindromic fingerprint that outputs the lexicographically smallest string over a minimum alphabet. They also presented an upper bound of $\mathcal{O}(\log(n))$ for the maximal number of characters in the minimal alphabet.

In this paper, we show tight combinatorial bounds for the palindromic fingerprint reconstruction problem. We present the string $S_k$, which is the shortest string whose fingerprint $PF(S_k)$ cannot be reconstructed using less than $k$ characters. The results additionally solve an open problem presented by I et al.

\end{abstract}

\newpage

\section{Introduction}\label{s:intro}

One of the methods for understanding the essence of data structures
is by analyzing a candidate to decide whether it is a legal instance
of the data structure. Suppose, for example, one is given a 
permutation $P$ and needs to decide if a string $S$ exists such that $P$ is
its {\em suffix array}. Another example, given a matrix, decide if it is an adjacency matrix of a planar graph. We henceforth refer to this problem as
{\em recognizing} or {\em reconstructing} a data structure.

There are several reasons for wanting to recognize data structures.
1) From a purely theoretical point of view, there is a clear
  theoretical combinatorial interest in this endeavor. 2) From a practical point of view, general graph problems are
  known to be hard~\cite{GJ-79}. Nevertheless, a subset of these problems are
efficiently computable on special graphs, such as {\em planar graphs},
{\em interval graphs}, etc. Quickly recognizing these special graphs
can provide us with efficient algorithms in specialized cases. 3) Understanding the nature of data structures may lead to more efficient ways of constructing them. 4) Understanding the nature of a data structure may aid in ways of efficiently maintaining it in a dynamic environment. 

The intriguing line of research into efficiently recognizing data
structures has a long history in pattern matching and is an ongoing topic of interest ~\cite{AKL:23,AKLMS:24,DLL:05,FGLRSSY:00,IIBT:14,KTV:13,DLL:09,BIST:03,CCR:09,KPP:17,NOIIBT:17,GI:19,abik:20,MAIN1984422}.

Syntactic regularities in strings play a pervasive role in many facets of data analysis.
Searching for repeated patterns, periodicities, symmetries, and other similar forms or unusual
patterns in objects is a recurrent task in the compression of data, symbolic dynamics, genome
studies, intrusion detection, and countless other activities. A palindrome is a string that is equal to its reverse. Palindromes are a basic syntactic regularity that has been explored for
thousands of years. In addition to its myriad theoretical virtues, the palindrome also
plays an important role in nature. Because the DNA is double-stranded,
its base-pair representation offers palindromes in hairpin structures,
for example. A variety of restriction enzymes recognize and cut specific
palindromic sequences. In addition, palindromic sequences play roles in
methyl group attachment and T cell receptors. For examples of
the varied roles of palindromes in biology see,
e.g.~\cite{gk:97,f:03,lssnz:05,sr:12}.

 Consider, now, the problem of reconstructing a string from its {\em palindromic fingerprint}. We formally define the problem as follows. 
\begin{definition}[Palindromic Fingerprint]\label{d:palf}
Let $S$ be a string and $(i, j)$ be a palindromic descriptor of $S$ using start-end representation, i.e., $S[i..j]$ is a palindrome. $(i, j)$ is a {\em maximal palindrome} if and only it cannot be extended, meaning that the palindromic descriptor $(i-1,j+1)$ is either not defined, or is not a palindrome. We refer to a nonempty maximal palindrome as a {\em palindromic substring}.

The {\em palindromic fingerprint} of $S$, or simply a {\em Fingerprint}, is a list of all pairs of indices $(i,j)$, where $S[i\ldots j]$ is a maximal palindrome. We denote this fingerprint by $PF(S)$.

\end{definition}

The problem of {\em reconstructing a string from its palindromic fingerprint} is:
\begin{problem}[The palindromic fingerprint reconstruction problem]\label{p:main}~\\
{\bf Input}: A natural number $n$ and a list $F$ of pairs of indices $(i,j)$, where $i<j$ and $1 \leq i,j \leq n$.\\
{\bf Output:} A string $S$ of length $n$ where $PF(S)=F$, or report that there can not exist a string $S$ of length $n$ whose palindromic fingerprint is $F$.
\end{problem}
An example can be found at~\cref{e:p1}.

\subsection{Related Work}
I et al.~\cite{tomohiro:10} talk in their paper about problem~\ref{p:main} and solve it in linear time with the minimum alphabet by using a greedy approach that introduces the lexicographically smallest character to the reconstruction on demand. To the best of our knowledge, the paper by I et al. talks about a lexicographically minimal string but does not provide proof that their algorithm produces the minimal alphabet size. In this paper we prove that their algorithm indeed produces a minimal alphabet size.

Gawrychowski et al.~\cite{GKRR:20} present a general linear-time algorithm for reconstructing a string 
satisfying a set of positive-only constraints, in particular from runs or palindromes. A positive-only 
constraint is a constraint of the form $T[i\ldots i+\ell] = T[j\ldots j+\ell]$. Such a set of constraints can 
present palindromes, but cannot describe maximal palindromes, as maximal palindromes also introduce a negative constraint. The general 
reconstruction problem over the minimal alphabet, using both negative and positive constraints on a string is known
to be $\mathcal{NP}$-hard (Shown at~\cref{c:colnp}), as it is equivalent to the graph-coloring problem, even when the negative 
constraint length is at most 1, i.e., only constraints of the form $a[i] \neq a[j]$. It is well known that calculating the chromatic number of a general graph is hard~\cite{GJS:74}. 

\subsection{Our Contribution}
In this paper, we discuss problem~\ref{p:main}. We present an explicit structure of a fingerprint that needs at least $k$ characters to reconstruct and provide a method to reconstruct any palindromic fingerprint with exactly $k$ different characters, where $k$ is an algorithm's argument. By doing so, we solve the open problem presented by I et al.~\cite{tomohiro:10}, of finding a string with a fingerprint that contains exactly $k$ different alphabet symbols.

Our main results show a careful analysis of palindromic fingerprints - tight bounds on the relation between the fingerprint size and the minimal alphabet size of a string reconstructing it. We also analyze different types of palindromes and show an explicit structure of the shortest palindromic fingerprint that requires $k$ characters to reconstruct. We also present problem~\ref{p:main} as a graph coloring problem, which, combined with the algorithm of I et al., can produce all possible alphabet sizes for a given palindromic fingerprint.

Our main result implies the following theorem:
\begin{restatable}{theorem}{maintheorem}\label{t:main}
Let $IPF(k)$ be the shortest length $n$ of a palindromic fingerprint that can not be reconstructed with less than $k$ characters. The function is the following:

$$
IPF(k)=\begin{cases}
			1, & \text{if $k=1$}\\
            2^{k-2}+1, & \text{otherwise.}
		 \end{cases}
$$
\end{restatable}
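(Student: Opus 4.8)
The plan is to recast Problem~\ref{p:main} as a graph-colouring question and then prove the two matching bounds $IPF(k)\le 2^{k-2}+1$ and $IPF(k)\ge 2^{k-2}+1$ separately; the case $k=1$ is immediate, since every nonempty string uses at least one letter and length $1$ suffices. Following the reduction behind \cref{c:colnp}, the minimum alphabet needed to reconstruct a fingerprint equals the chromatic number of a \emph{restriction graph}: its vertices are the positions $1,\dots,n$, the palindromic symmetries merge positions into classes (the positive constraints), and maximality adds inequality edges (the negative constraints). The single combinatorial fact I would isolate first is an explicit description of these edges: for $q<p$, any reconstruction is forced to set $S[q]\neq S[p]$ exactly when the interior $S[q+1\ldots p-1]$ is a palindrome (the empty word when $p=q+1$), because then the maximal palindrome centred between $q$ and $p$ cannot extend past them. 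The equalities come dually from the palindromic symmetries. Everything below is phrased through this description.

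For the upper bound on $IPF$ I would exhibit the extremal string. Set $S_1=a$, $S_2=ab$, and for $k\ge 3$, writing $m=|S_{k-1}|$, define $S_k$ by appending to $S_{k-1}$ the reverse of its prefix $S_{k-1}[1\ldots m-1]$ and then overwriting the last symbol with a fresh letter not occurring in $S_{k-1}$. Before the overwrite this is the palindrome obtained by reflecting $S_{k-1}$ about its last character, so $|S_k|=2m-1$ and hence $|S_k|=2^{k-2}+1$; one checks $S_2=ab$, $S_3=abc$, $S_4=abcbd$, and that each level contributes exactly one new letter, so $S_k$ uses exactly $k$ letters. I would then prove by induction that the $k$ colour classes of $S_k$ are pairwise adjacent in the restriction graph: for each pair one exhibits representatives whose witnessing interior is a palindrome (by the reflective construction), while the palindromic symmetries identify the repeated letters into the correct classes. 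Thus the quotient contains $K_k$, so $\chi=k$, and since $S_k$ itself is a $k$-letter reconstruction of its own fingerprint, that fingerprint needs exactly $k$ letters, giving $IPF(k)\le 2^{k-2}+1$.

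For the lower bound I would argue via the greedy left-to-right reconstruction of I et al.~\cite{tomohiro:10}, which by the colouring correspondence attains the minimum alphabet, and bound the position $q_c$ at which the $c$-th distinct letter first appears. When a new letter is forced at position $p$, each of the $c-1$ earlier letters must be forbidden there, so by the edge description there are $c-1$ palindromic suffixes of $S[1\ldots p-1]$ whose immediately preceding positions carry those $c-1$ distinct letters. The heart of the proof is that the lengths of these witnessing suffixes must at least halve: if two of them had lengths $\ell'>\ell/2$, the longer palindromic suffix would be a palindrome possessing a palindromic suffix of length $\ell'$, forcing a period $\ell-\ell'\le \ell/2$ on it (Fine--Wilf), and this periodicity aligns two witnessing positions by a full period, forcing their letters to coincide and contradicting distinctness. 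Consequently at most one witnessing length lies in each of the $O(\log(p-1))$ arithmetic progressions into which the palindromic suffixes of a length-$(p-1)$ word fall, so $c-1\le \lfloor\log_2(p-1)\rfloor+1$, whence $p\ge 2^{c-2}+1$. Taking $c=k$ shows no string shorter than $2^{k-2}+1$ can force $k$ letters, i.e.\ $IPF(k)\ge 2^{k-2}+1$.

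I expect the main obstacle to be precisely this halving step: turning ``two long palindromic suffixes'' into an honest equality of two witness letters requires carefully tracking how the induced period of the larger palindrome maps one witness position onto another (equivalently, organising the palindromic suffixes by their series links, as in the palindromic-tree structure) and ruling out off-by-one boundary cases where a witness sits just outside the periodic run. Once per-progression uniqueness of letters is established, the geometric $2^{k-2}$ spacing, and hence \cref{t:main}, follows by a short induction.
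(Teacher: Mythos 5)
Your upper bound is fine and is essentially the paper's: your $S_k$ (e.g.\ $abcbd$ for $k=4$) is exactly the paper's extremal string $xP_{k-2}y$ built from the Zimin word, and pairwise adjacency of the colour classes is the paper's clique lemma (\cref{l:clique}). The genuine gap is in your lower bound, in the halving step, and it is not a boundary technicality: the claim ``if two witnessing suffixes have lengths $\ell>\ell'>\ell/2$ then the period forces their letters to coincide'' is false. Take $S[1\ldots 6]=xababa$ with $x\notin\{a,b\}$ (extendable to the valid string $xababay$, where both $(2,6)$ and $(4,6)$ are maximal palindromes forbidding letters at position $7$). The palindromic suffixes $ababa$ ($\ell=5$) and $aba$ ($\ell'=3$) satisfy $\ell'>\ell/2$, and the period $\ell-\ell'=2$ is indeed forced on $ababa$, yet the witnessing letters are $S[1]=x$ and $S[3]=b$, which are distinct. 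The reason is structural, not an off-by-one case you can rule out: the witness of the \emph{longer} suffix always sits immediately to the left of that suffix, i.e.\ outside the periodic run, so periodicity never aligns it with anything. For the same reason ``at most one witnessing length per arithmetic progression'' fails (here $x$ and $b$ witness lengths $5$ and $3$, both in the progression $\{5,3,1\}$), so your count $c-1\le\lfloor\log_2(p-1)\rfloor+1$ is unproven as written; the naive per-progression bound is $2$, which only yields $p\ge 2^{(c-1)/2}$, too weak for the theorem.

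The step can be repaired, but the periodicity argument must be re-aimed: for each forbidden letter take the \emph{shortest} palindromic suffix it precedes, of length $\lambda_u$ for letter $u$. If $\lambda_u>\lambda_v$ and $\lambda_u\le 2\lambda_v$, then $d=\lambda_u-\lambda_v$ is a period of the longer witness $P_u$, the letter preceding the length-$\lambda_v$ suffix is $P_u[d]=v$, and since $P_u[2d]=P_u[d]$ the letter $v$ also precedes the (palindromic) suffix of length $2\lambda_v-\lambda_u<\lambda_v$ --- contradicting minimality of $\lambda_v$. So the correct conclusion of the period is not that two letters coincide, but that the shorter letter acquires an even shorter witness; hence shortest witnesses more than double, giving $p-1\ge 2^{c-2}$ and exactly your bound. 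With that fix your lower-bound route is genuinely different from the paper's, which never analyses palindromic suffixes at all: the paper proceeds by structural induction on palindromic islands and crossing palindromes (\cref{l:np2}, \cref{l:isl}, \cref{l:add-one}), reducing every fingerprint to a non-crossing one and characterising optimal strings as $xLtLy$; your argument, once repaired, is considerably more direct. Two smaller points: you cannot cite I et al.\ for optimality of their greedy algorithm (that optimality is proved in this paper, not there), but you also do not need it --- to lower-bound $IPF(k)$ it suffices that \emph{some} reconstruction, e.g.\ the greedy one, of every valid length-$n$ fingerprint uses at most $\lfloor\log_2(n-1)\rfloor+2$ letters; and your edge criterion should read ``the interior is a \emph{maximal} palindrome of the fingerprint,'' since an interior palindrome that extends forces equality, not inequality.
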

A corollary of the theorem is that any fingerprint $F$ of length $n$ can be reconstructed with an alphabet of size at most $\log(n-1)+2$.

\section{Preliminaries}\label{s:pre}

Let $S = S[1]S[2] \ldots S[n]$ be a string. We denote the length of $S$ as $|S| = n$. The empty string with $|S| =0$ is denoted as $\varepsilon$. For two integers $i, j \in \{1\ldots n\}$, we denote as $S[i\ldots j]$ the substring of $S$ starting at index $i$ and ending at index $j$, i.e., $S[i\ldots j] = S[i]S[i+1]\ldots S[j]$. If $j < i$, $S[i\ldots j]$ is the empty string $\varepsilon$. A prefix of $S$ is a substring $S[1\ldots i]$ starting at index $1$, and a suffix of $S$ is a substring $S[i\ldots n]$ ending at index $n$.

For two strings $S$ and $T$, we denote the concatenation of $S$ and $T$ as \\
$ST = S[1]S[2] \ldots S[|S|] T[1] T[2] \ldots T[|T|]$. 
For an integer $i \in \mathbb{N}$, we denote as $S^i$ the concatenations of a string to itself $i$ times, i.e., $S^1 = S$, and $S^i = S^{i-1} \cdot S$, for any $i\geq 2$.

$S$ is called a \textit{palindrome} or \textit{palindromic} if $\forall_{ 1 \le i < |S|} S[i]=S[|S|-i+1]$. For example, $S=abcba$, or $S=abccba$ are palindromes. However, $S=abcbaa$ is not a palindrome. A nonempty substring $P=S[i\ldots j]$ is called a maximal palindrome of $S$ if $P$ is a palindrome, and either $i=1$, $j=n$, or $S[i-1] \neq S[j+1]$. The center of $P$ is $c_P = \frac{i+j}{2}$ and its radius is $r_P=\frac{j-i}{2}$. 
Note that there are $2n-1$ centers in $S$, each of which is the center of at most one maximal palindrome substring. \footnote{When counting maximal palindromes of length 0, there is exactly one palindrome at each center.}
 
\begin{definition}
    Let $S$ be a string. We say that a character $c$ {\em occurs} in $S$ if and only if $\exists i$ s.t. $S[i] = c$.
\end{definition}

\begin{definition}[Trivial palindrome]
    A {\em trivial palindrome} is a palindrome of length at most 1.
\end{definition}

\begin{definition}[Palindromic descriptor]\label{d:pald}
    A {\em maximal palindromic descriptor} is a pair of integers $(i, j)$ inside a palindromic fingerprint $F$ that describes a palindrome between indices $i$ and $j$.
\end{definition}
An example to a palindromic fingerprint can be found at~\cref{e:palf}.

\begin{definition}
Let $G=(V, E)$ be a graph. The function $\psi:V\to\mathbb{N}$ is a {\em graph coloring} if and only if $\forall (i, j) \in E, \psi(i) \neq \psi(j)$.
\end{definition}

\begin{definition}
A palindromic fingerprint $F$ is valid if and only if there exists a string $S$ such that $F=PF(S)$. Any string $S$ that satisfies $PF(S)=F$ is called a {\em preimage} of $F$. 
\end{definition}

\begin{example}
Let $S=(abc)^n$. There are no non-trivial palindromes in $S$, thus the fingerprint is $\{\}$.
\end{example}

\section{The Alphabet Upper Bound Theorem}\label{s:alph}

We are now ready to prove the main theorem, which we repeat below: 
\maintheorem*

\begin{example}
    Let $k=5$. By the theorem, $IPF(5)=9$, which means that the shortest fingerprint that cannot be reconstructed with less than 5 characters is of length 9. The fingerprint $A=(\{(2,8),(2,4),(6,8)\}, 9)$ can be described as a restriction graph that is a clique of size 5, hence $IPF(5)\le 9$. However, all fingerprints of length 8 can be recovered with four or fewer characters. \footnote{Will be proven later and can be checked manually.}
\end{example}

From now on, we represent the input as a pair: $(F, n)$, where $F$ is the palindromic fingerprint, and $n$ is the length of the string we aim to reconstruct.

The proof overview can be found at~\cref{a:pov}, and is a suggested reading.

\subsection{The Restriction Graph}
In this subsection we will present the {\em Restriction Graph} superficially. The restriction graph is an important tool to understand the palindromic fingerprint reconstruction in depth. The in-depth details of restriction graphs can be found at~\cref{ss:RG}.

\begin{definition}[Restriction graph]
The restriction graph of a palindromic fingerprint and length $(F,n)$ is a graph $G=(V,E)$ where there is a bijection between the reconstruction of $(F, n)$ and a coloring of $G$. 

Let $(F, n)$ be a palindromic fingerprint. We define $G=(V,E)$, where $V$ is a partition of $\{1,\ldots,n\}$, i.e.,
\begin{enumerate}
    \item $\forall v\in V,$ $v\subseteq \{1,\ldots, n\}$ and $v$ is not empty.
    \item $\underset{v\in V}{\bigcup v} = \{1, \ldots, n\}$
    \item $\forall v_1,v_2\in V, v_1\cap v_2 = \emptyset$
\end{enumerate}

Every node in $V$ presents symbols of $F$ that must be equal. Every edge in $E$ presents two different symbol groups in $V$ that must not be equal. Formally:

\begin{enumerate}
    \item $\forall i,j < n$, if there exists a palindromic restriction $F[i]=F[j]$, then $i,j$ are members of the same vertex in $G$.
    \item $\forall i,j < n$, where $i$ is in $v_i$ and $j$ is in $v_j$, if there exists a palindromic restriction $F[i]\ne F[j]$, then $(v_i,v_j)\in E$. 
\end{enumerate}

The graph has a self-loop, i.e., $(v_i,v_i)\in E$ for some $v_i$ if and only if the palindromic fingerprint is not reconstructible.
\end{definition}

\subsection{Proof Setup}
In this subsection, we refer to a palindromic fingerprint $F$ of length $n$ without declaring it. We include the more general definitions that will be used throughout both parts of the proof in this subsection.

\begin{definition}[Reconstruction degree]\label{d:rec-deg}
    The reconstruction degree $k$ of a fingerprint $F$ of length $n$ is the minimal number of characters required to reconstruct $(F, n)$, and is denoted by $\sigma(F, n)$, or simply $\sigma(F)$.
\end{definition}

\begin{definition}[Optimal fingerprint]\label{d:optf}
    A fingerprint $(F,n)$ is called optimal if $\forall F', n'<n$ $\sigma(F')<\sigma(F)$.
\end{definition}

\begin{definition}[Optimal string]\label{d:opts}
    A string $S$ is called optimal if $PF(S)$ cannot be reconstructed with fewer characters than $S$ has.
\end{definition}

\begin{definition}
The symbol $F[i]$ denotes the character at position $i$ in any string reconstructing $F$. Unlike standard strings, in a fingerprint, it is possible that neither $F[i]=F[j]$ nor $F[i]\ne F[j]$. For example, an empty palindromic fingerprint of length four can be reconstructed both as $abca$ and $abcd$. Therefore, $F[1]$ and $F[4]$ comparison is not defined by $F$. However, $F[1] \ne F[2]$ and $F[1] \ne F[3]$. We use standard string notations on palindromic fingerprints.
\end{definition}

\begin{definition}[Palindromic restrictions]
    Let $P$ be a palindromic descriptor $(i, j)$ inside $F$. The palindromic restrictions of $P$ are:
    \begin{enumerate}
        \item $\forall r$ where $0 \le r \le j-i$, $F[i+r]=F[j-r]$. 
        \item $F[i-1] \ne F[j+1]$.
    \end{enumerate}
\end{definition}

\begin{definition}
    Let $i,j$ be indices in $F$. We say that index $i$ {\em affects} index $j$ if and only $F[i]=F[j]$ or $F[i]\ne F[j]$. If $F[i]=F[j]$ and $F[j]=F[k]$, index $i$ does not necessarily affect index $k$.
\end{definition}

\begin{observation}[Adjacent relations]\label{o:adjr}
    For every index $i\le n+2$, $F[i]$ affects both $F[i+1]$ and $F[i+2]$.
\end{observation}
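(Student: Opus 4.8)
The plan is to translate the relation ``$F[i]$ affects $F[j]$'' into a statement about the unique palindromic center lying between positions $i$ and $j$, and then to use the fact (\cref{d:palf}) that $PF(S)=F$ is a \emph{complete} census of maximal palindromes of length at least two. Concretely, as noted in~\cref{s:pre}, each of the $2n-1$ centers hosts exactly one maximal palindrome; the even (half-integer) center at $i+\tfrac12$ controls the pair $(i,i+1)$, and the odd (integer) center at $i+1$ controls the pair $(i,i+2)$. I would therefore split the observation into these two cases, and in each case argue that either a descriptor of $F$ is centred there, forcing an equality, or none is, forcing an inequality.

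For $F[i]$ and $F[i+1]$, suppose first that some $(i',j')\in F$ is centred at $i+\tfrac12$, i.e.\ $i'+j'=2i+1$ with $i'\le i$ (equivalently $j'\ge i+1$). Applying the palindromic restriction with $r=i-i'$ gives $F[i]=F[(i'+j')-i]=F[i+1]$, so equality is forced. If instead no descriptor of $F$ is centred at $i+\tfrac12$, then since $F$ lists all maximal palindromes of length $\ge 2$, the maximal palindrome at this even center has length $0$; but an empty maximal palindrome at $i+\tfrac12$ is exactly the condition $S[i]\ne S[i+1]$ in every preimage, so the inequality is forced. Either way $F[i]$ affects $F[i+1]$.

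The case of $F[i]$ and $F[i+2]$ is parallel, now using the odd center at $i+1$. If some $(i',j')\in F$ satisfies $i'+j'=2i+2$ with $i'\le i$ and $j'\ge i+2$, the restriction with $r=i-i'$ yields $F[i]=F[(i'+j')-i]=F[i+2]$. Otherwise no descriptor is centred at $i+1$, so the maximal palindrome there has length $1$ (the only odd length below $2$); were $S[i]=S[i+2]$ it would extend to length $\ge 3$, contradicting maximality, so $F[i]\ne F[i+2]$ is forced. Boundary positions---where $i+1$ or $i+2$ exceeds $n$, or where the extension index of a descriptor leaves the range $[1,n]$---are handled by the usual convention that out-of-range comparisons are vacuous, so the claim is asserted only for indices at which $F[i+1]$ and $F[i+2]$ are defined.

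The step I expect to demand the most care is the ``otherwise'' direction in each case: converting the \emph{absence} of a descriptor into a genuine inequality. This is precisely where one must use that $F=PF(S)$ is not merely a collection of positive equalities but a full accounting of every center---each center carries exactly one maximal palindrome, those of length $\ge 2$ correspond bijectively to the descriptors of $F$, and hence a center missing from $F$ must host a trivial palindrome, forcing the adjacent or distance-two inequality. Once this correspondence between centers and descriptors is stated cleanly, both halves of the observation follow immediately from the two center computations above.
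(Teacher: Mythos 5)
Your proof is correct: the paper states this observation without any proof, and your center-based dichotomy---a descriptor centred at $i+\tfrac12$ (resp.\ $i+1$) forces equality, while the absence of one forces the maximal palindrome at that center to be trivial, hence forces inequality in every preimage---is precisely the reasoning the paper relies on implicitly (compare \cref{e:rsg}, where the absence of a palindrome centred at position $10$ yields $S[9]\neq S[11]$). The only discrepancy is cosmetic: the paper's stated range $i\le n+2$ is evidently a typo for $i\le n-2$, and your convention of asserting the claim only where $F[i+1]$ and $F[i+2]$ are defined handles this correctly.
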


\begin{definition}
    Let $P$ be a palindromic descriptor in $F$. We define $start(P)=i,end(P)=j,center(P)=(i+j)/2$.
\end{definition}

\begin{definition}[Crossing palindromes]\label{d:crosspl}
    Let $P_1,P_2$ be palindromic descriptors in $F$. If $start(P_1)<start(P_2) \land end(P_1) < end(P_2)$, then $P_1,P_2$ are {\em crossing palindromes}.

We call a palindromic fingerprint with no crossing palindromes a {\em non-crossing palindromic fingerprint}.
An example can be found at~\cref{e:crosspl}.
\end{definition}

\begin{definition}
    Let $P_1,P_2$ be two crossing palindromes. We say $P_2$ is {\em dominated by} $P_1$ if the center of $P_2$ is part of $P_1$.
    
    Note that it is possible for $ P_1$ to dominate $P_2$ and $P_2$ to dominate $ P_1$ simultaneously.
\end{definition}

\begin{definition}
    Let $P_1,P_2,\ldots,P_n$ be a set of palindromes such that $P_i$ and $P_{i+1}$ are crossing, $i=1,\ldots,n-1$. We call  $F[start(P_1)..end(P_n)]$, the {\em substring defined by $P_1,P_2,\ldots,P_n$ on $F$} and denote it by $P_{1..n}=\bigcup_{i\le n} P_i$.
\end{definition}

\begin{definition}
    Let $P_1,P_2$ be two crossing palindromes. The {\em intersection between} $P_1,P_2$ is the substring $F[start(P_2)..end(P_1)]$ and denoted by $P_1 \cap P_2$.
\end{definition}


\begin{definition}[Standalone reconstruction]\label{d:sr}
    Let $I$ be a palindromic island $(j,\ell)$ in a fingerprint $(F,n)$. A {\em standalone reconstruction} of $I$ is the process of reconstructing $(F,n)$, where only restriction inside $I$ must be addressed in the reconstruction procedure.
\end{definition}
An example can be found at~\cref{e:sr}.

There are multiple ways to represent a palindrome, such as the start-length and center-radius notation. For ease of discourse, we use the start-length notation, i.e., $(i,\ell)$ stands for the palindrome $S[i..i+\ell-1]$.

\section{Proof for Non-crossing Palindromes}\label{s:pncp}

In this section, we will prove the following restricted lemma.
\begin{lemma}\label{l:m-pncp}
    Let $k\ge 3$ be an integer, and let $P_k$ be the {\em Zimin word}, i..e, $P_1=1$, $P_k=P_{k-1}kP_{k-1}$. 

    Considering only palindromic fingerprints that do not describe crossing palindromes, the palindromic fingerprint of $xP_{k-2}y$, where $x,y\notin P_{k-2}$ is the shortest palindromic fingerprint that can not be reconstructed with less than $k$ characters and is unique.
\end{lemma}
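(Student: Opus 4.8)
The plan is to work entirely inside the restriction graph, using the correspondence that a reconstruction of $(F,n)$ with $c$ symbols is exactly a proper $c$-coloring, so that $\sigma(F,n)=\chi(G)$ for the restriction graph $G$. I would first record the recursive shape of the target word: since $P_{k-2}=P_{k-3}(k-2)P_{k-3}$, the string $W_k:=xP_{k-2}y = xP_{k-3}(k-2)P_{k-3}y$ has length $2^{k-2}+1$, uses exactly the $k$ symbols $x,1,2,\ldots,k-2,y$, is a palindrome whose two new border symbols are distinct, and carries a maximal palindrome at the center of every recursive block (each sub-word $P_j$ sitting strictly inside $W_k$ is bordered by two distinct symbols and is hence maximal). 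These nested maximal palindromes are the only source of equalities and of border-inequalities, and $PF(W_k)$ is non-crossing because the blocks are laminar.

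For the lower bound $\sigma(PF(W_k))\ge k$ I would exhibit a $K_k$ inside $G$. Take one representative per symbol: position $1$ for $x$, position $1+2^{j-1}$ (the center of the level-$j$ block) for symbol $j$, and position $2^{k-2}+1$ for $y$. Two such positions are forced to receive different colors for one of two reasons: either they are the two borders of a listed maximal palindrome, which yields the inequality directly from its maximality restriction; or, were they assigned the same symbol, the palindromic equalities would manufacture a palindrome that is not listed in $F$ (as with the edges between the central symbol and an inner symbol in the $k=5$ instance $\{(2,8),(2,4),(6,8)\}$), contradicting $PF(W_k)=F$. I would make the second mechanism precise by an induction on the block level exploiting the self-similarity of $P_{k-2}$; in fact this shows the stronger statement that all occurrences of each symbol merge to a single vertex and $G=K_k$, which I reuse for uniqueness.

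The heart of the lemma is optimality: every non-crossing fingerprint of length $n\le 2^{k-2}$ satisfies $\chi(G)\le k-1$. I would prove the contrapositive by induction on $k$: if a non-crossing $F$ has $\chi(G)\ge k$ then $n\ge 2^{k-2}+1$. Using that a non-crossing fingerprint is laminar, I would locate an enclosing maximal palindrome $P$ whose center $c$ splits its interior into a left half and a right half; reflected through $c$ these halves are standalone sub-instances in the sense of \cref{d:sr}, each again non-crossing. The key step is that each half already forces $k-1$ colors, so by the induction hypothesis each has length at least $2^{k-3}+1$; since the two halves overlap only in the shared center position $c$, the total length is at least $2(2^{k-3}+1)-1=2^{k-2}+1$, exactly the doubling that produces the bound. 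I expect the main obstacle to be precisely this splitting step: one must argue that splitting at the center of the right enclosing palindrome genuinely decreases the chromatic number by one and forces $k-1$ colors on \emph{each} side independently, while controlling every constraint that reaches across $c$. Here I would invoke \cref{o:adjr} and the crossing/non-crossing machinery to guarantee that no constraint couples the two halves except through the shared center, so that the induction hypothesis applies to each side in isolation.

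Finally, for minimality and uniqueness I would track the equality case of the doubling argument. Equality $n=2^{k-2}+1$ forces the enclosing palindrome to span all but the two endpoints, the endpoints to be the two symbols made distinct by its maximality, and each half to be a length-$(2^{k-3}+1)$ non-crossing sub-fingerprint attaining $\chi=k-1$; by the induction hypothesis each half is uniquely the optimal structure $x'P_{k-3}y'$, and the palindromic equalities through the shared center identify the inner symbols and realize the two halves as reflections of one another. Reassembling these forced identifications yields exactly $W_k=xP_{k-2}y$ up to renaming, giving both that $2^{k-2}+1$ is the shortest non-crossing length requiring $k$ symbols and that the fingerprint achieving it is unique.
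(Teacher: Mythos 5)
Your skeleton coincides with the paper's: a clique lower bound for $PF(xP_{k-2}y)$ (the paper's \cref{l:clique}) plus an induction in which the optimal instance doubles by gluing two copies of the $(k-1)$-instance at a shared center (the paper's \cref{l:exps} and \cref{l:mrecursive}). The gap is that the step you yourself flag as ``the main obstacle'' --- that each half forces $k-1$ colors, with no constraint coupling the halves except through $c$ --- is not an obstacle you then overcome; it is essentially the entire content of the paper's non-crossing proof. The paper establishes it in stages: \cref{l:np2} shows that a general (possibly multi-island) fingerprint needs at most two more characters than its heaviest island, so the shortest fingerprint requiring $k$ characters has shape $P$, $xP$ or $xPy$ (\cref{c:nexs}); then \cref{l:odd} (the spanning palindrome has odd length), \cref{l:intpal} ($P=LtL$ with $L$ a palindrome), and \cref{l:uniqedeg} together with \cref{c:gstruc} ($x,y\notin P$; the forms $P$ and $xP$ are not optimal) pin down $S_k=xLtLy$ with $x,t,y\notin L$, whence $xLt,tLy$ are copies of $S_{k-1}$. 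Your proposal assumes the conclusion of all of these when it posits a single enclosing palindrome spanning the instance (laminarity gives a forest with possibly many maximal palindromes, not one enclosing palindrome) and when it asserts that the halves independently need $k-1$ colors; citing \cref{o:adjr} does not bridge this, since that observation only concerns adjacent indices.

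There is also a concrete slip in the doubling arithmetic as written. You split the \emph{interior} of $P$ at its center, but a half of $P$'s interior (even including $c$) needs only $k-2$ colors, not $k-1$: for $k=5$ the instance is $x1213121y$, and the standalone fingerprint of the half $1213$ is $3$-colorable, because $121$ is a prefix and hence automatically maximal, so the only constraints on the last index are inequalities with positions $2$ and $3$. With $k-2$ in place of $k-1$, the induction yields only $n\ge 2^{k-3}+3$ or so, half of what you need. The doubling works only if each half carries its flanking character, i.e., the halves are $[1,c]$ and $[c,n]$ (in the target string, $xLt$ and $tLy$) --- which is what your total-length computation $2(2^{k-3}+1)-1$ silently assumes. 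But proving that these two intervals are standalone sub-instances, each requiring $k-1$ colors and coupled only through $c$, is precisely the recursive-structure lemma the paper proves; so, at its key step, the proposal restates the theorem rather than proving it.
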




\subsection{Induction Hypothesis}
Let $F_k$ be the shortest palindromic fingerprint that cannot be reconstructed with fewer than $k$ characters, and let $S_k$ be a preimage with an alphabet of size $k$, i.e., $|\Sigma_{S_k}|=k$.

The structure of $S_k$ is defined as $xLtLy$, where $L$ is a palindrome, and  $x,t,y$ are letters not appearing in $L$. In addition, $xLt, tLy$ are the recursive instances $S_{k-1}$, for $k\ge 4$. We additionally claim that all strings reconstructing $F_k$ parameterize match to $S_k$.

The base of the induction is $S_3=123, S_4=12324$. Any palindromic fingerprint of length four can be reconstructed using three characters or less.


\begin{observation}
    Any empty palindromic fingerprint can be reconstructed with three characters - a prefix of $(abc)^\infty$. As a result, we can assume that for $k \ge 4$, and that we have a palindrome of length $\ge 2$. Moreover, any palindrome of length $\le 2$ consists of one character and, therefore, can be reconstructed with the string $(abc)^\infty$, where characters are doubled where a palindrome of length 2 occurs. Consequently, we will assume fingerprints to be of length at least four and contain a palindrome of length $\ge 3$.
\end{observation}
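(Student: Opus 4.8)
The plan is to dispatch both assertions by exhibiting explicit three-character reconstructions and then verifying that they realize exactly the required maximal palindromes. Throughout, recall that by Problem~\ref{p:main} a fingerprint records only pairs $(i,j)$ with $i<j$, so an \emph{empty} fingerprint means no position is the center of a maximal palindrome of length $\ge 2$. The single combinatorial fact I would lean on is this: any palindrome of length $\ge 2$ forces an equality of two characters at distance exactly $1$ (if its length is even, namely its two central characters) or at distance exactly $2$ (if its length is odd $\ge 3$, namely the two characters flanking its center). Hence a string with $S[i]\neq S[i+1]$ and $S[i]\neq S[i+2]$ for all $i$ has no palindrome of length $\ge 2$ at all.

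First I would treat the empty case. The prefix of $(abc)^{\infty}$ satisfies $S[i]\neq S[i+1]$ (consecutive symbols run through $a,b,c$) and $S[i]\neq S[i+2]$ (symbols two apart are $a,c$, $b,a$, or $c,b$), so by the fact above its fingerprint is empty while using the three symbols $a,b,c$. This proves the first sentence of the Observation; and since an empty fingerprint is always realized with three symbols, it can never force $k\ge 4$, which justifies assuming that a fingerprint requiring $k\ge 4$ symbols contains at least one palindrome, necessarily of length $\ge 2$.

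Next I would handle fingerprints all of whose palindromes have length $\le 2$. A length-$1$ palindrome is a single symbol and a length-$2$ palindrome $(i,i+1)$ forces $S[i]=S[i+1]$, so each such palindrome indeed consists of one character. A first structural observation is that two length-$2$ palindromes cannot sit at consecutive centers, since $S[i]=S[i+1]$ and $S[i+1]=S[i+2]$ would create the length-$3$ palindrome $S[i\ldots i+2]$, contrary to assumption; thus the prescribed doubled pairs are pairwise non-adjacent. I would then build $S$ greedily from left to right over $\{a,b,c\}$: at the second position of a prescribed doubled pair, copy the previous symbol; at every other position, choose a symbol different from the previous one and from the one two positions back. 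Because at most two values are ever forbidden, three symbols always suffice, and by construction $S[i]=S[i+1]$ holds exactly at the prescribed pairs while $S[i]\neq S[i+2]$ holds everywhere.

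The crux, and the step I expect to be the main obstacle, is verifying that this greedy string produces \emph{precisely} the prescribed maximal palindromes and nothing longer. No odd palindrome of length $\ge 3$ can occur, since it would require a distance-$2$ equality, which the construction forbids. An even palindrome of length $\ge 4$, say $S[j\ldots j+2m+1]$ with $m\ge 1$, has a central doubled pair $(j+m,j+m+1)$; but I must choose the flanking symbols to respect the maximality of that pair, namely $S[j+m-1]\neq S[j+m+2]$, which already defeats its length-$4$ extension, so no longer even palindrome centered there can form. Making the greedy choices consistent with all these maximality constraints simultaneously is the delicate bookkeeping, but the non-adjacency of doubled pairs keeps the constraint windows from overlapping, so a single deterministic left-to-right pass over $\{a,b,c\}$ suffices. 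Concluding, empty fingerprints and those containing only length-$\le 2$ palindromes are three-reconstructible, so any fingerprint forcing $k\ge 4$ symbols must contain a palindrome of length $\ge 3$; combined with the already-noted fact that every length-$4$ fingerprint is three-reconstructible, we may assume henceforth that $n\ge 4$ and that the fingerprint contains a palindrome of length $\ge 3$.
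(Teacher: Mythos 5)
Your proposal is correct and takes essentially the same approach as the paper: the paper justifies this observation with exactly the construction you elaborate, namely a prefix of $(abc)^\infty$ for empty fingerprints and the same cyclic string with characters doubled at the prescribed length-2 palindromes. Your extra verification (non-adjacency of doubled pairs, avoidance of distance-2 equalities, and the maximality condition $S[i-1]\neq S[i+2]$ at each doubled pair) is bookkeeping the paper leaves implicit, but the underlying argument is the same.
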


\subsection{Maximal Islands}\label{ss:maxisl}
We now define the {\em palindromic islands}, a key definition for the rest of the proof.

\begin{definition}[Palindromic island]\label{d:isl}
    Let $i\in \mathbb{N}$ be an index. Then the maximal palindromic island for index i, $I'_i=(j'_i,\ell'_i)$ is the longest maximal palindrome in $F$ such that $j'_i \le i < j'_i + \ell'_i$. If $i$ participates in no palindrome, then $I'_i=(i, 1)$. \\
    The palindromic islands of a string are the set $\mathcal{I}=\{I_1,I_2,\ldots\}$, where $\forall i \exists j$ s.t. $I'_i = I_j$.
\end{definition}
An example can be found at~\cref{e:isl}.

The name {\em islands} is used because of the absence of palindromes between two distinct islands. We will now state and prove this property.

\begin{lemma}\label{l:nocross}
    Let $I_i=(j_i,\ell_i),I_m=(j_m,\ell_m) \in \mathcal{I}$ be two distinct islands, $i<m$. $I_i,I_m$ do not share a palindrome, i.e., there does not exist a palindrome $(i,j)\in F$ such that $i< j_i+\ell_i, j \ge j_m$.
\end{lemma}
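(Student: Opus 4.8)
The plan is to argue by contradiction. Suppose the two distinct islands $I_i=(j_i,\ell_i)$ and $I_m=(j_m,\ell_m)$ with $i<m$ share a palindrome, i.e. there is a maximal palindrome in $F$ with start $a$ and end $b$ satisfying $a<j_i+\ell_i$ and $b\ge j_m$. First I would record the elementary positional facts I rely on throughout: $j_i\le i\le j_i+\ell_i-1$ and $j_m\le m$ by the island definition (\ref{d:isl}), so the interval $[a,b]$ covered by the bridging palindrome meets the span of $I_i$ on its left and reaches at least the start of $I_m$ on its right; in particular $b\ge j_m>i$, so $b>i$, and the bridging palindrome reaches strictly to the right of index $i$. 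Note that the extremal choice in which the bridging palindrome is $I_i$ itself (giving $a=j_i$, $b=j_i+\ell_i-1\ge j_m$) is exactly the case $j_m\le j_i+\ell_i-1$, so ruling out all bridging palindromes simultaneously establishes that distinct islands are disjoint and left-to-right ordered; I therefore do not assume disjointness but obtain it as part of the argument.

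The easy source of contradiction is when the bridging palindrome reaches strictly to the left of $I_i$, i.e. $a<j_i$. Then it is a maximal palindrome covering index $i$ (since $a<j_i\le i\le b$) whose length $b-a+1$ strictly exceeds $\ell_i=(j_i+\ell_i-1)-j_i+1$, because $b\ge j_m$ lies strictly to the right of the right endpoint $j_i+\ell_i-1$ of $I_i$. This contradicts the defining property of $I_i$ as the \emph{longest} maximal palindrome covering $i$. A symmetric argument, in which the bridging palindrome extends strictly past the right endpoint of $I_m$ while covering $m$, contradicts the length of the island $I_m$ at index $m$.

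The substantive case is when the bridging palindrome starts inside $I_i$ ($j_i\le a$) and ends inside or at the start of $I_m$. Here I exploit the reflection structure encoded by the palindromic restrictions $F[\,\cdot\,]=F[\,\cdot\,]$. The bridging palindrome overlaps $I_i$ and has a center $c$ lying strictly to the right of the center $c_i$ of $I_i$, so reflecting its in-$I_i$ portion through $c_i$ forces a period equal to twice the distance between the two centers; iterating this period slides a palindrome of the same length leftward. Either the slid palindrome comes to cover $i$ while still exceeding $\ell_i$ in length — contradicting that $I_i$ is the longest maximal palindrome through $i$ — or it reaches the left boundary of $I_i$ and forces the equality $F[j_i-1]=F[j_i+\ell_i]$, contradicting the negative restriction $F[j_i-1]\neq F[j_i+\ell_i]$ that witnesses the maximality of the palindrome $I_i$. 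Applying the same periodicity argument on the right disposes of the remaining sub-configuration at $I_m$, noting that a genuinely bridging palindrome can never be a non-crossing sub-palindrome of a single island (cf.\ \ref{d:crosspl}).

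The main obstacle is precisely this periodicity/reflection bookkeeping: I must track the centers $c_i$ and $c$, handle the parity of their separation (even- versus odd-length palindromes), and verify in every configuration that the mirrored or period-shifted palindrome stays inside $[1,n]$, actually covers the intended index $i$ (respectively $m$), and remains \emph{maximal}, so that the resulting contradiction with the island definition or with the negative palindromic restriction is legitimate. Establishing that the induced period cannot be ``absorbed'' without either lengthening an island-covering palindrome or violating a maximality inequality is the technical heart of the proof.
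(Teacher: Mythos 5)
There is a genuine gap, and it is conceptual rather than technical: you are trying to prove the lemma as a fact about \emph{arbitrary} palindromic fingerprints, using only the island definition (\cref{d:isl}) plus reflection/periodicity combinatorics. But the lemma is false in that generality; it holds only under the standing hypothesis of this section (\cref{l:m-pncp}) that the fingerprint contains no crossing palindromes, and the paper's proof is a few lines that invoke exactly this hypothesis. The paper's own \cref{e:crosspl} refutes your general version: for $S=1213431211$ the distinct islands are $(1,9)$ and $(9,10)$ (start--end notation), and the maximal palindrome $(9,10)$ satisfies $9 < j_1+\ell_1 = 10$ and $10 \ge j_2 = 9$, so these two islands \emph{do} share a palindrome (so do $(7,9)$ and $(1,9)$ itself). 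Consequently, no amount of periodicity bookkeeping can close your ``substantive case'': any purported contradiction derived from string combinatorics alone would contradict the existence of this string.

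The same example shows concretely where your dichotomy fails. The bridging palindrome $(9,10)$ starts inside $I_1=(1,9)$; its center $9.5$ lies right of the center $5$ of $I_1$, so your induced period is $2(9.5-5)=9$, and the slid copy would occupy positions $(0,1)$, out of range. Neither horn occurs: the slid palindrome never covers the defining index of $I_1$, and no equality $F[j_1-1]=F[j_1+\ell_1]$ is forced (index $j_1-1=0$ does not even exist). In general, when the overlap of the bridging palindrome with $I_i$ is shorter than half the bridging palindrome, reflection yields information only about the overlap and the period does not propagate. The repair is not more periodicity analysis but the remark you relegate to a parenthetical ``(cf.~\cref{d:crosspl})'', promoted to the heart of the proof: first, distinct islands cannot be nested and, by the no-crossing hypothesis, cannot overlap, so $j_i+\ell_i \le j_m$; then a palindrome $(a,b)$ with $a<j_i+\ell_i$ and $b\ge j_m > j_i+\ell_i-1$ either has $a\le j_i$, making it a maximal palindrome strictly containing $I_i$ and contradicting that an island is the \emph{longest} maximal palindrome through its indices, or has $a>j_i$, making it cross $I_i$, which the hypothesis forbids. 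That two-case argument is essentially the paper's entire proof; your case ``$a<j_i$'' is the only piece of your attempt that survives, as it coincides with the containment branch.
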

\begin{proof}
    If $i<m$ then $j_i < j_m$, as otherwise $I_m$ is contained in $I_i$, and is not an island. Moreover, because maximal palindromes are not allowed to cross each other, we know that $j_i+\ell_i \le j_m$. To show that $I_i$ and $I_m$ are not crossing, it is enough to show that there is no palindrome starting at the interval of $I_i$ and ending at the interval of $I_m$. However, such a palindrome would both cross $I_i$ and $I_m$, hence does not exist.
\end{proof}

\begin{corollary}[Maximal intervention]\label{o:maxint}
    Let $I=(j, \ell)$ be a palindromic island where $\ell \ge 2$. The island $I$ shares no palindromes with any other island, but the first and last two characters of the island can still be influenced by other islands. The influence is an inequality relation (\cref{o:adjr}).
\end{corollary}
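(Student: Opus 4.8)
The plan is to split the statement into its three assertions and dispatch them in order, leaning on the two results already in hand. The first assertion — that $I$ shares no palindrome with any other island — is nothing new: it is exactly \cref{l:nocross}, specialized to $I$ against an arbitrary second island, so I would simply invoke that lemma and move on.

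For the second assertion I would use \cref{o:adjr} to pin down precisely where outside influence can enter. Writing $I=(j,\ell)$ in start-length notation, so that $I$ occupies the indices $j,\ldots,j+\ell-1$, the only indices lying immediately outside $I$ within the guaranteed range are $j-1,j-2$ on the left and $j+\ell,j+\ell+1$ on the right, each of which belongs to some other island (possibly a singleton island $(i,1)$), whenever it exists. By \cref{o:adjr}, $F[j-1]$ affects both $F[j]$ and $F[j+1]$ and $F[j-2]$ affects $F[j]$, and symmetrically $F[j+\ell]$ affects $F[j+\ell-1]$ and $F[j+\ell-2]$ while $F[j+\ell+1]$ affects $F[j+\ell-1]$. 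Hence the only island-internal positions a neighbouring island can reach through the forced distance-$\le 2$ relations are the first two, $F[j],F[j+1]$, and the last two, $F[j+\ell-2],F[j+\ell-1]$, which is exactly the boundary set named in the statement; any deeper internal position sits at distance at least $3$ from every external index, so no relation with it is forced.

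The crux is the third assertion, that such cross-island influence is necessarily an inequality and never an equality. I would argue by contradiction: suppose a boundary position $a$ of $I$ satisfied an equality $F[a]=F[b]$ with $b$ inside a distinct island $I'$. An equality is produced only by the first palindromic restriction, i.e. by some maximal palindrome $P\in F$ in which $a$ and $b$ sit at mirror positions, so $P$ must cover both $a$ and $b$. Taking, say, $a<b$ with $I$ preceding $I'$, this $P$ would start at or before $a$ (inside the span of $I$) and end at or after $b$ (inside the span of $I'$), which is precisely the configuration forbidden by \cref{l:nocross}. The contradiction shows no equality can cross the island boundary, so every relation that \cref{o:adjr} forces between a boundary character of $I$ and an index of another island is of inequality type. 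The one thing to be careful about is the bookkeeping of the two orientations ($I$ before or after $I'$) and matching the index ranges to the exact hypothesis of \cref{l:nocross}; the underlying mechanism — equalities propagate only through shared palindromes, and shared palindromes across islands do not exist — is what carries the whole argument.
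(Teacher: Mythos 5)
Your first and third assertions are sound and coincide with what the paper intends: the paper gives no explicit proof of this corollary, treating the ``no shared palindromes'' part as a restatement of \cref{l:nocross}, and your argument that cross-island influence can never be an equality is exactly the right (implicit) reasoning --- an equality constraint is derived only from two indices sitting at mirror positions of a single palindromic descriptor, and such a descriptor would have to straddle two islands, which \cref{l:nocross} forbids.

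The gap is in your second assertion. You conclude that ``any deeper internal position sits at distance at least $3$ from every external index, so no relation with it is forced.'' This inverts \cref{o:adjr}: that observation guarantees that indices at distance at most $2$ always affect each other; it does not say that indices farther apart never do. Palindromic restrictions are not short-range: the maximality constraint of a descriptor $(i,j')$ forces $F[i-1]\neq F[j'+1]$, and these flanking positions are $j'-i+2$ apart, which can be arbitrarily large. Concretely, take $S=x\,aabaa\,y$ with $x,y,a,b$ distinct. The island is $aabaa$ at positions $2..6$, and the descriptor $(2,3)$ (the first $aa$) is maximal, forcing $F[1]\neq F[4]$: the external index $1$ influences position $4$, the middle character of the island, which is neither among its first two nor its last two characters. (A symmetric example violates the form of the claim the paper actually invokes later in \cref{l:np2}, namely that only the four indices adjacent to the island share restrictions with it.) So a distance-based argument cannot work; a correct proof has to classify, for each forced inequality $F[a]\neq F[b]$ with $a$ inside and $b$ outside the island, where the maximal palindrome producing it lies --- inside $I$ (then $b$ must be immediately adjacent to $I$'s boundary), trivial near the boundary (the \cref{o:adjr} cases), or inside a neighbouring island (then $a$ must be $I$'s first or last character) --- and this case analysis yields a weaker, boundary-anchored statement rather than the literal ``only the first and last two characters'' claim you set out to prove.
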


\begin{observation}[Refined maximal intervention]
    Let $I_i=(j_i, \ell_i),I_{i+1}=(j_{i+1}, \ell_{i+1})$ be two neighboring islands. The only explicit relation between index $j_{i+1}+1$ and any index within $I_i$ is the inequality $S_k[j_{i+1}+1] \neq S_k[j_{i+1}-1]$. A similar claim is true for index $j_i+\ell_i-2$.
\end{observation}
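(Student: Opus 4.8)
The plan is to trace every explicit relation back to the palindrome that forces it and then invoke \cref{l:nocross} to eliminate all candidates but one. First I would record that, since the islands cover $\{1,\dots,n\}$ and distinct islands do not overlap, two neighboring islands tile contiguously, so $j_{i+1}=j_i+\ell_i$; in particular $j_{i+1}-1=j_i+\ell_i-1$ is exactly the last position of $I_i$. By \cref{o:adjr} together with the definition of ``affects,'' two positions $p<q$ can be explicitly related only in one of two ways: either some maximal palindrome $(a,b)$ contains both symmetrically (forcing $F[p]=F[q]$), in which case $a\le p$ and $b\ge q$; or $(a,b)$ has them as its flanks, i.e.\ $\{a-1,b+1\}=\{p,q\}$ (forcing $F[p]\ne F[q]$ by maximality). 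I would then enumerate both possibilities for $p\in I_i$ and $q=j_{i+1}+1$.

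For an equality $F[k]=F[j_{i+1}+1]$ with $k\in I_i$, the inducing palindrome $(a,b)$ satisfies $a\le k<j_i+\ell_i$ and $b\ge j_{i+1}+1>j_{i+1}$, so it starts strictly inside $I_i$ and ends beyond the start of $I_{i+1}$; \cref{l:nocross} forbids exactly such a palindrome, so no equality can occur. For an inequality arising from a maximal palindrome $(a,b)$ whose flanks are $k\in I_i$ and $j_{i+1}+1$, the orientation $a-1=j_{i+1}+1$ is impossible because $k\le j_{i+1}-1<j_{i+1}+1$; hence $b+1=j_{i+1}+1$ and $a-1=k$, i.e.\ the palindrome is $(k+1,j_{i+1})$. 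If $k<j_{i+1}-1$ this palindrome starts at $k+1<j_i+\ell_i$ and ends at $j_{i+1}$, again contradicting \cref{l:nocross}. The only surviving case is $k=j_{i+1}-1$, where $(k+1,j_{i+1})=(j_{i+1},j_{i+1})$ is the single character at $j_{i+1}$ and crosses no island.

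It then remains to confirm that this surviving relation is actually present and is the claimed inequality. The maximal palindrome centered at $j_{i+1}$ cannot have radius at least $1$, since it would then span $[j_{i+1}-1,j_{i+1}+1]$, starting at the last position of $I_i$ and ending inside $I_{i+1}$, which \cref{l:nocross} rules out; therefore it has length $1$, and maximality yields precisely $S_k[j_{i+1}-1]\ne S_k[j_{i+1}+1]$ as the unique explicit relation between $j_{i+1}+1$ and $I_i$. The symmetric statement for $j_i+\ell_i-2$ follows by the mirror argument: the maximal palindrome centered at the last position $j_i+\ell_i-1$ of $I_i$ must have radius $0$ by the same crossing obstruction, giving $S_k[j_i+\ell_i-2]\ne S_k[j_i+\ell_i]$ as the only relation linking $j_i+\ell_i-2$ to $I_{i+1}$. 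This also dovetails with \cref{o:maxint}, which already isolates the boundary characters as the only ones that can be influenced across islands.

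I expect the main obstacle to be exhaustiveness: making sure the ``affects'' relation really decomposes into only the two palindrome-induced types above (equality from symmetric containment, inequality from maximal flanks), so that \cref{l:nocross} applies cleanly to each, and that the boundary case $k=j_{i+1}-1$ is correctly singled out as the one admissible trivial palindrome rather than being discarded along with the crossing ones.
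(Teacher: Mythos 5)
Your proof is correct and follows essentially the same route the paper intends: the paper states this as an observation without proof, leaving it as a consequence of \cref{l:nocross} (via \cref{o:maxint} and \cref{o:adjr}), and your case analysis—classifying every explicit relation as either a symmetric equality inside a palindrome or a flanking inequality of a maximal palindrome, killing the former and all non-boundary instances of the latter with \cref{l:nocross}, and confirming that the trivial maximal palindrome centered at the island boundary supplies exactly the surviving inequality—is precisely that implicit argument made explicit. No gaps; only a cosmetic imprecision in saying the equality-inducing palindrome "starts strictly inside $I_i$" (it may start earlier), which does not affect the appeal to \cref{l:nocross}.
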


\begin{definition}
    Let us consider any arbitrary algorithm $\mathcal{A}$ reconstructing palindromic fingerprints optimally, i.e., with the fewest possible characters. Let $I\in \mathcal{I}$ be an island, $I \in F$. $\Sigma_I$ is the set of characters used to reconstruct $I$ stand-alone using $\mathcal{A}$, and $\Sigma_S$ is the set of characters used to reconstruct the whole fingerprint.
\end{definition}

\begin{lemma}\label{lp:np4}\label{l:np2}
Let $m=\max(|\Sigma_{I_1}|,|\Sigma_{I_2}|,\ldots)$, then $|\Sigma_{S_k}| \le m + 2$.
\end{lemma}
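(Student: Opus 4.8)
The plan is to assemble a reconstruction of the whole fingerprint by gluing together standalone reconstructions of the individual islands, scanning the islands from left to right and spending at most two extra characters on the seams. By \cref{l:nocross} the maximal islands are pairwise non-crossing, and together with the singleton islands $(p,1)$ for positions lying in no palindrome they partition $\{1,\ldots,n\}$ into consecutive intervals $I_1,I_2,\ldots$ read along the string. For each island the optimal algorithm $\mathcal{A}$ yields a standalone coloring with a palette of size $|\Sigma_{I_j}|\le m$; because the only internal requirements of an island are equalities and inequalities between its own positions, such a coloring stays valid under any injective relabeling of its palette. I am therefore free to recolor the palette of each island independently, so the task reduces to choosing these relabelings so that the few constraints straddling consecutive intervals are also satisfied.

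The structural crux is that, in the non-crossing regime, the straddling constraints are confined to the outermost characters of each interval. By \cref{l:nocross} distinct islands share no palindrome, so there is no equality relating positions of different intervals, and by the mirror symmetry of a maximal island a palindrome anchored at an island boundary cannot end in its interior without producing a pair of crossing palindromes; hence, as recorded by \cref{o:maxint} and the refined maximal-intervention observation, only the first two characters of each interval are influenced from the left. Concretely, writing $s=start(I_{i+1})$, the first character $F[s]$ is forbidden only the colors of $F[s-1]$ and $F[s-2]$ (the adjacency inequalities of \cref{o:adjr}) together with the single ``jump-over'' color $F[start(I_i)-1]$ coming from the maximality of the preceding island $I_i$, at most three colors in all; the second character $F[s+1]$ is forbidden only the color of $F[s-1]$; and no character of index $\ge s+2$ is constrained from outside.

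With the interaction thus pinned down, the gluing is an elementary greedy count over a palette of $m+2$ characters. I color $I_1$ by its standalone coloring and, having legally colored $I_1,\ldots,I_i$, I color the next interval $I_{i+1}$ by choosing an injective relabeling $\pi$ of its $\le m$ standalone colors into the $m+2$ available characters. Let $u\ne v$ be the standalone colors of the first two characters of $I_{i+1}$. Since the preceding observation assumes a palindrome of length $\ge 3$, some island needs at least two colors, so $m\ge 2$; I may then pick $\pi(u)$ avoiding its $\le 3$ forbidden colors ($m+2-3=m-1\ge 1$ choices), pick $\pi(v)$ avoiding the forbidden color of $F[s-1]$ and the value $\pi(u)$ ($m+2-2=m\ge 2$ choices), and extend $\pi$ injectively to the remaining $\le m-2$ colors among the $\ge m$ unused characters. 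Singleton intervals impose only the single first-character constraint and are handled identically. The resulting global coloring is a valid reconstruction using at most $m+2$ characters, whence $|\Sigma_{S_k}|\le m+2$.

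I expect the main obstacle to lie entirely in the second step, not the count: one must verify that gluing creates no hidden long-range coupling between intervals — no cross-island equalities, and no externally imposed inequality reaching past the first two characters of an interval — so that relabeling an island's palette can never disturb its interior and the seam can always be honored using only the boundary characters. This confinement is exactly what the non-crossing hypothesis buys, via the crossing-mirror argument and \cref{l:nocross,o:maxint}; once it is established, the $m+2$ bound is immediate from the pigeonhole count above.
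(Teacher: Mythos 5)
Your structural crux---that in a non-crossing fingerprint the constraints straddling a seam reach only the first two characters of the next interval, with at most three forbidden colors on the first and one on the second---is false, and this is exactly the step your proof leans on. What you miss are the maximality constraints of \emph{nested palindromes anchored at an island's edge}: every maximal palindrome $P$ with $end(P)=s-1$ derives $F[start(P)-1]\neq F[s]$, and the positions $start(P)-1$ lie arbitrarily deep inside $I_i$; symmetrically, every maximal palindrome $P$ with $start(P)=s$ derives $F[s-1]\neq F[end(P)+1]$, reaching arbitrarily deep into $I_{i+1}$. Such anchored palindromes are fully compatible with the non-crossing hypothesis (your mirror argument only excludes anchored palindromes longer than half the island, and \cref{l:nocross} only forbids palindromes spanning two islands). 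Two concrete failures: (i) for $F=\{(2,3),(2,7),(6,7)\}$, $n=8$ (a preimage is $S=caabbaad$), we have $m=2$, the first two standalone colors of the big island are equal (already contradicting your ``$u\neq v$''), and nothing in your rules constrains $\pi(b)$; choosing $F[1]=1$, $\pi(a)=2$, $\pi(b)=1$ outputs $12211223$, in which $(1,4)$ is a palindrome, so $(2,3)$ is no longer maximal and the fingerprint is not reproduced. (ii) For the paper's own example $\{(2,8),(2,4),(6,8)\}$, $n=9$ (preimage $512131214$), position $9$ is forbidden the colors of positions $8,7,5,1$---four distinct colors, i.e.\ $m+1$, not three; your count would allow $F[9]=F[5]$, producing $512131213$, where $(6,8)$ extends to $(5,9)$. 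Indeed, were your ``at most three'' bound correct, the same greedy with palette $\max(m,4)$ would reconstruct every non-crossing fingerprint, so $\sigma(PF(xP_{k-2}y))\le\max(k-2,4)$, contradicting the lower bound of \cref{t:main}; the deep-reaching anchored constraints are precisely the mechanism that forces new characters.

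The gap is repairable, and the repair is where the real content lies. When gluing $I_{i+1}$ you must (i) bound the forbidden set of its first character not by $3$ but by the set of colors occurring in $I_i$ plus the single jump-over color, i.e.\ by $m+1$, which still leaves a legal choice in a palette of size $m+2$; and (ii) choose the injective relabeling $\pi$ so that the color of $F[s-1]$ is excluded from the \emph{entire image} of $\pi$, which discharges all constraints $F[s-1]\neq F[end(P)+1]$ at once; this is feasible since the image has size at most $m$ and, after excluding that one color, at most $m$ of the remaining $m+1$ colors are forbidden for the pinned value $\pi(u)$. Note that the paper's proof sidesteps the issue by a different gluing order: it colors the \emph{heaviest} island first and assigns the two positions adjacent to it brand-new characters, which trivially satisfy every inequality with the island's interior no matter how deep the anchored palindromes reach, and then argues that all remaining positions can reuse existing characters. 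Your left-to-right order can be made to work, but only with the corrected accounting above; as written, the proof does not go through.
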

\begin{proof}
    We will begin by proving that $|\Sigma_{S_k}| \le m + 4$.

    Let $I_i=(j_i,\ell_i)$ be the island with maximal $m=|\Sigma_{I_i}|$. To simplify the analysis, we initially assume that for all $i$, $\ell_i \ge 2$. We will address the general case later on.
    
    Due to the maximality of $I_i$, each of the other islands can be reconstructed stand-alone using at most $m$ distinct characters. At~\cref{o:maxint} we showed that the only indices sharing palindromic restrictions with $I_i$ are $j_i-1,j_i-2,j_i+\ell_i,j_i+\ell_i+1$. At~\cref{o:adjr} we claimed that the reconstruction of adjacent characters is predetermined. As a result, the reconstruction of $I_{i-1}$ and $I_{i+1}$ can still be performed using only $m$ characters, possibly introducing new characters on the margins.
    
    Consequently, reconstructing $I_{i+1},I_{i-1}$ requires at most $m$ distinct characters at each island, each using at most 2 new characters that weren't used in the reconstruction of $I_i$. 
    
    To reconstruct $I_{i+2}$, two additional characters not present in $\Sigma{I_{i+1}}$ might be required. However, assuming that the reconstruction of $I_{i+1}$ and $I_{i-1}$ introduced four new characters, it means that there are two characters in the reconstruction of $I_i$ that do not occur in the reconstruction of $I_{i+1}$ and on the last two indices of $I_i$, and we can use as the new characters required. It can be seen that even if $I_{i+1}$ introduced less than two new characters, an overall of 4 new characters will not be exceeded.

    We now consider islands of length one or trivial islands. In the first part of the proof, we labeled the islands individually. However, if there are islands of length one, we do not color them. Since the index of a trivial island has an explicit palindromic restriction with no more than four other indices, and because $|\Sigma_{S_k}| \ge 5$, we always have an existing color that the index can use as a valid label.

    Let us prove the tighter restriction $|\Sigma_{S_k}| \le m + 2$.

    Similar to the beginning of the proof, let $I_i$ be the heaviest island, describing $F[j_i..j_i+\ell_i-1]$, or simply $F[I_i]$. We now consider explicitly the two characters adjacent to $F[I_i]$ as $x_1,x_2,x_3,x_4$, where $x_1=F[j_i-2],x_2=F[j_i-1],x_3=F[j_i+\ell_i],x_4=F[j_i+\ell_i+1]$. 
    By the definition of islands, the maximal island $I_i$ describes a maximal palindrome, which implies $x_2\ne x_3$. As a consequence of $x_1,x_4$ having only one inequality relation with $I_i[1]=I_i[\ell_i]$, and since we assumed that both $x_2,x_3$ received new characters, we are left with two options:
\begin{enumerate}
    \item $x_1=x_2$, in which case a new character is not added to the reconstruction.
    \item $x_1\neq x_2$, in which case an existing character from $S_k$ can be chosen because the number of distinct characters in $T_i$ is at least 2.
\end{enumerate}
The same arguments can be made for $x_3,x_4$, proving we need no more than $m+2$ distinct characters for the entire string.
\end{proof}

In the proof of~\cref{l:np2}, we used a greedy labeling approach; such an approach is not yet proven to output the minimal alphabet size. This observation can help the reader understand why we do not specify a coloring algorithm but rather prove properties for all possible algorithms.

\begin{corollary}\label{c:nexs}
    The palindromic fingerprint $F_k$ is either of the form $P$, $xP$, or $xPy$, where $P$ is a descriptor of the maximal palindrome.
\end{corollary}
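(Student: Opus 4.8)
The plan is to run a minimality argument anchored at the heaviest island. Let $I=P$ be an island of $F_k$ maximizing $m:=\sigma(I)$ over its standalone reconstructions, and let $F'$ be the sub-fingerprint obtained by keeping only $P$ together with its immediate flanking positions $x=F[start(P)-1]$ and $y=F[end(P)+1]$, whenever these exist. By \cref{o:maxint} and the refined maximal intervention observation, the positions flanking $P$ are the only ones outside $P$ that carry a restriction linking them to $P$, so $F'$ is one of $P$, $xP$, $Py$, or $xPy$; since reversing a string reverses its fingerprint and leaves $\sigma$ invariant, $Py$ is the mirror image of an $xP$-type fingerprint, and obtaining the three forms $P,\,xP,\,xPy$ suffices. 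First I would pin down $m$: because the restrictions internal to $I$ form a subset of the restrictions of $F_k$, every reconstruction of $F_k$ restricts to a standalone reconstruction of $I$, giving $m\le\sigma(F_k)=k$; combined with \cref{l:np2}, which gives $k=\sigma(F_k)\le m+2$, we obtain the two-sided bound $k-2\le m\le k$. Thus the heaviest palindrome already carries all but at most two of the required colors.

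The core of the argument is to establish $\sigma(F')=k$. The inequality $\sigma(F')\le k$ is immediate, again because $F'$ is a sub-fingerprint of $F_k$. For the reverse inequality I would show $\sigma(F_k)\le\sigma(F')$ by extending an optimal coloring of $F'$ to all of $F_k$ without introducing new colors. Since $F'$ contains $I$ as a sub-fingerprint we have $\sigma(F')\ge m$, so each remaining island needs at most $m\le\sigma(F')$ colors standalone; by \cref{l:nocross} distinct islands share no palindrome and interact only through the boundary inequalities of \cref{o:adjr}, so I can attach them one at a time exactly as in the proof of \cref{l:np2}, reusing the palette of $F'$ and shifting on the margins. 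The decisive point is that the ``$+2$'' budget of \cref{l:np2} has already been spent on the two neighbours $x,y$ that are kept inside $F'$, so the external islands attach within the existing palette. This yields $\sigma(F_k)\le\sigma(F')$, hence $\sigma(F')=k$.

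Finally I would invoke optimality. The fingerprint $F'$ satisfies $\sigma(F')=k=\sigma(F_k)$ and $|F'|\le|F_k|$, so by \cref{d:optf} a strict inequality $|F'|<|F_k|$ would force $\sigma(F')<\sigma(F_k)$, a contradiction; therefore $|F'|=|F_k|$, and since $F'$ is a length-preserving sub-fingerprint of $F_k$ we conclude $F_k=F'\in\{P,\,xP,\,xPy\}$. I expect the main obstacle to be the extension step of the second paragraph: one must verify carefully that the recoloring of \cref{l:np2} can be carried out with the heaviest island's two boundary characters already fixed inside $F'$, so that the residual boundary inequalities of the outer islands (in particular the distance-two inequalities toward the ends of $P$, which the non-crossing hypothesis forces to be strict) are genuinely absorbed and the boundary budget is not counted twice. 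The degenerate cases where $P$ has length at most two, and the bookkeeping of which of $x,y$ actually exist, should be handled separately but are routine under the standing assumptions.
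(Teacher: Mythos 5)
Your proposal is correct and follows essentially the same route as the paper, which states \cref{c:nexs} without a separate proof precisely because it is the truncation-plus-minimality consequence of \cref{l:np2}: the proof of that lemma localizes any new characters to the two positions flanking the heaviest island, so the sub-fingerprint $F'$ consisting of $P$ and its flanks satisfies $\sigma(F')=\sigma(F_k)$, and optimality of $F_k$ forces $F_k=F'$. Your write-up merely makes explicit (restriction gives $\sigma(F')\le k$, the l:np2-style extension with the flanks pre-colored gives $\sigma(F')\ge k$, and mirroring collapses $Py$ to $xP$) what the paper leaves implicit.
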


For the rest of the non-crossing proof, we will be using the symbol $P$ to identify the middle palindrome from~\cref{c:nexs}.

\begin{restatable}{lemma}{oddlemma}\label{l:odd}
The length of the palindrome $P$ is odd.
\end{restatable}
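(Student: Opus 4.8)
The plan is to prove that the middle palindrome $P$ in the optimal non-crossing fingerprint $F_k$ must have odd length, by arguing that an even-length palindrome can always be reconstructed with strictly fewer characters, contradicting optimality. First I would recall from the induction hypothesis that $S_k$ has the form $xLtLy$ where $L$ is itself a palindrome and $t$ is the single center character, so that the full middle palindrome $P = LtL$ (or the relevant maximal palindrome from \cref{c:nexs}) has a distinguished center index. A palindrome has odd length exactly when it has a single central character, and even length when its center lies between two equal characters $c\,c$. So the statement reduces to showing that the optimal structure forces a genuine single-character center rather than a doubled center.

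The key steps, in order, would be as follows. First, suppose for contradiction that $P$ has even length $2m$, so that $F[c] = F[c+1]$ at the center and $P = F[i \ldots j]$ with $j - i + 1 = 2m$. Second, I would exploit the recursive self-similar structure: by the induction hypothesis each half of the optimal string is itself an optimal instance $S_{k-1}$, and the two equal central characters mean the reconstruction problem on each half-palindrome is effectively identical, sharing the center character. Third, the crucial observation is a counting/alphabet-economy argument: because the center is doubled, the two mirror-images $F[i\ldots c]$ and $F[c+1 \ldots j]$ impose exactly the same equality constraints as if the center were single, but the doubling contributes no new distinguishing inequality. I would then show that collapsing the even palindrome to the corresponding odd palindrome (of length $2m-1$) yields a shorter fingerprint with the same reconstruction degree $k$, contradicting that $F_k$ is the \emph{shortest} such fingerprint (\cref{d:optf}). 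Alternatively, and perhaps more cleanly, I would show directly that an even-length palindrome admits a reconstruction saving one character relative to its odd counterpart, so the even case can never be the \emph{optimal} (degree-maximizing) structure for its length.

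The main obstacle I expect is making the ``collapse'' or ``save a character'' step rigorous without accidentally breaking maximality: when one shortens the even palindrome to odd length, one must verify that (a) no new crossing palindromes are introduced, (b) the boundary inequalities $F[i-1]\neq F[j+1]$ encoded by maximality are preserved so the resulting fingerprint is still valid, and (c) the reconstruction degree genuinely does not decrease. The subtlety is that the two central characters being equal is precisely what lets the induction apply symmetrically to both halves; I would need to argue that this symmetry means the even case carries a redundant constraint, quantified via the restriction graph — specifically that the restriction graph of the even palindrome contains, as a subgraph on fewer vertices, the restriction graph of the odd palindrome with the same chromatic number. Establishing that the chromatic number is unchanged under this reduction, using \cref{l:np2} to control how the $+2$ boundary characters interact with the halves, is where the real work lies; everything else is bookkeeping on indices and the maximality inequalities.
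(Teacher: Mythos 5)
Your overall strategy coincides with the paper's proof (\cref{p:odd}): assume $|P|$ is even, collapse the doubled center to obtain a strictly shorter fingerprint whose restriction graph (hence reconstruction degree) is unchanged, and contradict the minimality of $F_k$. The genuine gap is in the specific collapse you chose. You merge only the two equal center characters, shrinking $P$ from length $2m$ to $2m-1$. But \cref{l:odd} sits inside the non-crossing part of the argument (\cref{l:m-pncp}), so the shorter fingerprint you exhibit must itself be non-crossing; a shorter fingerprint of degree $k$ that \emph{contains} crossing palindromes does not contradict the minimality of $F_k$ within the non-crossing class. Now let $P'$ be the longest maximal sub-palindrome of $P$ containing the left center position $c$. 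In a non-crossing fingerprint, $P'$ must end exactly at $c$: it cannot contain both center positions, since it would then either cross its own mirror image (forbidden) or be centered at $P$'s center and therefore fail to be maximal. By symmetry, its mirror copy is the maximal palindrome starting at $c+1$. After you delete position $c+1$, these two copies become the pairs $(c-|P'|+1,\,c)$ and $(c,\,c+|P'|-1)$, and whenever $|P'|\ge 2$ these are crossing palindromes in the sense of \cref{d:crosspl} (one can check both remain maximal in the collapsed string). So for every even $P$ whose center-hugging maximal sub-palindrome is non-trivial, your reduction exits the very class in which optimality is being contradicted. You correctly flagged this as obstacle (a), but your plan does not overcome it: the single-character merge is exactly the collapse for which the obstacle is fatal.

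This is precisely why the paper collapses more than one character: writing $P=L'P'P'R'$, it deletes an entire copy of $P'$, obtaining $P''=L'P'R'$. Then $L'$ shares palindromic restrictions only with the first copy of $P'$ and $R'$ only with the second, so the restriction graph is isomorphic; and the paper separately verifies that no palindrome consisting of $L'$ followed by more than half of $P'$ can exist, so the reduced fingerprint is still non-crossing. Substituting this collapse for your single-character merge repairs your argument and makes it essentially identical to \cref{p:odd}. Two secondary remarks: your appeal to the ``induction hypothesis'' that $S_k=xLtLy$ with both halves equal to $S_{k-1}$ is circular, since those structural facts (\cref{l:intpal}, \cref{l:exps}, \cref{l:mrecursive}) are established after, and partly by means of, \cref{l:odd}; and your claim that the doubled center ``contributes no new distinguishing inequality'' is the conclusion to be proven, not a premise --- it only becomes true once the collapse is known to preserve all derived constraints, which is exactly what the $L'P'P'R'\to L'P'R'$ reduction is designed to guarantee.
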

The proof is at~\cref{p:odd}.

\begin{lemma}\label{l:intpal}
    Let $P=LtR$ and assume the structure of $F_k$ is $xPy$, then $L=R$ are palindromes.
\end{lemma}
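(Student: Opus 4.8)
The plan is to reduce the statement to a single claim and then bootstrap the induction. Since $P=LtR$ is a palindrome of odd length with center $t$ (the length is odd by \cref{l:odd}), the equalities forced by $P$ say precisely that $R$ is the reverse of $L$. Consequently $L=R$ holds if and only if $L$ is itself a palindrome, so it suffices to prove that $L$ is a palindrome; this simultaneously yields $L=R$ and that both halves are palindromes.

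First I would isolate the prefix fingerprint $H=xLt$, i.e.\ the restriction of $(F_k,n)$ to the indices from $1$ up to $center(P)$, and show $\sigma(H)=k-1$. The upper bound is immediate: $H$ is strictly shorter than $F_k$, so by the optimality of $F_k$ it cannot require $k$ characters, giving $\sigma(H)\le k-1$. For the lower bound I would argue by mirror-gluing. Take any reconstruction of $H$ and extend it to all of $P$ by copying colours symmetrically about $center(P)$; this is legal because the equality restrictions imposed by $P$ are exactly this reflection. The extension is consistent because, in a non-crossing fingerprint, no maximal palindrome other than $P$ can contain $center(P)$: such a palindrome would be nested in $P$ yet off-centre, so together with its reflection it would form a pair of crossing palindromes (\cref{d:crosspl}), contradicting that $F_k$ is non-crossing. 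Hence every inequality restriction of $F_k$ is either internal to one half (and thus mirror-symmetric, already satisfied) or involves only the end indices $x$ and $y$, so colouring the single trailing index $y$ introduces at most one new character. Therefore $\sigma(F_k)\le\sigma(H)+1$, and since $\sigma(F_k)=k$ we obtain $\sigma(H)\ge k-1$.

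Next I would promote $H$ to an \emph{optimal} degree-$(k-1)$ fingerprint. Because $P$ spans indices $2,\ldots,n-1$, its center sits at $(n+1)/2$, so $|H|=(n+1)/2$ and $n=2|H|-1$. By definition $|F_k|=IPF(k)$, and the doubling construction that glues two mirror copies of the optimal $S_{k-1}$ at a shared center produces a degree-$k$ fingerprint of length $2\,IPF(k-1)-1$, whence $IPF(k)\le 2\,IPF(k-1)-1$. Combining $n=2|H|-1$, the bound $|H|\ge IPF(k-1)$ coming from $\sigma(H)=k-1$, and this inequality forces $n=2\,IPF(k-1)-1$ and hence $|H|=IPF(k-1)$. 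Thus $H$ is a shortest fingerprint of reconstruction degree $k-1$, so by the induction hypothesis it is the unique $S_{k-1}=x'L't'L'y'$ with $L'$ a palindrome. Matching $H=xLt$ against this form gives $x=x'$, $t=y'$ and $L=L't'L'$; since $L'$ is a palindrome, $L't'L'$ is a palindrome, so $L$ is a palindrome and $L=R$, as required.

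I expect the main obstacle to be the mirror-gluing lower bound of the second step, specifically the claim that only $P$ straddles $center(P)$ and that the symmetric extension violates no inequality, since this is what lets the two halves be reconstructed in lockstep at the cost of a single extra character. A secondary subtlety is pinning $|H|=IPF(k-1)$ exactly: this rests on the doubling construction realising reconstruction degree $k$ with the stated length, and without that lower bound on how efficiently two halves can be recombined, $H$ need not be optimal and the induction hypothesis would not apply.
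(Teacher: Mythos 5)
Your route is genuinely different from the paper's. The paper proves \cref{l:intpal} with a short local argument: if $L$ were not a palindrome then, since in a non-crossing fingerprint the center $t$ lies in no palindrome other than $P$, no restriction $x\neq t$ can be derived (such a restriction would require a maximal palindrome occupying exactly the positions of $L$); hence the center can be deleted, and the fingerprint of $xLRy$ has the same reconstruction degree but smaller length, contradicting the optimality of $F_k$ in the sense of \cref{d:optf}. You instead halve the string: you show the prefix fingerprint $H=xLt$ satisfies $\sigma(H)=k-1$ and $|H|=IPF(k-1)$, then invoke the uniqueness part of the induction hypothesis to identify $H$ with $S_{k-1}=x'L't'L'y'$, which forces $L=L't'L'$ to be a palindrome. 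This is a legitimate strategy and in some ways more informative --- it recovers at once the recursive structure that the paper only assembles later in \cref{l:exps} --- but it is much heavier than what is needed, and it hinges on exactly the two claims you leave open.

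Both obstacles you flag are genuine gaps, though both are fillable. First, the mirror-gluing claim as you state it is false: not every inequality of $F_k$ is internal to one half or involves only $x,y$. A maximal palindrome $(i,c)$ ending exactly at the center $c$ imposes $F[i-1]\neq F[c+1]$, which straddles both halves, and the standalone fingerprint $H$ imposes no maximality constraint on a palindrome ending at its last index. The repair: using the equalities of $(i,c)$ and of $P$, this inequality is equivalent to $F[i-1]\neq F[i+1]$, i.e., to the absence of a non-trivial palindrome centered at $i$; such a palindrome is absent from $F_k$ (else $(i,c)$ would not be maximal), hence absent from $H$, and any reconstruction $T$ with $PF(T)=H$ already satisfies it --- note that reconstruction in this paper means realizing the fingerprint exactly, including the inequality restrictions of trivial palindromes, which is what makes this step legal. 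The symmetric case of palindromes starting exactly at $c$ is handled the same way. Second, the bound $IPF(k)\le 2\,IPF(k-1)-1$ requires proving that the doubled string $x'L't'L'y'L't'L'z$ has reconstruction degree $k$; this is essentially \cref{l:clique} and must be derived inside the induction from the hypothesis on $S_{k-1}$: uniqueness of $F_{k-1}$ together with $\sigma(F_{k-1})=k-1$ implies its restriction graph is a clique of size $k-1$ (two non-adjacent classes could be merged, contradicting parameterized uniqueness), and the recursive structure connects the class of the new final character $z$ to every other class. Without these two pieces your argument does not close; with them it does, at the cost of importing machinery that the paper's one-paragraph deletion argument avoids entirely.
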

\begin{proof}
If $L$ is not a palindrome, then we can either choose $x=t$ or $t \in L$. \footnote{o/w, there cannot be a restriction $x \neq t$.}

As the structure $F_k$ has no crossing palindromes, its middle character $t$ does not participate in any palindromes other than $P$. Consequently, if we have the restriction $x\neq t$, it means $t$ is inside a palindrome bordering with $x$, which cannot be $L$ as it is not a palindrome. In both cases, the fingerprint of $xLRy$ cannot be reconstructed with fewer characters than $xLtRy$, but is shorter, contradicting that $F_k$ optimal (shortest possible).

Otherwise, we can set $x=t$ without violating any palindromic restriction. Again, the fingerprint of $tLtRy$ can be reduced to $tLRy$, and $tLRy$ cannot be reconstructed with less than $k$ characters.
\end{proof}

\begin{lemma}\label{l:uniqedeg}
    The string $S$ reconstructing $F$ is either of the form $P$, $xP$ or $xPy$, where $x,y \notin P$.
\end{lemma}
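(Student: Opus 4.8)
The plan is to combine \cref{c:nexs}, which already pins the shape of $F_k$ to one of $P$, $xP$, or $xPy$, with \cref{l:intpal}, which lets me write $P=LtL$ for a palindrome $L$, and then to upgrade ``shape'' to ``the borders are fresh'' by a crossing argument. By the left--right symmetry of $xLtLy$ it suffices to treat $S=xPy$ and prove $x\notin P$ (the cases $P$ and $xP$ are the same argument with the absent border dropped). The reduction I would record first is that \emph{any} non-trivial maximal palindrome whose left endpoint is position $1$ must cross $P$: it begins at $1<\mathrm{start}(P)$ and is distinct from $P$, so it crosses $P$ as soon as it ends strictly before $\mathrm{end}(P)=n-1$. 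The only escapes are $S[1..n-1]$, which forces $S[1]=S[n-1]=P[1]$, and $S[1..n]$, which forces $x=y$ and is impossible by maximality of $P$. Hence, once the value $P[1]$ is forbidden at position $1$, showing that any $x\in\Sigma_P$ manufactures such a prefix-palindrome contradicts the no-crossing hypothesis of this section (\cref{d:crosspl}).

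Next I would dispatch the anchors. Immediately $x\neq P[1]$ and $x\neq P[2]$, since otherwise $S[1..2]$ or $S[1..3]$ is already a crossing palindrome. For the centre, $x\neq t$: because $L$ is a palindrome, the reverse of the prefix $xLt$ is $tLx$, so $xLt$ is a palindrome exactly when $x=t$, and this palindrome runs from position $1$ to $|L|+2<n-1$, crossing $P$. These two observations are routine; the real work is the remaining characters, those lying deep inside $\Sigma_L$.

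For the deep characters my plan is to lean on the recursive structure rather than on ``$L$ is a palindrome'' alone. I would identify the left half $xLt$ as an instance of $S_{k-1}$ -- a shortest fingerprint of reconstruction degree $k-1$ with central palindrome $L$ and borders $x,t$ -- and apply the induction hypothesis, which states precisely that the borders of such an optimal fingerprint avoid its central palindrome; this gives $x\notin L$, and with $x\neq t$ we obtain $x\notin \Sigma_L\cup\{t\}=\Sigma_P$. Concretely, the hypothesis endows $L$ with the nested shape in which every value of $\Sigma_L$ first occurs at the centre of a palindromic prefix of $L$, so setting $x$ equal to such a value folds that prefix into a palindrome anchored at position $1$, which again crosses $P$. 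The symmetric argument applied to the right half $tLy$ yields $y\notin P$, completing the lemma.

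The hard part, and the step I expect to be the main obstacle, is exactly this last one. Knowing only that $L$ is a palindrome is \emph{not} enough: a palindrome can have non-palindromic prefixes (for instance $L=12021$), in which case a character taken from deep inside $L$ and placed at position $1$ need not create any palindrome, and the crossing argument collapses. The argument must therefore genuinely use the recursive, nested (Zimin-type) structure of $L$ that the induction hypothesis on $xLt\cong S_{k-1}$ supplies. Getting this licensing right -- certifying that the half $xLt$ really is an optimal $S_{k-1}$, so that the hypothesis applies -- is where I would spend the most care.
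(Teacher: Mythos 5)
Your anchor cases are fine: $x\neq P[1]$, $x\neq P[2]$, and $x\neq t$ do follow from the no-crossing hypothesis exactly as you argue, and your reduction of the remaining work to ``no non-trivial maximal palindrome may start at position~$1$'' is correct. The genuine gap is the step you yourself flag as the hard part and then leave undone: certifying that the left half $xLt$ parameterize-matches $S_{k-1}$, so that $L$ inherits the Zimin structure. That certification is not a detail to be filled in later --- it is essentially the entire content of the lemma, and it is exactly what the paper's proof supplies. The paper first disposes of the forms $P$, $xP$, and of $xPy$ with \emph{both} borders in $P$, by observing that a border drawn from $\Sigma_P$ makes the fingerprint shortenable without lowering its reconstruction degree, contradicting optimality (\cref{d:optf}); then, in the remaining case $x\in P$, $y\notin P$, it establishes the counting bound $\sigma(PF(xLt))\ge k-1$, so that by optimality and uniqueness of $F_{k-1}$ (the induction hypothesis) $xLt$ is either strictly longer than $S_{k-1}$ or parameterize-equal to it; the ``longer'' branch contradicts minimality of $|F_k|$ (since $|F_k|=2|xLt|-1$, while a degree-$k$ fingerprint of length $2|S_{k-1}|-1$ exists), and the ``equal'' branch forces $x,t\notin L$, contradicting $x\in P$. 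Your proposal quietly needs all of these ingredients --- the counting bound, the uniqueness clause of the hypothesis, and the comparison that kills the ``longer'' branch --- but provides none of them. Note also that you cannot import the identification from \cref{l:mrecursive}: in the paper that corollary is downstream of \cref{l:uniqedeg}, so invoking it here would be circular; the identification must be built from scratch, as above.

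There is a second, structural problem: once the identification $xLt\cong S_{k-1}$ is in hand, your Zimin-prefix folding argument is redundant. The induction hypothesis already states that the borders of $S_{k-1}$ are distinct and absent from its central palindrome, which gives $x\notin L$ and $x\neq t$ immediately, hence $x\notin P=LtL$ --- you say as much yourself. Conversely, without the identification the crossing argument has nothing to stand on, as your own example $L=12021$ demonstrates: for a general palindrome $L$, a deep character placed at position~$1$ need not create any prefix palindrome. So the mechanism you propose does no work in the only case that matters, and that case is carried entirely by the unproven licensing step. To repair the proof, replace the Zimin-folding discussion with the optimality/uniqueness dichotomy; the crossing arguments can then be kept, but only as a quick treatment of the boundary values $P[1]$, $P[2]$, $t$, which the dichotomy covers anyway.
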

The proof is at~\cref{p:l:uniqedeg}.

\begin{lemma}\label{c:gstruc}
    An optimal string is of the form $xPy$ where $x,y\notin P$, and $P$ is a palindrome.
\end{lemma}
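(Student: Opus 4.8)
The plan is to start from the trichotomy supplied by \cref{l:uniqedeg} — any string reconstructing the optimal fingerprint $F$ has one of the shapes $P$, $xP$, or $xPy$ with $x,y\notin P$ — and to eliminate the first two, leaving only $xPy$. I would use that $F$ is optimal in the sense of \cref{d:optf}, so its preimage $S$ satisfies $|\Sigma_S|=\sigma(F)=k$ by \cref{d:opts}, together with the internal shape $P=LtL$ with $L$ a palindrome and $t$ a single central symbol, coming from \cref{l:odd} and \cref{l:intpal}. Writing $m=\sigma(P\text{ standalone})$ for the island degree of $P$, the upper bound $\sigma(S)\le m+2$ of \cref{l:np2} will be the quantitative backbone of the argument.

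First I would rule out $S=P$. Here $P$ is the whole string, so its two end symbols sit at the string boundary and carry no maximality inequality toward the outside; by \cref{o:maxint} and \cref{o:adjr}, the external inequalities at island margins are exactly the pressure that forces fresh symbols, and a full-string palindrome has none of them. Using the recursive factorization (the level-$(k-1)$ instance furnished by the induction hypothesis, which makes $L$ a palindrome reconstructible with $k-3$ symbols standalone), together with the mirror symmetry of $P=LtL$ and the single central symbol $t$, one obtains $m=k-2$. Hence $S=P$ would give $\sigma(F)=\sigma(PF(P))=m=k-2\neq k$, a contradiction.

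Next I would dispose of $S=xP$. When $P$ is a suffix of the string it is maximal on that side for free, so the one appended symbol $x$ interacts with $P$ only through the single maximality inequality it shares with the adjacent interior symbol; by the colouring accounting of \cref{l:np2} this lifts $m$ by at most one, so $\sigma(xP)\le m+1=k-1\neq k$, again contradicting $\sigma(F)=k$. The only surviving shape is $xPy$, which attains the full bound $k=m+2$ of \cref{l:np2} by contributing one forced symbol at each end, and this is precisely the asserted form. I expect the main obstacle to be the lower-bound half of the counting: the inequality $\sigma\le m+2$ of \cref{l:np2} is the easy direction, whereas here one must show that each boundary symbol is \emph{genuinely} forced — that $m=k-2$ exactly, and that a single boundary cannot already realise $k$ colours — which is where the mirror-symmetry and maximality analysis of $P=LtL$ must be pinned down against the induction hypothesis.
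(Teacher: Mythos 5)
Your reduction to the trichotomy $P$, $xP$, $xPy$ (via \cref{c:nexs} and \cref{l:uniqedeg}) matches the paper, but the way you eliminate the first two forms has a genuine gap. Both eliminations rest on the claim that the standalone degree of the island satisfies $m=k-2$, and you justify this by applying \cref{l:intpal} and the recursive factorization from the induction hypothesis to the hypothetical optimal string of form $P$ (or $xP$). Neither is available there: \cref{l:intpal} is stated and proved only under the hypothesis that $F_k$ has the form $xPy$, and the induction hypothesis describes the structure of $S_{k-1}$, not the interior of an arbitrary palindrome that happens to be optimal at level $k$. In the very case you are trying to rule out, $S=P$ gives $m=\sigma(F)=k$ by definition, and nothing forces $m=k-2$; your "contradiction" presupposes the structure that the lemma is supposed to establish. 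The circularity is not repairable by sharper counting, because no degree-only argument can eliminate these forms: whole-palindrome strings of degree $k$ exist (the Zimin word $P_k$ itself), and so do strings of form $xP$ with degree $k$ (already at $k=3$, the string $4121$ has fingerprint $\{(2,4)\}$ of degree $3$, since the flanking symbol must differ from both characters of $121$). Such strings are excluded only because they are not \emph{shortest}, so any correct elimination must invoke length-minimality, which your proposal never does.

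The missing idea is exactly the folding step the paper uses: writing $P=LtR$ (odd length by \cref{l:odd}), the absence of crossing palindromes means every restriction on the characters of $Lt$ — and, in the $xP$ case, every inequality restriction on $x$ — is already derivable from $Lt$ alone, since $R$ mirrors $L$. Hence if the optimal string were $P$ or $xP$, then $Lt$ or $xLt$ would be a strictly shorter fingerprint with the same reconstruction degree, contradicting \cref{d:optf}. Your closing paragraph correctly senses that the weak point is establishing "$m=k-2$ exactly," but the fix is not to pin that equality down; it is to abandon degree counting for these two cases and argue by length reduction. (A secondary inaccuracy: in the form $xP$ the symbol $x$ does not interact with $P$ through a single maximality inequality — in $x\cdot1213121$ one gets $x\neq 1$, $x\neq 2$, and $x\neq 3$ — although this does not affect your bound $\sigma(xP)\le m+1$, which is fine.)
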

The proof is at~\cref{l:c:gstruc}.

\subsection{Recursive Structure}
We have seen at~\cref{c:gstruc} the structure of an optimal string. And now, we show that the middle palindrome $P$ is recursive.

\begin{lemma}[Explicit structure]\label{l:exps}
The string $S_k$ can be written as $xLtLy$, where $L$ is a palindrome, and $x,t,y$ are distinct characters that satisfy $x,t,y\notin L$.
\end{lemma}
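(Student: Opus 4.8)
The plan is to read off the shape of $S_k$ from the structural lemmas already in hand and then verify the claimed relations one at a time, isolating the single delicate point. By \cref{c:gstruc} an optimal string has the form $S_k=xPy$ with $P$ a palindrome and $x,y\notin P$; by \cref{l:odd} the length of $P$ is odd, so $P$ has a unique central letter, which I name $t$, and I write $P=LtR$ with $|L|=|R|$. \cref{l:intpal} then yields $L=R$ together with the fact that $L$ is itself a palindrome. Substituting gives $S_k=xLtLy$ with $L$ palindromic, which is already the asserted gross form.

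Next I would collect the easy relations. Since $x,y\notin P$ and both the block $L$ and the single letter $t$ sit inside $P$, we immediately get $x,y\notin L$ together with $x\neq t$ and $y\neq t$. For $x\neq y$, note that $P$ is a genuine maximal palindrome that is neither a prefix nor a suffix of $S_k$ (it is flanked by $x$ and $y$), so its maximality is witnessed by the defining inequality $S_k[\mathrm{start}(P)-1]\neq S_k[\mathrm{end}(P)+1]$, i.e.\ $x\neq y$. Thus $x,t,y$ are pairwise distinct and $x,y\notin L$, leaving only $t\notin L$ to be established.

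The claim $t\notin L$ is the main obstacle, and I would prove it inside the running induction on $k$. The key auxiliary fact is that the prefix $xLt$ is itself an optimal (shortest) fingerprint of reconstruction degree $k-1$. Its degree is at most $k-1$ because $xLt$ is strictly shorter than $S_k$ and $F_k$ is, by assumption, the shortest fingerprint whose degree is at least $k$; its degree is at least $k-1$ because a reconstruction of $xLt$ using fewer than $k-1$ letters could be completed to one of $S_k=xLtLy$ using fewer than $k$ letters --- the right copy of $L$ is forced by the palindrome $P$ to mirror the left copy and so introduces no new letters, and by \cref{l:np2} together with the non-crossing island structure (\cref{l:nocross}, \cref{o:maxint}) the positions around $y$ cost at most one additional letter --- contradicting $\sigma(S_k)=k$. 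That $xLt$ is moreover length-minimal for degree $k-1$ follows because any strictly shorter degree-$(k-1)$ fingerprint could be mirror-doubled into a degree-$k$ fingerprint strictly shorter than $F_k$, contradicting optimality of $F_k$.

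With $xLt$ an optimal degree-$(k-1)$ fingerprint, I would apply \cref{c:gstruc} to it: its maximal palindrome is exactly $L$ (the word $xLt$ is not itself a palindrome since $x\neq t$, and $L$ is maximal because it is flanked by $x$ and $t$ with $x\neq t$), so $xLt=x'Ly'$ with $x',y'$ lying outside the middle palindrome. Matching endpoints gives $y'=t\notin L$, which is exactly the missing relation; the base cases $S_3=123$ (with $L=\varepsilon$) and $S_4=12324$ (with $L=2$) anchor the induction, and there $t\notin L$ holds trivially. The delicate bookkeeping is entirely contained in the degree accounting of the previous paragraph --- certifying that the mirror-doubling raises the reconstruction degree by exactly one --- and that is where \cref{l:np2} and the island analysis do the real work.
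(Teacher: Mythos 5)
Your first two paragraphs reproduce the paper's own argument: the paper likewise chains \cref{c:gstruc}, \cref{l:odd} and \cref{l:intpal} to obtain the form $xLtLy$ with $L$ a palindrome, and the relations $x,y\notin L$, $x\neq t$, $y\neq t$, $x\neq y$ follow exactly as you say. The divergence is in the last step, $t\notin L$. The paper handles it with a one-line swap: if $t\in L$, exchanging the symbols at position $1$ and at the center yields $tLxLy$, which (by the symmetry of $L$ and of $P$) violates no restriction of $F_k$, yet is a reconstruction whose first character occurs inside the middle palindrome, contradicting \cref{l:uniqedeg}. You instead try to show that $PF(xLt)$ is an \emph{optimal} degree-$(k-1)$ fingerprint, so that \cref{c:gstruc} (equivalently, uniqueness at level $k-1$) can be applied to it. Your degree computation is sound: $\sigma(PF(xLt))\le k-1$ because $xLt$ is shorter than $F_k$, and your mirror-completion argument for $\sigma(PF(xLt))\ge k-1$ is a correct upper-bound argument in the spirit of \cref{l:np2}.

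The gap is the length-minimality claim. You assert that any strictly shorter degree-$(k-1)$ fingerprint ``could be mirror-doubled into a degree-$k$ fingerprint,'' and you locate the justification in \cref{l:np2} and the island analysis. Those lemmas bound the alphabet only from above --- they can certify that doubling adds \emph{at most} one character, never that it adds \emph{at least} one --- and the assertion is in fact false at this level of generality. For example, $T=abaca$ has reconstruction degree $3$, but its mirror-double $abacacabz$ (with $z$ a fresh symbol) has degree $3$ as well: the string $abacacabc$ realizes the same fingerprint $\{(1,3),(3,5),(2,8),(5,7)\}$ with three characters, because the last position is constrained only against the classes of $a$ and $b$, so no new character is forced there. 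Doubling raises the degree only for the special Zimin-structured string $S_{k-1}$, and that is a lower-bound statement whose proof is the clique induction of \cref{l:clique} --- a lemma the paper proves \emph{using} the recursive structure you are in the middle of establishing, so you could invoke it only by folding the clique property into the induction hypothesis at level $k-1$. As written, ``hence $xLt$ is length-minimal'' is unsupported, \cref{c:gstruc} cannot yet be applied to $xLt$, and the chain to $t\notin L$ breaks. The paper's swap-plus-\cref{l:uniqedeg} device exists precisely to avoid this commitment.
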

\begin{proof}
    All of the essential pieces were proven before, and this proof will just glue them all together.
    First, by~\cref{c:gstruc} we know that $F_k=xPy$, and $x,y \notin P$. Using~\cref{l:odd} we know that there exists a middle character $t$. We proved at~\cref{l:intpal} that $L$ is a palindrome, and therefore the overall structure is $xLtLy$, and both $x,y$ are unique. However, if $t \in L$, then swapping $x$ and $t$ does not violate the fingerprint's restrictions - but contradicts~\cref{l:uniqedeg}.
\end{proof}
    
\begin{corollary}\label{l:mrecursive}
The string $S=S_k=xPy$ can be written as $xLtLy$ where $xLt$ and $tLy$ correspond to $S_{k-1}$.
\end{corollary}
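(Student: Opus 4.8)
The plan is to read the recursive decomposition off \cref{l:exps} and then certify that each half is genuinely an instance of $S_{k-1}$, which reduces to two claims: that $xLt$ and $tLy$ carry the optimal character--palindrome--character shape, and that each has reconstruction degree exactly $k-1$. The shape is immediate. By \cref{l:exps} we may write $S_k=xLtLy$ with $L$ a palindrome and $x,t,y$ distinct symbols avoiding $L$, so the prefix $xLt$ has the form $x'P'y'$ with $P'=L$ a palindrome and $x',y'\notin P'$, matching the optimal structure of \cref{c:gstruc}; the same holds for the suffix $tLy$. For the upper bound on the degree, observe that $S_k$ is a $k$-symbol preimage, so $|\Sigma_{S_k}|=|\Sigma_L|+3=k$ and hence $|\Sigma_L|=k-3$. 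Since $xLt$ uses exactly the symbols $\Sigma_L\cup\{x,t\}$, it is itself a preimage of $PF(xLt)$ on $k-1$ symbols, giving $\sigma(PF(xLt))\le k-1$.

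The heart of the argument is the matching lower bound $\sigma(PF(xLt))\ge k-1$, which I would obtain by contradiction using the optimality of $S_k$. Suppose $PF(xLt)$ admitted a reconstruction $w=x'L't'$ on only $k-2$ symbols. Because $L$ is a palindrome (so by \cref{l:odd} there is a genuine central symbol $t$, and by \cref{l:intpal} the two flanking blocks of $S_k$ are equal palindromes), I would reflect $w$ about its last symbol to build $W=x'L't'L'y'$, making the right copy of $L'$ equal to the left one and appending a single fresh symbol $y'$. This $W$ has length $|S_k|$ and uses at most $(k-2)+1=k-1$ symbols, contradicting $\sigma(PF(S_k))=k$. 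Hence $\sigma(PF(xLt))=k-1$, and the analogous construction applied to a hypothetical $(k-2)$-reconstruction of the suffix gives $\sigma(PF(tLy))=k-1$ as well.

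The step I expect to be the main obstacle is verifying that the reflected extension $W$ is actually a valid preimage, i.e.\ $PF(W)=PF(S_k)$ rather than merely a string of the right length: one must rule out palindromes accidentally created or destroyed at the seam between the two copies of $L'$ and at the appended symbol $y'$. Here I would lean on the non-crossing hypothesis of this section, which guarantees that $P=LtL$ is the only palindrome straddling the center, so that the palindromes of $W$ split into the faithfully reproduced fingerprint of $x'L't'$, its mirror image about $t'$, and the central palindrome, which is made maximal by the fresh choice $x'\neq y'$. Once $\sigma(PF(xLt))=k-1$ is established, $xLt$ is an optimal string of degree $k-1$ in the sense of \cref{d:optf} and \cref{d:opts} carrying the canonical structure; by uniqueness of such a string it must coincide with $S_{k-1}$, and symmetrically for $tLy$. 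Finally, since the two halves overlap exactly in the central symbol $t$, this yields the length recursion $|S_k|=2|S_{k-1}|-1$, which is precisely what feeds the closed form in \cref{t:main}.
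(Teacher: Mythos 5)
Your shape extraction and degree bookkeeping are fine, and in fact go beyond what the paper writes down (the paper states this corollary with no proof, treating it as immediate from \cref{l:exps} together with the inductive hypothesis): the upper bound $\sigma(PF(xLt))\le k-1$ is trivial, and your reflection argument for the lower bound --- lifting a hypothetical $(k-2)$-symbol reconstruction $w=x'L't'$ of $PF(xLt)$ to $W=x'L't'L'y'$ and contradicting $\sigma(F_k)=k$ --- is sound in this non-crossing section, precisely because $x,t,y\notin L$ and the non-crossing hypothesis guarantee that no palindrome of $S_k$ straddles the boundaries of $L$, so $PF(W)$ really does equal $F_k$. That part correctly formalizes what the paper leaves implicit.

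The gap is in your last step. From ``$\sigma(PF(xLt))=k-1$ and $xLt$ has the shape of \cref{c:gstruc}'' you conclude ``by uniqueness of such a string it must coincide with $S_{k-1}$.'' This conflates \cref{d:opts} with \cref{d:optf}: the inductive uniqueness of $S_{k-1}$ holds only among strings of \emph{minimal} length $IPF(k-1)$, and you never show $|xLt|=|S_{k-1}|$ --- minimality of $S_{k-1}$ only gives you $|xLt|\ge|S_{k-1}|$. Shape plus degree plus optimality in the sense of \cref{d:opts} do not pin down the string: the string $123324$ has the form $xLy$ with $L=2332$ a palindrome and $x=1,y=4\notin L$, uses exactly $4$ symbols, and its restriction graph is $K_4$ (classes $\{1\},\{2,5\},\{3,4\},\{6\}$ with all six inequality constraints present), so it is an optimal string of degree $4$; yet it is longer than, and different from, $S_4=12324$. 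To close the gap you need the reverse inequality $|xLt|\le|S_{k-1}|$, which requires using the length-minimality of $S_k$ in the other direction: apply the doubling construction to $S_{k-1}$ itself (write $S_{k-1}=x'My'$ by induction and form $T=x'McMd$ with fresh symbols $c,d$), prove $\sigma(PF(T))\ge k$ (this is where the clique structure of \cref{l:clique} enters), and conclude $|S_k|\le|T|=2|S_{k-1}|-1$, hence $|xLt|\le|S_{k-1}|$. Only after both inequalities does the inductive uniqueness apply, giving $xLt\cong S_{k-1}$ and the recursion $|S_k|=2|S_{k-1}|-1$; without this step the corollary, and the closed form in \cref{t:main}, do not follow from your argument.
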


We have thus shown a clear recursive structure that matches the original theorem, and the proof is completed for palindromic fingerprints with no crossing palindromes.

\section{Proof for General Fingerprints}

In this section, we will not forbid crossing palindromes from participating in the fingerprint. 

\begin{lemma}\label{l:m-pg}
    Let $k\ge 3$ be an integer, and let $P_k$ be the {\em Zimin word}, i..e, $P_1=1$, $P_k=P_{k-1}kP_{k-1}$. 
    
    The palindromic fingerprint of $xP_{k-2}y$, where $x,y\notin P_{k-2}$ is the shortest palindromic fingerprint that can not be reconstructed with less than $k$ characters and is unique.
\end{lemma}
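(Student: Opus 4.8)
The plan is to lift the non-crossing result (\cref{l:m-pncp}) to the general case by showing that crossing palindromes can never help us achieve a smaller reconstruction degree for a given length, and therefore the extremal fingerprint is still the non-crossing one built from the Zimin word $P_{k-2}$. Concretely, I would fix an optimal fingerprint $F_k$ of reconstruction degree $k$ and minimal length, and argue that if $F_k$ contains any crossing palindromes, we can transform it into a non-crossing fingerprint $F_k'$ of length at most $|F_k|$ with $\sigma(F_k') \ge k$, contradicting minimality unless $F_k$ was already non-crossing. The whole argument reduces to analyzing the restriction graph: a reconstruction is exactly a proper coloring, so $\sigma(F,n)$ is the chromatic number of the restriction graph, and I want to show crossing palindromes add \emph{more} equality constraints (merging vertices) without adding enough new inequality edges to raise the chromatic number per unit of length.

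The key steps, in order, are as follows. First, I would recall that by the definition of the restriction graph, $\sigma(F,n) = \chi(G_{F,n})$, so the problem becomes: among all length-$n$ fingerprints, which minimizes $n$ subject to $\chi \ge k$. Second, I would establish a monotonicity/locality lemma analogous to the island machinery of \cref{ss:maxisl}: even with crossing palindromes, the maximal islands partition the string, and \cref{l:nocross} together with \cref{l:np2} already give $|\Sigma_S| \le m+2$ where $m$ is the heaviest island's standalone degree — crucially, these two lemmas were stated and proved \emph{without} assuming the non-crossing hypothesis, so they transfer verbatim. Third, and this is the heart of it, I would show that a single island containing crossing palindromes cannot be more ``color-efficient'' than the nested (non-crossing) Zimin structure: inside one island, crossing palindromes impose overlapping equality constraints via the intersection substrings $P_1\cap P_2$, and I would argue that each such constraint either follows already from the nested palindromes or forces additional equalities that \emph{shorten} the achievable length for a fixed clique size in the restriction graph. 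Equivalently, to realize a clique of size $k$ (the obstruction forcing $k$ colors, as in the $IPF(5)=9$ example), the non-crossing nested construction of $xP_{k-2}y$ of length $2^{k-2}+1$ is length-optimal, and crossings only replicate structure already present.

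I would then close the induction exactly as in the non-crossing case: the base cases $S_3=123$, $S_4=12324$ are unchanged, and the recursive step reuses \cref{l:exps} and \cref{l:mrecursive} to peel off $x,y$ and the middle character $t$, reducing $S_k = xLtLy$ to two copies of $S_{k-1}$. Since these structural lemmas were derived from the island decomposition and the $m+2$ bound rather than from the non-crossing assumption itself, the recursion and the resulting length count $|P_{k-2}| = 2^{k-2}-1$, giving $|xP_{k-2}y| = 2^{k-2}+1$, carry over unchanged, and uniqueness follows from the same parameterization argument.

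The main obstacle I anticipate is step three: ruling out a clever crossing configuration that packs a high-chromatic-number restriction graph into fewer than $2^{k-2}+1$ positions. The danger is that a crossing palindrome could, in principle, identify far-apart positions (forcing equalities across the string) in a way that creates a dense inequality graph more cheaply than nesting does. I would neutralize this by a counting argument on the restriction graph: every equality merges vertices (reducing $|V|$) while the inequality edges needed to force $\chi \ge k$ still require at least $k$ mutually-adjacent color classes, and I would bound below the number of distinct positions any such clique must touch, showing it is minimized by the binary-recursive nesting. Making this counting tight — rather than merely asserting crossings ``don't help'' — is where the real work lies, and I would likely need a careful case analysis of how two crossing palindromes' intersection constrains the shared middle region, possibly invoking \cref{o:adjr} and the refined maximal intervention observation to control the boundary interactions between adjacent islands.
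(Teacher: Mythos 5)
Your high-level plan---reduce the general case to the non-crossing case (\cref{l:m-pncp}) by arguing that crossing palindromes can be eliminated without decreasing the reconstruction degree or increasing the length---is indeed the paper's strategy. But your proposal has a genuine gap at precisely the point you yourself flag as ``where the real work lies'': step three. You assert that a crossing constraint inside an island ``either follows already from the nested palindromes or forces additional equalities that shorten the achievable length,'' and you propose a counting argument on the restriction graph, but you never supply that argument. Note also that forcing $\chi \ge k$ does not require a $k$-clique of positions, and lower-bounding the number of positions touched by $k$ pairwise-adjacent color classes (and proving nesting minimizes it) is essentially a restatement of the whole theorem, not a step one can defer. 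The paper closes this gap constructively with an incremental merging chain: \cref{l:contained} (a contained crossing palindrome adds nothing), the key \cref{l:add-one} (for two properly crossing palindromes, a domination analysis of the palindromes spanning the intersection $K = P_1 \cap P_2$ shows a new character can be introduced only at position $end(P_1)+1$, and only if $K$ is itself a palindrome of degree $m-1$, which contradicts $\sigma(P_2) \le \sigma(P_1) = m$), then \cref{l:add-mid} and \cref{c:add-two} (merging any chain of crossing palindromes costs at most two characters beyond the heaviest member), and finally \cref{l:red-fin} (if that $+2$ is attained, a fingerprint of length $|P_2|+2$ with the same degree exists). Together these let the optimal fingerprint be assumed non-crossing, at which point \cref{l:m-pncp} finishes the proof.

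A secondary inaccuracy: you claim \cref{l:nocross} and \cref{l:np2} ``transfer verbatim'' because they were proved without the non-crossing hypothesis. They were not: Section 4 operates entirely under that hypothesis (the proof of \cref{l:nocross} explicitly uses the fact that maximal palindromes do not cross), and in the general setting the paper redefines an island as a chain of crossing palindromes, so an island need not be a single palindrome and \cref{l:np2} does not apply to it as stated. The paper patches this with \cref{l:isl}, whose proof handles the non-palindrome island case by a separate argument (obtaining $+1$ there rather than $+2$). So your outer reduction skeleton matches the paper's, but both the island lemma and, more seriously, the core ``crossings don't help'' step require the dedicated arguments that the paper provides and your proposal leaves open.
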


\subsection{String Ordering}

Before beginning the proof, we will explain how a palindromic fingerprint can be consistently split into substrings.

\begin{definition}[Palindromic representative]
    Let $i$ be an index within $S$. The {\em representative palindrome} for index $i$ is the palindrome $P$ that:
    \begin{enumerate}
        \item Contains index $i$.
        \item Is not contained within any other palindrome.
        \item Has the minimal ending index. 
    \end{enumerate}
\end{definition}

\begin{lemma}
    All palindromes are either contained in or are palindromic representatives, and there are no two possible representatives for a single index.
\end{lemma}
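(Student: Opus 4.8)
The plan is to prove the statement in two parts, corresponding to the two clauses: first that every palindrome is either contained in another palindrome or is itself a representative, and second that the representative of any fixed index is unique. The first clause is essentially immediate from the definition. Given any palindrome $P$, consider the collection of palindromes that contain some fixed index $i \in [start(P), end(P)]$ and are not contained in any other palindrome; this collection is nonempty (it contains at least one maximal-by-containment palindrome through $i$), so by taking the one with minimal ending index we obtain a well-defined representative. The real content is that this choice is unambiguous, i.e.\ two distinct palindromes cannot both be representatives for the same index $i$.

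For the uniqueness clause, suppose toward a contradiction that $P_1$ and $P_2$ are both representatives for index $i$, with $P_1 \neq P_2$. Both contain $i$, neither is contained in any other palindrome, and both have the (same) minimal ending index $j^\ast$ among palindromes through $i$ that are not contained in another palindrome. Since they share the ending index $j^\ast$ and are maximal palindromes, they must differ in their starting index; say $start(P_1) < start(P_2)$. The key observation is that two maximal palindromes cannot share the same ending index $j^\ast$: a maximal palindrome is determined by its center, and two palindromes ending at the same index with different starting indices would have different centers but the same right endpoint, which forces a nontrivial structural coincidence (by the standard periodicity/overlap lemma for palindromes sharing an endpoint). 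I would invoke the already-established fact that each of the $2n-1$ centers carries at most one maximal palindrome, together with Lemma~\ref{l:nocross} (non-crossing of maximal palindromes), to rule out two distinct maximal palindromes sharing an endpoint while both being inclusion-maximal through $i$.

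Concretely, I would argue as follows. If $P_1$ and $P_2$ both end at $j^\ast$ and both contain $i$, then without loss of generality $start(P_1) < start(P_2)$, so $P_2 \subseteq P_1$ as index intervals (same right endpoint, larger left endpoint for $P_2$ would instead give $P_1 \supseteq P_2$). But containment of index intervals for two maximal palindromes with a common endpoint means $P_2$ is contained in $P_1$, contradicting the requirement in the definition that a representative is \emph{not} contained within any other palindrome. Thus no two distinct palindromes sharing the minimal ending index can both satisfy the ``not contained'' condition, and the representative is unique.

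The main obstacle I anticipate is handling the interaction between the ``minimal ending index'' criterion and the ``not contained in any other palindrome'' criterion cleanly, since a representative is defined by a conjunction of three conditions and one must verify there is no index for which the minimizer is non-unique. The delicate case is when several inclusion-maximal palindromes pass through $i$ with different ending indices; here I must confirm that the one achieving the minimum ending index is genuinely unique, which again reduces to the fact that two maximal palindromes with the same right endpoint force a containment relation. I expect the cleanest path is to first prove the auxiliary claim ``distinct maximal palindromes have distinct ending indices unless one contains the other,'' and then derive uniqueness of the representative as a short corollary, rather than arguing everything inline.
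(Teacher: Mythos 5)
Your uniqueness argument (the final, concrete paragraph) is correct and is exactly the paper's proof: two distinct representatives for the same index would both attain the minimal ending index $j^\ast$, hence as intervals one contains the other, contradicting condition (2) of the definition. However, the ``key observation'' in your second paragraph --- that two maximal palindromes cannot share the same ending index --- is false as stated: in $S=aabaa$ the maximal palindromes $(1,5)$ and $(4,5)$ both end at index $5$. What is true, and what your concrete argument actually uses, is that two distinct \emph{inclusion-maximal} palindromes cannot share an endpoint; this is pure interval containment and needs no periodicity lemma, no counting of the $2n-1$ centers, and no appeal to Lemma~\ref{l:nocross} (which is established only in the non-crossing setting, whereas the present lemma must hold for general fingerprints in which maximal palindromes may cross). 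The auxiliary machinery you propose to invoke should simply be dropped; the interval argument is self-sufficient.

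The genuine gap is in the first clause. You prove that \emph{every index has} a representative (the set of inclusion-maximal palindromes through $i$ is nonempty; take the one with minimal ending index), but the clause to be proven is that every palindrome not contained in another palindrome \emph{is} a representative of some index. Your construction yields a representative of $i$ that need not be the given palindrome $P$: for $S=ababab$, take the inclusion-maximal palindrome $P=(2,6)$ and $i=start(P)=2$; the crossing inclusion-maximal palindrome $(1,5)$ also contains index $2$ and ends earlier, so the representative of index $2$ is $(1,5)$, not $P$. The missing (one-line) idea, which is how the paper argues, is to choose $i=end(P)$: every palindrome containing $end(P)$ ends at or after $end(P)$, so the inclusion-maximal palindrome $P$ attains the minimal ending index among palindromes containing $end(P)$ and is therefore the representative of its own right endpoint.
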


\begin{proof}
    Let $P_1,P_2,\ldots,P_m$ be all palindromes that are not contained within any other palindromes, such that $start(P_i) < start(P_{i+1})$. For any palindrome $P=F[i..j]$, the palindrome $P$ contains index $j$, and has a minimal index $j$ of all palindromes that contain index $j$. According to our definition, $P$ is not contained in any other palindrome, hence a valid representative for index $j$. However, if there are two possible representatives for index $j$, it means that both end at index $j$ and are not of the same length, and therefore, one contains the other, contradicting the way $P$ was defined.
\end{proof}

\begin{observation}
    Any fingerprint $F$ can be split into a series of palindromic representatives.
\end{observation}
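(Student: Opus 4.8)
The final statement to prove is the observation:

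\begin{quote}
Any fingerprint $F$ can be split into a series of palindromic representatives.
\end{quote}

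The plan is to show that the palindromic representatives, as defined, induce a partition (or more precisely a consecutive covering) of the index set $\{1,\ldots,n\}$ into maximal disjoint blocks, so that $F$ decomposes as a concatenation of the substrings defined by these representatives. First I would invoke the preceding lemma, which guarantees that every index $i$ has exactly one palindromic representative (every palindrome is either contained in some other palindrome or is itself a representative, and no index has two representatives). This gives a well-defined map sending each index $i$ to its representative $R(i)$, so that each index belongs to the interval $[\,\mathrm{start}(R(i)),\,\mathrm{end}(R(i))\,]$.

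Next I would argue that the distinct representatives arising this way are \emph{non-crossing and nested-free at the top level}, meaning that two distinct representatives occupy disjoint index intervals. Suppose $P_1 = F[i_1..j_1]$ and $P_2 = F[i_2..j_2]$ are two distinct representatives with $i_1 \le i_2$. They cannot be equal, and since each is not contained in any other palindrome, neither contains the other; so the only remaining possibilities are that their intervals are disjoint or that they properly cross. But crossing maximal palindromes are forbidden in a valid fingerprint (this is exactly the non-crossing property used in \cref{l:nocross} for islands), so the intervals must be disjoint. Ordering the representatives by starting index as $P_1,P_2,\ldots,P_m$ with $\mathrm{start}(P_1) < \cdots < \mathrm{start}(P_m)$, disjointness forces $\mathrm{end}(P_t) < \mathrm{start}(P_{t+1})$ for each $t$.

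Finally I would verify that the union of the representative intervals, together with the singleton indices lying between consecutive representatives (indices $i$ whose representative is the trivial palindrome $(i,1)$), covers all of $\{1,\ldots,n\}$. This holds because every index has a representative by the earlier lemma, and an index participating in no nontrivial palindrome is its own trivial representative. Hence reading $F$ from left to right partitions it into the consecutive blocks $F[\mathrm{start}(P_1)..\mathrm{end}(P_1)], F[\mathrm{start}(P_2)..\mathrm{end}(P_2)], \ldots$, interspersed with trivial single-character blocks, which is precisely the claimed splitting into a series of palindromic representatives.

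The main obstacle I expect is making the disjointness argument fully airtight: one must rule out not only proper crossing (handled by the no-crossing property of maximal palindromes) but also the degenerate case where two representatives share an endpoint, which the preceding uniqueness lemma already excludes by showing that two representatives ending at the same index would force one to contain the other. Once disjointness and total coverage are established, the decomposition of $F$ into the ordered list of representative blocks follows immediately, so the bulk of the work is simply assembling the already-proven structural facts rather than any new combinatorial estimate.
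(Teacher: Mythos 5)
There is a genuine error in the middle of your argument. You claim that ``crossing maximal palindromes are forbidden in a valid fingerprint'' and cite \cref{l:nocross} for this, but that is false: the observation you are proving appears in the section on \emph{general} fingerprints, whose entire purpose is to handle crossing palindromes, and the paper's own \cref{e:crosspl} exhibits a valid string $S=1213431211$ whose fingerprint contains the crossing maximal palindromes $(1,9)$ and $(9,10)$. \cref{l:nocross} does not help you here, because it concerns islands and is proved under the standing assumption of Section~4 that the fingerprint contains no crossing palindromes; it says nothing about arbitrary maximal palindromes in a general fingerprint. As a consequence, your disjointness claim fails: in the example above, both $(1,9)$ and $(9,10)$ are palindromes not contained in any other palindrome, both are representatives (of index $9$ and index $10$ respectively), and their intervals overlap at index $9$. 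So the representatives do \emph{not} in general occupy pairwise disjoint intervals, and a proof built on forcing $\mathrm{end}(P_t) < \mathrm{start}(P_{t+1})$ cannot be repaired.

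The correct content of the observation is weaker and needs only your first and third steps: by the preceding uniqueness lemma, every index $i$ has exactly one representative $R(i)$, so the distinct representatives, ordered by starting index, form a series of palindromes that together cover all of $F$ (with trivial one-index representatives filling the gaps). Consecutive representatives in this series may cross one another --- this is not a defect but precisely the point, since the paper immediately afterwards generalizes islands to maximal chains of mutually crossing (non-foreign) representatives, and the rest of Section~5 is devoted to controlling how much such crossings can raise the reconstruction degree. Dropping your second step entirely, and reading ``split'' as ``each index is assigned to its unique representative, yielding an ordered, possibly overlapping cover,'' gives the argument the paper intends.
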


At~\cref{d:isl}, we have defined palindromic islands. The concept can be extended to {\em palindromic representatives}. We generalize an {\em island} to be a series of non-foreign palindromes. Every two palindromic islands are foreign.

\begin{lemma}\label{l:isl}
    Let $F$ be a palindromic fingerprint split into palindromic islands $I_1I_2\ldots I_m$. The minimal number of characters required to reconstruct $F$ is either:
    \begin{enumerate}
        \item $\max(\sigma(I_i))+2$ if $I_i$ is a palindrome, or
        \item $\max(\sigma(I_i))+1$ if $I_i$ is not a palindrome.
    \end{enumerate}
\end{lemma}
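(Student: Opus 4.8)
The plan is to reduce the general fingerprint to its islands and argue that the two cases in the statement follow from a margin-counting argument analogous to Lemma~\ref{l:np2}, now refined by whether the heaviest island is itself a palindrome. First I would establish the lower bounds. By Lemma~\ref{l:nocross} (extended to representatives), distinct islands $I_i,I_j$ share no palindrome, so a standalone reconstruction of the heaviest island $I_i$ already forces $\sigma(I_i)=\max_j\sigma(I_j)$ colors just inside that island; hence $\sigma(F)\ge\max_j\sigma(I_j)$ trivially. To lift this to $+1$ or $+2$, I would use \emph{Maximal intervention} (\cref{o:maxint}): the only cross-island restrictions touching $I_i$ are inequalities on the first two and last two characters of $I_i$, namely with $F[j_i-1],F[j_i-2]$ and $F[j_i+\ell_i],F[j_i+\ell_i+1]$. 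When $I_i$ is a palindrome, its two endpoints $I_i[1]=I_i[\ell_i]$ are forced equal, and the outer neighbors on both sides must differ from this common endpoint \emph{and} from each other through the chain of adjacent-relation inequalities (\cref{o:adjr}); pushing this along the island chain forces two extra colors beyond $\max_j\sigma(I_j)$, giving the $+2$ bound. When $I_i$ is not a palindrome its endpoints are free to differ, absorbing one of the boundary inequalities internally, so only one extra color is forced, giving the $+1$ bound.

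Next I would prove the matching upper bounds, which is essentially the constructive half of Lemma~\ref{l:np2} re-run island by island. Reconstruct each island standalone with an optimal palette, then glue adjacent islands along the chain $I_1I_2\cdots I_m$. As shown in the proof of Lemma~\ref{l:np2}, the adjacent-relation inequalities (\cref{o:adjr}) on the four boundary indices can each be satisfied by reusing a color already present but absent from the two abutting positions, because the heaviest island carries $\ge 2$ distinct characters and leaves spare colors at its margins. The bookkeeping is to verify that the four boundary inequalities per junction never force more than $2$ fresh colors overall (palindromic case) or more than $1$ (non-palindromic case), and that trivial/length-one islands, which carry explicit restrictions against at most four neighbors, can always be colored with an existing symbol once $\sigma(F)\ge 5$, exactly as handled at the end of Lemma~\ref{l:np2}.

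The main obstacle I anticipate is the case split at the junction between the heaviest island and its neighbors, and in particular distinguishing the palindromic from the non-palindromic subcase cleanly. In the palindromic case one must argue that the forced equality $I_i[1]=I_i[\ell_i]$ genuinely costs a second color and is not avoidable by a clever global recoloring; this is where I would invoke the optimality/shortness of $F_k$ and the refined maximal-intervention observation to pin down that the two outermost characters on each flank are mutually constrained. A secondary subtlety is ensuring the claim is stated for the correct island: it is the \emph{heaviest} island $I_i$ with $\sigma(I_i)=\max_j\sigma(I_j)$ whose palindromic status dictates the $+1$ versus $+2$ outcome, so I would make explicit that ties among maximal islands are broken in favor of a palindromic island when one exists, since a palindromic heaviest island is the obstruction that actually forces the extra color. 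Once this local analysis at the heaviest junction is settled, propagating it along the whole chain is routine and mirrors the argument already carried out for the non-crossing case.
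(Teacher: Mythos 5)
There is a genuine gap: your proposal never supplies the one idea that actually separates the two cases, which is the heart of the paper's proof. The paper handles the palindromic case by citing \cref{l:np2} directly (as you also do), and then for a non-palindromic heaviest island $I_i$ it argues: $I_i$ has length at least four and contains at least two distinct characters; the positions $start(I_i)-2$ and $end(I_i)+2$ each have a single inequality with the island's first/last character and can therefore reuse a character from inside the island; and, crucially, if a fresh character $\$$ must be placed at $start(I_i)-1$, then the \emph{same} character $\$$ can be placed at $end(I_i)+1$, because no maximal palindrome spans a non-palindromic island, so no constraint of the form $F[start(I_i)-1]\neq F[end(I_i)+1]$ is ever derived from $F$. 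That reuse of one fresh symbol on both flanks is exactly what turns $+2$ into $+1$. In your proposal this step is deferred to ``bookkeeping,'' and the only gesture toward it --- ``its endpoints are free to differ, absorbing one of the boundary inequalities internally'' --- misidentifies the mechanism: what matters is not whether the island's own endpoints differ, but that the two \emph{outside} neighbors of the island are unconstrained against each other precisely when the island is not a single maximal palindrome.

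A second problem is your first paragraph: you read the lemma as an exact equality and try to prove matching lower bounds (``forces two extra colors''). The lemma, as proved in the paper and as used later (e.g.\ the corollary stating ``plus, at most, two''), is an upper bound, and the claimed lower bounds are false in general: a fingerprint consisting of a single palindromic island and nothing else, say $F=\{(1,3)\}$ with $n=3$, satisfies $\sigma(F)=\sigma(I_1)=2$, not $\sigma(I_1)+2$. So the ``pushing along the island chain'' argument cannot work and should be dropped. Your tie-breaking remark (preferring a palindromic island among those achieving the maximum) is a fair reading of how the case split should be interpreted, but it does not repair the missing core argument for the $+1$ case.
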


The proof is at~\cref{p:l:isl}.

\begin{observation}
    If the maximal degree island $I_i$ is not a palindrome, then the shortest palindromic fingerprint with reconstruction degree $\sigma(F)$ is of length at most $|I_i|+1$.
\end{observation}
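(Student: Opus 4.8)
The plan is to exhibit a single fingerprint of length at most $|I_i|+1$ whose reconstruction degree already equals $\sigma(F)$, and then conclude by minimality. Since $I_i$ is the maximal-degree island and is not a palindrome, \cref{l:isl} gives $\sigma(F)=\sigma(I_i)+1$, so it suffices to build a fingerprint $F'$ with $|F'|\le |I_i|+1$ and $\sigma(F')=\sigma(I_i)+1$.

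First I would isolate the island: the sub-fingerprint induced by $I_i$ is itself a valid palindromic fingerprint of length $|I_i|$ and reconstruction degree exactly $\sigma(I_i)$ (its standalone reconstruction, \cref{d:sr}). I then form $F'$ by placing a single extra index immediately to the (say) right of $I_i$ that participates in no palindrome, i.e., a trivial island of length $1$. The island decomposition of $F'$ is then $\{I_i,(|I_i|+1,1)\}$; since $I_i$ is not a palindrome it is non-trivial, hence $\sigma(I_i)\ge 2>1$, so the unique maximal-degree island of $F'$ is still $I_i$, which is not a palindrome. Applying \cref{l:isl} to $F'$, its maximal-degree island being non-palindromic forces $\sigma(F')=\sigma(I_i)+1=\sigma(F)$. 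As $|F'|=|I_i|+1$, the shortest fingerprint of reconstruction degree $\sigma(F)$ has length at most $|I_i|+1$, which is the claim.

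The main obstacle is justifying that one appended index truly raises the degree from $\sigma(I_i)$ to $\sigma(I_i)+1$, rather than being absorbable into the existing $\sigma(I_i)$ colors. This is exactly where the non-palindrome hypothesis is essential, and where I lean on \cref{l:isl}: for a palindrome island the two equal endpoints create boundary obstructions on \emph{both} sides (the $+2$ case of \cref{l:isl}), so a single extra index would not suffice, whereas for a non-palindrome island the entire excess is carried by a single boundary constraint (the $+1$ case), which one extra index realizes. I would also discharge the minor verification that the appended index neither extends the outermost palindrome of $I_i$ (so that $I_i$ remains a maximal island of $F'$) nor creates any new non-trivial maximal palindrome, so that the island decomposition of $F'$ is genuinely $\{I_i,\text{trivial}\}$ and \cref{l:isl} applies verbatim.
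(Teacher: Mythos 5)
The paper gives no written proof of this observation---it is stated as a direct consequence of \cref{l:isl} and the locality analysis in its proof---so your proposal must stand on its own, and it has a genuine gap. Both of your key steps read \cref{l:isl} as an exact equality: first to get $\sigma(F)=\sigma(I_i)+1$, and then to conclude that your constructed $F'$ (the island $I_i$ with one fresh, unconstrained trivial index appended) satisfies $\sigma(F')=\sigma(I_i)+1$. But the paper's proof of \cref{l:isl} argues by contradiction that $\sigma(I_i)+2$ characters are never needed, i.e., it establishes only an upper bound $\sigma(F)\le\max_i\sigma(I_i)+1$; read as an equality the statement is already false for a fingerprint consisting of a single non-palindromic island and nothing else, where $\sigma(F)=\sigma(I_i)$ by definition. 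Your construction fails concretely. Take $I_i$ to be the island of $abacabaa$, i.e., the fingerprint $\{(1,7),(1,3),(5,7),(7,8)\}$ on $8$ positions: the palindromes $(1,7)$ and $(7,8)$ cross, so this is a single island; it is not a palindrome; and its standalone degree is $3$ (equality classes $\{1,3,5,7,8\}$, $\{2,6\}$, $\{4\}$, pairwise constrained). Appending a trivial index at position $9$ imposes on it exactly the constraints $F[9]\ne F[8]$, $F[9]\ne F[7]$ and, via maximality of $(7,8)$, $F[9]\ne F[6]$; these touch only the classes $\{1,3,5,7,8\}$ and $\{2,6\}$, so position $9$ may reuse the third color: the string $abacabaac$ reconstructs your $F'$ with three characters. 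Hence $\sigma(F')=\sigma(I_i)$, not $\sigma(I_i)+1$, and $F'$ is not a witness. (The side also matters, contrary to your ``(say) right'': prepending the trivial index on the left of this island does force a fourth character, because maximality of $(1,3)$ and $(1,7)$ forbids all three classes there.)

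What the observation actually needs is a witness that inherits the forcing constraints from $F$ itself, rather than a generic trivial index. If $\sigma(F)=\sigma(I_i)$, the standalone fingerprint of $I_i$, of length $|I_i|$, already has degree $\sigma(F)$ and you are done. If $\sigma(F)=\sigma(I_i)+1$, then by \cref{l:np2} and \cref{l:add-one} the extra character can only be forced at an index adjacent to $I_i$, and since $I_i$ is not a palindrome the two boundary indices carry no mutual restriction, so a single adjacent index carries the entire excess; the restriction of $F$ to $I_i$ together with that one adjacent index---\emph{keeping the maximality constraints that this index actually has in $F$ against palindromes of $I_i$ and of the neighbouring island}---is then a fingerprint of length $|I_i|+1$ of degree $\sigma(I_i)+1=\sigma(F)$. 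The difference between this and your $F'$ is precisely the point your proposal glosses over: a freshly appended index is constrained only not to extend the last palindromes of $I_i$, and, as the example above shows, those constraints need not exhaust the $\sigma(I_i)$ colors.
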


\subsection{Proof}

The proof process is incremental. We first partition the island $I$ into crossing palindromes $P_1,P_2,\ldots,P_m$. We then prove many local lemmas about reconstructing consecutive palindromes from within $I$, and then logically "merging" consecutive crossing palindromes $P_i,P_{i+1}$ into one palindromic substring $P_{i..i+1}$, and showing that the lemmas also hold for these substrings. By showing that, we iteratively prove our lemmas on all $I$, as we can keep on merging new palindromes into the substring until it contains all $I$.

\begin{lemma}\label{l:contained}
    Let $P_1,P_2$ be crossing palindromes such that $P_1$ contains $P_2$. The fingerprint $F'=P_{1..2}$ satisfies $\sigma(F') \le \sigma(P_1)$.
\end{lemma}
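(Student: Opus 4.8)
The plan is to exhibit a reconstruction of the merged fingerprint $F'=P_{1..2}=F[a..d]$ (writing $a=start(P_1)$, $b=end(P_1)$, $c=start(P_2)$, $d=end(P_2)$, with centers $c_1=(a+b)/2$, $c_2=(c+d)/2$) that uses no more characters than an optimal reconstruction of $P_1$. Since $P_1,P_2$ cross we have $a<c$ and $b<d$, and since $P_1$ contains $P_2$ the center of $P_2$ lies inside $P_1$, i.e.\ $a<c\le c_2\le b<d$. The first observation is that $P_2$'s left arm $F[c..c_2]$ lies entirely inside $P_1$, so every equality forced by $P_2$ ties an index to one that is at most $c_2\le b$; in particular each index of the new tail $F[b+1..d]$ is forced equal, via reflection through $c_2$, to the index $2c_2-x\in[c,\,2c_2-b-1]\subseteq[a,b]$, which lies inside $P_1$.

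Concretely I would take any reconstruction $S_1$ of $P_1$ realising $\sigma(P_1)$ characters and extend it to $[a,d]$ by copying: set $S'[x]=S'[2c_2-x]$ for $x\in(b,d]$. By the observation above this is well defined and introduces no character not already present in $S_1$, so it suffices to check that $S'$ satisfies every restriction of $F'$. The equalities coming from $P_1$ are inherited; the equalities of $P_2$ that touch the tail hold by construction; and both maximality inequalities point outside the window $[a,d]$ (they involve $a-1$ and $d+1$) and so are vacuous for the standalone fingerprint. What remains are (i) the $P_2$-equalities whose endpoints both fall in the overlap $[c,b]$, and (ii) the inequalities contributed by maximal palindromes nested strictly inside $[a,d]$.

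The mechanism that handles (i) is periodicity. Composing the reflection $\rho_1:x\mapsto a+b-x$ of $P_1$ with the reflection $\rho_2:x\mapsto c+d-x$ of $P_2$ gives the translation $x\mapsto x+p$ with $p=2(c_2-c_1)=(c+d)-(a+b)>0$, so wherever both reflections apply we obtain $F[x]=F[x+p]$; this shows that the overlap equalities of $P_2$ are exactly the period-$p$ identifications induced by the pair, and that they only merge symbol classes of $P_1$ rather than split them. I would phrase the conclusion through the restriction graph: the fold-back map $\phi$ that is the identity on $[a,b]$ and sends a tail index $x$ to $2c_2-x$ identifies the vertices of the restriction graph of $F'$ with those of the restriction graph of $P_1$ after contracting along the period-$p$ equalities, and a proper colouring of the latter pulls back to a proper colouring of $F'$. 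Hence $\sigma(F')\le\sigma(P_1)$.

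The hard part will be (ii) together with verifying that $\phi$ is genuinely a graph homomorphism: I must show that the extra period-$p$ equalities never identify two indices joined by an inequality of $F'$ (which would create a self-loop and force unreconstructibility), and that every inequality incident to a tail index mirrors, under $\phi$, to an inequality already present in $P_1$'s window. Both reduce to the claim that the maximal palindromes nested inside $[a,d]$ respect the period $p$ in the region where the two reflections overlap, which I would establish from \cref{l:nocross} (palindromes cannot cross the boundary of an island) together with the maximality of $P_1,P_2$ — exactly the place where the hypothesis that $P_2$ is \emph{dominated} by $P_1$, so that its entire left arm sits inside $P_1$, is used.
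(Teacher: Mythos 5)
Your core construction --- fix a reconstruction of $P_1$, extend it over the tail $(b,d]$ by reflecting through $c_2$ (well defined because domination places the whole left arm of $P_2$ inside $P_1$), and observe that the maximality inequalities of $P_1,P_2$ involve the indices $a-1$ and $d+1$ outside the window $[a,d]$ --- is precisely the paper's entire proof of this lemma, so on that portion you and the paper agree. The difference is that you correctly notice this is not yet a proof, because of your items (i) and (ii); unfortunately, your proposed resolution of (i) is not a deferred verification but a genuine logical gap.

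Here is the gap concretely. A standalone reconstruction $S_1$ of $P_1$ only has to satisfy restrictions derived from maximal palindromes of $F$ lying inside $[a,b]$. The central part of $P_2$ truncated to $[2c_2-b,\,b]$ is a palindrome but is \emph{not} a maximal palindrome of $F$ (it extends on both sides inside $P_2$), so its equalities are not restrictions of $P_1$'s standalone fingerprint, and an optimal $S_1$ may violate them; then the copied string $S'$ violates $P_2$ inside the overlap, which is exactly your item (i). Your graph reformulation concedes this and instead colours the quotient of $P_1$'s restriction graph under the period-$p$ identifications: write $G_1$ for the restriction graph of $P_1$ and $G_1'$ for that quotient. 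A proper colouring of $G_1'$ does pull back to a proper colouring of $F'$ (granting your homomorphism claim), but that yields only $\sigma(F')\le\chi(G_1')$, not $\sigma(F')\le\chi(G_1)=\sigma(P_1)$; and identifying non-adjacent vertices of a graph can strictly \emph{increase} its chromatic number --- identify the three antipodal pairs of a $6$-cycle and you obtain a triangle, raising $\chi$ from $2$ to $3$. So the assertion that the period-$p$ identifications ``only merge symbol classes rather than split them'' and hence cost nothing is not a harmless bookkeeping step: proving $\chi(G_1')\le\chi(G_1)$ for these particular identifications is the entire mathematical content of the lemma, and it is exactly where the palindromic structure (nestedness of maximal palindromes, \cref{l:nocross}, maximality of $P_1,P_2$) must enter. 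Since your proposal defers this, together with (ii) and the homomorphism check, it in effect reduces the lemma to an unproved claim of the same strength. (To be fair, the paper's three-sentence proof silently skips the same issue; but as a proof, your attempt stops at the same crux it correctly identifies.)
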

The proof is at~\cref{p:l:contained}

The following lemma is a key lemma for the rest of the proof.

\begin{lemma}\label{l:add-one}
    Let $P_1,P_2$ be crossing palindromes where neither $P_1$ is contained in $P_2$ or the other way around, and let $F'=P_{1..2}$, then $\sigma(F')\le \max(\sigma(P_1),\sigma(P_2))+1$, and the number of distinct characters used to reconstruct $P_2$ and $P_1$ is at most $m$ in the reconstruction of $F'$.
\end{lemma}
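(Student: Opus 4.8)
The plan is to construct a reconstruction of $F'$ by overlaying optimal reconstructions of $P_1$ and $P_2$ along their intersection and reusing a single palette for the two ``cores'' that protrude past the overlap. Write $P_1=F[a_1\ldots b_1]$ and $P_2=F[a_2\ldots b_2]$; the hypothesis that they cross with neither containing the other is exactly $a_1<a_2\le b_1<b_2$, so that $w:=P_1\cap P_2=F[a_2\ldots b_1]$ is a genuine nonempty overlap, simultaneously a suffix of $P_1$ and a prefix of $P_2$. Let $c_1=center(P_1)$, $c_2=center(P_2)$, with $c_1<c_2$. Composing the two palindromic reflections $i\mapsto 2c_1-i$ and $i\mapsto 2c_2-i$ shows $F[i]=F[i+2(c_2-c_1)]$ wherever both stay in range, so the overlap forces $P_1$ and $P_2$ to be shifted images of one another. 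I would split into two regimes according to whether a centre lies inside $w$.

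If one centre lies inside the overlap -- say $c_2\le b_1$ -- then the left half of $P_2$ sits inside $w\subseteq P_1$, so all of $P_2$ is determined by $P_1$ through its reflection; keeping an optimal colouring of $P_1$ and filling $F[b_1+1\ldots b_2]$ by reflection already gives $\sigma(F')\le\sigma(P_1)$ with $\sigma(P_2)\le\sigma(P_1)$, and the symmetric situation $c_1\ge a_2$ is handled by reversing $F'$. The remaining, main regime is $c_1<a_2$ and $c_2>b_1$, where the overlap lies strictly between the two centres. Here the reflections force the clean decomposition $F'=\overline{w}\,M_1\,w\,M_2\,\overline{w}$, with $M_1$ (centred at $c_1$) and $M_2$ (centred at $c_2$) palindromes and $\overline{w}$ the reverse of $w$, so that $P_1=\overline{w}M_1w$ and $P_2=wM_2\overline{w}$. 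The prototype is $abaca$, with $w=a$, $M_1=b$, $M_2=c$.

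The key observation is that $M_1$ and $M_2$ occupy disjoint regions separated by the middle copy of $w$, so their palettes may be \emph{shared} rather than added. Assuming without loss of generality $\sigma(P_2)\ge\sigma(P_1)$, I would fix an optimal colouring $\chi_2$ of $P_2$ and extend it leftwards: the left $\overline{w}$ is coloured identically to the middle $w$ (reusing $\chi_2$'s colours on $w$), and $M_1$ is coloured by transplanting an optimal colouring of $P_1=\overline{w}M_1w$, renamed to agree with $\chi_2$ on the pattern of $w$. The colours that $\overline{w}M_1w$ uses outside $w$ number at most $\sigma(P_1)-|\chi_2(w)|\le\sigma(P_2)-|\chi_2(w)|$, hence they fit inside the colours $\chi_2$ already spends on $M_2$ outside $w$; choosing $M_1$'s fresh colours from that set reuses $M_2$'s palette and keeps the total at $\sigma(P_2)=\max(\sigma(P_1),\sigma(P_2))$. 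Since $F'=P_1\cup P_2$, the same colouring immediately yields the second claim about the number of characters used on $P_1$ and $P_2$.

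The final step, and what I expect to be the main obstacle, is the junction bookkeeping. Overlaying the colourings respects every palindromic equality, but the glued string may acquire an unwanted palindrome straddling the middle $w$ -- equivalently, a maximality inequality of \cref{o:adjr} between a boundary of $M_1$ and a boundary of $M_2$ may fail. I would show that all such conflicts are localised to a single pair of positions adjacent to the middle $w$, so recolouring one position with one fresh colour repairs maximality without disturbing any listed palindrome; this is exactly where the ``$+1$'' is spent, and $abaca$ -- which needs three colours although $\sigma(aba)=\sigma(aca)=2$ -- shows the extra colour is unavoidable. Making the statement ``a single fresh colour always suffices'' precise, i.e.\ proving the reflections cannot force two independent junction conflicts, is the delicate heart of the argument; I would establish it by tracking how the composed reflection carries every index of $F'$ back into $w$, so that the only relation not inherited from the reconstruction of $P_1$ or of $P_2$ is the one separating inequality at the centres' junction.
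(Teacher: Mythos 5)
Your construction treats $F'=P_{1..2}$ as if its only constraints were the ones derived from $P_1$ and $P_2$ themselves, and that is the gap that breaks the argument. By the paper's definition, $P_{1..2}$ is the fingerprint $F[start(P_1)..end(P_2)]$, so it inherits \emph{every} maximal palindrome of $F$ lying in that interval --- in particular palindromes that begin strictly inside $P_1$ and end strictly inside $P_2$, crossing the overlap $K=P_1\cap P_2$. Such a palindrome imposes equalities between positions of your $M_1$-region and positions of your $M_2$-region, which are precisely the two regions your overlay colours independently and then forces to share a palette; so the sentence ``overlaying the colourings respects every palindromic equality'' is false in general, and the palette-sharing count collapses with it. These extra palindromes are not a side issue: the paper's proof is almost entirely devoted to them. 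It fixes a reconstruction of the heavier palindrome, lists the crossing palindromes $C_1,\ldots,C_m$ between $K$ and $P_2$, uses the domination relation to show that all of them except the one with maximal end index (call it $C$) are already determined, observes that the only position that can receive a fresh character is $end(P_1)+1$, and then counts inequality relations (inequalities can only arise from maximal palindromes) to show that a second fresh character would force $\sigma(P_2)\ge m+1$, a contradiction. None of this structure appears in your proposal; instead you explicitly defer the claim that ``a single fresh colour always suffices'' as the ``delicate heart'' to be proved later --- but that claim \emph{is} the lemma, so what you have is a programme rather than a proof.

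There are also two local problems inside your regime-2 overlay even if no extra palindromes existed. First, the transplant step ``renamed to agree with $\chi_2$ on the pattern of $w$'' is not well defined: optimal standalone colourings of $P_1$ and of $P_2$ may induce genuinely different partitions of the positions of $w$ (a colouring may merge colour classes that are not forced equal, and the two sides may merge different pairs), in which case no renaming makes them agree. Second, the count ``colours used by $P_1$ outside $w$ is at most $\sigma(P_1)-|\chi_2(w)|$'' mixes the two colourings: the correct quantity involves $|\chi_1(w)|$, and nothing guarantees $|\chi_1(w)|\ge|\chi_2(w)|$, so the conclusion that $M_1$'s spare colours fit inside $M_2$'s spare colours does not follow. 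On the positive side, your regime-1 observation (a centre inside the overlap lets reflection determine the rest) is sound and matches the paper's notion of domination, and your $abaca$ example correctly shows the $+1$ is unavoidable; but the case that matters is exactly the one you leave open.
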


\begin{proof}
    We assume w.l.o.g that $\sigma(P_1)\ge\sigma(P_2)$. Let's assume $P_1$ is reconstructed, and let $K$ denote the shared part of $P_1$ and $P_2$, i.e., $K=P_1\cap P_2$.

    Let $C_1,\ldots,C_m$ be all of the crossing palindromes between $K$ and $P_2$, such that for any $i,j\le m$, $C_i$ is not contained in $C_j$. First, we ignore all palindromes dominated by $P_1$, as they can not introduce any new characters to the reconstruction. Let $C_i$ be a palindrome that is not dominated by $P_1$. If $end(C_i)\ge center(P_2)$ then reconstructing $C_i$ suffices for the reconstruction of $P_2$. In this case, we can reconstruct $C_i$ instead of $P_2$ in the lemma, and all characters in $P_2$ will be reconstructed thereafter. Thus, let us assume that such a descriptor does not exist and choose two crossing palindromes $C_i,C_j$ ($start(C_i)<start(C_j)$).

    As we required before, both $center(C_i)>end(P_1),center(C_j)>end(P_1)$. Also, we required that $C_i$ is not contained in $C_j$ and vice versa, therefore $start(C_i)<start(C_j)$ and $end(C_i)<end(C_j)$. However, since $start(C_j)<end(P_1)$ and $center(C_i)>end(P_1)$, we have $start(C_j)<center(C_i)$, and $C_i$ is dominated by $C_j$. Let us call the crossing palindrome with maximal end index $C$. After $C$ is reconstructed, all other crossing palindromes $C_i$ do not have any free symbols. The only index in $C$ that is not reconstructed and has inequality palindromic relations with $P_1$ is $end(P_1)+1$.

    We now consider the reconstruction degree of $C$. We know that $\sigma(C)\le \sigma(P_2)\le m$. If $C$ introduces a new character at $end(P_1)+1=end(K)+1$, then $C$ has inequality relations with $m$ different characters. In palindromic fingerprints, inequality relations can only result from palindromes. Therefore, $m-1$ of the inequality relations occur within a palindrome. If $K$ is not this palindrome, then $K$ contains this palindrome, and the two characters next to it introduce two new characters, which implies that $m+1\le \sigma(P_2) \le \sigma(P_1) = m$, and that's a contradiction. Therefore, in order for $C$ to introduce a new character, $K$ must be a palindrome of reconstruction degree $m-1$. 

    We now consider $P_2$ again. We showed that $P_2$ has no crossing palindromes with $K$ except for the palindrome $C$ (if it exists). For the reconstruction degree of $P_2$ to be $m+1$, we need a symbol with $m$ inequality relations. If $C$ does not exist, we showed that there are at most $m$ distinct symbols in $P_2$ in the latter paragraph. If $C$ exists, then the index $end(C)+1$ might have many inequality relations with elements in $C$, and the index $end(P_1)+1$ is the new character $\$$ introduced at $C$. For $P_2$ to have an inequality relation with the new character at $end(P_1)+1$, there must exist a maximal palindrome bordering with $\$$. Still, if this maximal palindrome has $m-1$ characters and all have an inequality relation with the bordering characters, then the reconstruction degree of $P_2$ is $m+1$, which is a contradiction.
\end{proof}

\begin{lemma}\label{l:add-mid}
Let $P_1,P_2,P_3$ be crossing palindromes, and let $\sigma(P_1) \ge \max(\sigma(P_2),\sigma(P_3))$, then $\sigma(P_{1..3})\le \sigma(P_1) + 1$.
\end{lemma}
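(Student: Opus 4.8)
The plan is to reduce Lemma~\ref{l:add-mid} to repeated application of Lemma~\ref{l:add-one}, the key merging lemma already established. The setup gives three pairwise-crossing palindromes $P_1, P_2, P_3$ with $\sigma(P_1)\ge\max(\sigma(P_2),\sigma(P_3))$, and we want $\sigma(P_{1..3})\le\sigma(P_1)+1$. The natural strategy is to merge incrementally: first combine $P_1$ and $P_2$ into the substring $P_{1..2}$, then treat $P_{1..2}$ as a single reconstructed block and merge $P_3$ into it. Since $P_1$ is the heaviest palindrome, the whole difficulty is showing that merging never forces the alphabet to grow by more than one \emph{in total}, rather than by one at each of the two merge steps (which would naively give $\sigma(P_1)+2$).

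First I would apply Lemma~\ref{l:add-one} to $P_1$ and $P_2$. This yields $\sigma(P_{1..2})\le\max(\sigma(P_1),\sigma(P_2))+1=\sigma(P_1)+1$, and crucially the lemma's second conclusion tells us that the number of distinct characters used to reconstruct $P_1$ and $P_2$ together is at most $m=\sigma(P_1)$ \emph{within} $P_{1..2}$ — that is, the single new character introduced by the merge appears only on the margin (the index $end(P_1)+1$ in the notation of that proof), not inside the cores of $P_1$ or $P_2$. I would carry this ``the new symbol lives only at the boundary'' fact forward, because it is exactly what prevents the second merge from costing another character.

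Next I would merge $P_3$ into $P_{1..2}$. Here I would invoke Lemma~\ref{l:nocross} / the island structure: since $P_3$ crosses $P_1$ (and $P_2$), its center lies to the right of $end(P_1)$, so the portion of $P_3$ that overlaps the already-reconstructed block $P_{1..2}$ is governed by the same kind of ``shared part $K$'' analysis as in Lemma~\ref{l:add-one}. The point is that $P_3$ has $\sigma(P_3)\le\sigma(P_1)=m$, so by the same argument as in Lemma~\ref{l:add-one} reconstructing the unconstrained part of $P_3$ needs at most $m$ symbols locally; a genuinely new $(m+1)$-th symbol is only forced at a single boundary index, and that index can reuse the boundary symbol already introduced when $P_1,P_2$ were merged (since there is at most one inequality constraint tying $P_3$'s boundary to that symbol). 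Thus the two merges share the same ``extra'' character and the total stays at $\sigma(P_1)+1$.

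The main obstacle will be making the second step rigorous, specifically ruling out the scenario where $P_3$ requires a second new character distinct from the one used in the first merge. This is the crux: I would have to argue that the boundary index of $P_3$ that reaches into $P_{1..2}$ carries at most one inequality relation against the newly-introduced boundary symbol, so that a single fresh color suffices for \emph{both} boundaries simultaneously. I expect this to follow from the refined maximal-intervention observation (only the outermost two characters of an island interact with neighboring structure) together with the count of inequality relations used inside Lemma~\ref{l:add-one}: a symbol can only be forced to be new if it faces $m$ distinct inequality neighbors, and I would show that after the first merge there are not enough independent inequality constraints for \emph{two} distinct new symbols to be simultaneously forced. Once that case analysis is dispatched, the inequality $\sigma(P_{1..3})\le\sigma(P_1)+1$ follows immediately.
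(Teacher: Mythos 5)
Your overall skeleton matches the paper's proof: reconstruct $P_1$ first, use \cref{l:add-one} to bound the cost of absorbing $P_2$, and then argue that absorbing $P_3$ costs nothing further. The first step is fine. The gap is in how you resolve what you yourself flag as the crux. You propose that the index of $P_3$ which would be forced to carry a new symbol can \emph{reuse the new boundary character} introduced at $end(P_1)+1$ during the first merge, and you justify this by saying ``there is at most one inequality constraint tying $P_3$'s boundary to that symbol.'' That justification is backwards: even a \emph{single} inequality constraint against that symbol forbids reuse, so what you actually need is that there are \emph{zero} such constraints, and this is false in general. The new character sits at $end(P_1)+1$, which lies strictly inside $P_2$'s span, so $P_2$'s palindromic equalities copy it to mirror positions inside $P_2$ --- i.e., it propagates toward, or into, the overlap with $P_3$, exactly where it can acquire inequality relations with the unreconstructed part of $P_3$. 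Concretely, a maximal palindrome $(end(P_1)+2,\, end(P_2))$ forces $F[end(P_1)+1]\neq F[end(P_2)+1]$, and $end(P_2)+1$ is precisely a position where $P_3$ might demand a fresh symbol; in that situation the two boundaries provably cannot share a character, so your mechanism breaks.

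The paper closes this gap with a different reuse target, and this is the idea missing from your sketch. By the cardinality clause of \cref{l:add-one}, at most $m=\sigma(P_1)$ distinct characters appear within the spans of $P_1$ and $P_2$ in the reconstruction of $P_{1..2}$; hence if the first merge did introduce a new character $\$_2$ inside $P_2$, then by pigeonhole some character $\$_1$ of $P_1$ occurs nowhere in $P_2$'s span. All occurrences of $\$_1$ therefore lie to the left of $start(P_2)$, while the unreconstructed positions of $P_3$ lie to the right of $end(P_2)$ (here the paper uses the ordering $start(P_2)<start(P_3)$ and $end(P_2)<end(P_3)$, so $P_1\cap P_3\subseteq P_1\cap P_2$). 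An inequality relation between two such positions would require a maximal palindrome containing $P_2$ entirely, which is impossible in this decomposition. Hence $\$_1$ --- an \emph{old} character of $P_1$ that is absent from $P_2$, not the freshly created boundary character --- can be handed to $P_3$ whenever a ``new'' symbol is demanded, keeping the total at $\sigma(P_1)+1$. If you replace your reuse claim with this pigeonhole argument, the rest of your outline goes through.
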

The proof is at~\cref{p:l:add-mid}.

\begin{corollary}\label{c:add-two}
    Let $P_1,P_2,\ldots,P_n$ be crossing palindromes.\\ $\sigma\left(\underset{i\le n}{\bigcup}P_i\right)\le \max\left(\sigma(P_i)\right) + 2$
\end{corollary}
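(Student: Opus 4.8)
The plan is to derive Corollary~\ref{c:add-two} by combining Lemma~\ref{l:add-one} and Lemma~\ref{l:add-mid} through the incremental ``merging'' strategy announced at the start of the subsection. The idea is to process the crossing palindromes $P_1,P_2,\ldots,P_n$ in order, maintaining a growing substring $P_{1..i}=\bigcup_{j\le i}P_j$, and to track how its reconstruction degree $\sigma(P_{1..i})$ grows as each new palindrome is absorbed. The target bound $\max_i(\sigma(P_i))+2$ must be shown to hold throughout the entire merge, not just locally, so the main work is to argue that the ``$+1$'' increments supplied by the two lemmas never accumulate beyond two.

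First I would fix $M=\max_i\sigma(P_i)$ and identify an index $i^\ast$ achieving $\sigma(P_{i^\ast})=M$; this is the heaviest palindrome, playing the same role as the heaviest island $I_i$ in Lemma~\ref{l:np2}. The strategy is to merge outward from $P_{i^\ast}$, or equivalently to view $P_{i^\ast}$ as an ``anchor'' so that every other palindrome in the chain has reconstruction degree at most $M$. Lemma~\ref{l:add-mid} is exactly the tool that lets the anchor absorb a neighbor on each side while paying only a single $+1$: applying it with $P_1\gets$ the current heaviest substring and $P_2,P_3$ its two flanking palindromes yields $\sigma\le M+1$ after incorporating both sides. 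The remaining palindromes, which no longer cross the anchor directly, are folded in by repeated application of Lemma~\ref{l:add-one}; the crucial point is that Lemma~\ref{l:add-one} guarantees not merely $\sigma(P_{1..2})\le\max(\sigma(P_1),\sigma(P_2))+1$ but also that the number of distinct characters reused inside the two merged pieces is at most $m$, so that the budget of fresh symbols spent on the margins stays bounded.

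The key accounting step is to show that once two extra colors have been introduced (one on each flank of the anchor), no further merge forces a third. I would argue this exactly as in the proof of Lemma~\ref{l:add-one}: a new palindrome $C$ can only introduce a fresh symbol at its bordering index if the shared part $K$ is itself a palindrome of reconstruction degree $M-1$, and in that situation the two symbols already freed on the margins of the anchor can be recycled, because those positions carry only inequality relations of the adjacent-relation type from Observation~\ref{o:adjr}. Thus the invariant ``$\sigma(P_{1..i})\le M+2$, with at most two colors outside $\Sigma_{P_{i^\ast}}$ actually in use at the two boundaries'' is preserved at each step, and an induction on the number of merged palindromes closes the argument.

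The main obstacle I expect is controlling interactions near the center: a palindrome $C$ whose end index reaches or passes $\mathrm{center}(P_2)$ can subsume the reconstruction of a neighbor (the case flagged inside Lemma~\ref{l:add-one}), and when such dominations chain together across several $P_j$ simultaneously, one must verify that the two boundary symbols are genuinely shared and not double-counted on the two ends. Handling this cleanly likely requires always merging the palindrome of maximal end index first, as in the selection of $C$ in Lemma~\ref{l:add-one}, so that every subsequently processed palindrome is dominated and contributes no free symbol. Provided that ordering is respected, the corollary follows by feeding the two lemmas into the induction; the bound $+2$ rather than $+1$ is unavoidable precisely because the two ends of the anchor are independent and each may legitimately demand one fresh symbol.
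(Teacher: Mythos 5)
Your overall skeleton --- anchor at the heaviest palindrome $P_{i^\ast}$, merge outward, and argue that each side of the anchor contributes at most one fresh symbol --- is exactly how the paper intends this corollary to follow from Lemma~\ref{l:add-one} and Lemma~\ref{l:add-mid} (the paper states it as a corollary with no separate proof). But there is a concrete error in the middle of your argument: you invoke Lemma~\ref{l:add-mid} ``with $P_1\gets$ the current heaviest substring and $P_2,P_3$ its two flanking palindromes,'' claiming this yields $\sigma\le M+1$ \emph{after incorporating both sides}. The lemma does not license this. In the notation $P_{1..3}$ the three palindromes form a left-to-right crossing chain ($P_1$ crosses $P_2$, $P_2$ crosses $P_3$), so $P_2$ and $P_3$ both lie on the \emph{same} side of $P_1$; the hypothesis $\sigma(P_1)\ge\max(\sigma(P_2),\sigma(P_3))$ therefore places the heaviest palindrome first (leftmost) in the chain, not in its middle. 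Two palindromes flanking the anchor on opposite sides form the chain $P_{i^\ast-1},P_{i^\ast},P_{i^\ast+1}$ with the heaviest in the \emph{middle} --- precisely the configuration the lemma does not cover, and precisely the one where one may pay $+1$ on each side. Indeed, if your reading were correct, the corollary's bound would improve to $+1$, which is false (the paper's Lemma~\ref{l:red-fin} analyzes exactly the case $\sigma(P_1\cup P_2\cup P_3)=\sigma(P_2)+2$) and also contradicts your own closing sentence that each of the two ends ``may legitimately demand one fresh symbol.''

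The fix is small and restores consistency with the rest of what you wrote: apply Lemma~\ref{l:add-one} and then the iteration embodied in Lemma~\ref{l:add-mid} to the rightward chain $P_{i^\ast},P_{i^\ast+1},\ldots,P_n$ (heaviest first, so the hypotheses hold), getting $+1$ for the entire right side; then, since palindromic restrictions are invariant under reversal of the fingerprint, run the mirror-image argument on $P_{i^\ast},P_{i^\ast-1},\ldots,P_1$ for another $+1$, giving $+2$ in total. Your stated invariant (at most two colors outside $\Sigma_{P_{i^\ast}}$, one at each boundary) is then the correct one, and your use of the spare-character clause of Lemma~\ref{l:add-one} to keep increments from accumulating is the right mechanism. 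One caveat you share with the paper rather than introduce yourself: both you and the paper apply lemmas stated for crossing \emph{palindromes} to merged substrings $P_{1..i}$, which are not themselves palindromes; the paper only asserts (``showing that the lemmas also hold for these substrings'') rather than proves that this extension is legitimate.
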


\begin{lemma}\label{l:red-fin}
    If $\sigma(P_1 \cup P_2 \cup P_3) = \sigma(P_2) + 2$, the there exists a fingerprint $F'$ of length $|P_2|+2$ satisfying $\sigma(P_2)=\sigma(F')$.
\end{lemma}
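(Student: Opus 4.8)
The plan is to run the argument entirely off the tightness hypothesis, feeding it through the structural analysis already developed in \cref{l:add-one}, \cref{l:add-mid} and \cref{c:add-two}. Write $m=\sigma(P_2)$. Since three crossing palindromes can lift the reconstruction degree by at most $2$ (\cref{c:add-two}) and since a \emph{leading} palindrome of maximal degree only contributes $+1$ (\cref{l:add-mid}), the equality $\sigma(P_1\cup P_2\cup P_3)=m+2$ can only hold when $P_2$ is the middle palindrome, it has the maximal degree among the three, and each of the two neighbours $P_1$ (on the left) and $P_3$ (on the right) independently forces exactly one fresh color. First I would make this precise by re-reading the mechanism inside the proof of \cref{l:add-one}: a neighbour raises the degree by one only if its intersection with $P_2$ is itself a palindrome of reconstruction degree $m-1$ that is bordered by a maximal palindrome introducing a new symbol at the junction. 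Applying this on both sides, the prefix $K_L=P_1\cap P_2$ and the suffix $K_R=P_2\cap P_3$ of $P_2$ are palindromes of degree $m-1$, and the two extra colors are created by maximal palindromes straddling the junctions $end(P_1)$ and $start(P_3)$.

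Next I would construct $F'$ by \emph{localizing} around $P_2$. Let $W=[\,start(P_2)-1,\;end(P_2)+1\,]$, which has length $end(P_2)-start(P_2)+3=|P_2|+2$, and let $F'$ be the fingerprint induced by restricting every palindromic descriptor of the original fingerprint to $W$: the descriptor of $P_2$ and all palindromes contained in $W$ are kept, while the straddling palindromes coming from $P_1$ and $P_3$ are dropped because they extend outside $W$. The claim $\sigma(F')=m$ then splits into two inequalities. For $\sigma(F')\ge m$, every reconstruction of $F'$ restricts to a reconstruction of the retained palindrome $P_2$, whose degree is $m$, so at least $m$ colors are needed. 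For $\sigma(F')\le m$, the only positions added beyond $P_2$ are $start(P_2)-1$ and $end(P_2)+1$, and by \cref{o:maxint} (and the refined maximal-intervention observation) each of these carries, inside $W$, only the single inequality forced by the maximality of $P_2$; hence each can reuse an existing color of $P_2$, and no new color is required. The conceptual heart is that the two palindromes that manufactured the extra colors live partly outside $W$, so deleting their restrictions is exactly what returns the degree from $m+2$ to $m$.

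The step I expect to be the main obstacle is the verification that restricting to $W$ removes \emph{precisely} the two extra colors and introduces no compensating increase: I must confirm that the straddling maximal palindromes at $end(P_1)$ and $start(P_3)$ are the only sources of the $+2$, that truncating them does not leave behind a shorter maximal palindrome inside $W$ that re-imposes an equivalent inequality, and that no \emph{new} maximal palindrome is accidentally created inside $W$ by the bordering characters (which would spuriously raise $\sigma(F')$). This hinges on the exact characterization from \cref{l:add-one} that $K_L,K_R$ have degree $m-1$ and that the extra symbols appear only at the junctions. I would also treat the degenerate cases separately, namely when $K_L$ or $K_R$ is short or when $P_1,P_3$ barely overlap $P_2$, checking in each that the window still has length $|P_2|+2$ and that the border positions remain free to reuse a color of $P_2$, so that $\sigma(F')=\sigma(P_2)$ is preserved.
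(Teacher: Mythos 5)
Your proposal misreads what the lemma has to say, and its central step is false. On the first point: the equation $\sigma(P_2)=\sigma(F')$ in the statement is a misprint. The paper's own proof concludes that ``the heaviest palindrome with its two adjacent indices cannot be reconstructed with less than $\sigma(P_2)+2$ characters,'' i.e.\ the fingerprint $F'$ of length $|P_2|+2$ obtained by keeping $P_2$ together with its two neighbouring positions satisfies $\sigma(F')=\sigma(P_2)+2=\sigma(P_1\cup P_2\cup P_3)$. Only this reading can serve the lemma's purpose in the reduction methodology that follows it: crossing palindromes can be removed so as to produce a \emph{shorter} fingerprint of the \emph{same} reconstruction degree, whence an optimal (shortest for its degree) fingerprint has no crossing palindromes. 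You instead set out to build a window whose degree falls back to $\sigma(P_2)$; even if that were achievable, it is a different claim and cannot drive the paper's reduction.

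Second, your upper bound $\sigma(F')\le m$ fails, for a conceptual reason: in a palindromic fingerprint, equalities come from the listed descriptors, but \emph{inequalities come from maximality}, i.e.\ from the absence of longer palindromes at each center. Deleting the descriptors of $P_1$ and $P_3$ that straddle $W$ deletes only equalities (if anything, it adds inequalities at their centers); it cannot relax the constraints forcing fresh symbols at $start(P_2)-1$ and $end(P_2)+1$, because those constraints are maximality constraints of (possibly trivial) palindromes beginning at $start(P_2)$ or ending at $end(P_2)$, all of which lie inside $W$ --- exactly where your own first paragraph locates the forcing, via $K_L,K_R\subseteq P_2$, in tension with your later claim that it ``lives partly outside $W$.'' Concretely, take $S=acabbada$ with fingerprint $\{(1,3),(3,6),(6,8)\}$: the three palindromes are pairwise crossing, $P_2=(3,6)=abba$ has $\sigma(P_2)=2$, and $\sigma(P_1\cup P_2\cup P_3)=4=\sigma(P_2)+2$, so the hypothesis of \cref{l:red-fin} holds. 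Your window is $W=[2,7]$ and your $F'$ is the single descriptor $(2,5)$ on six positions. But the absence of palindromes at the centers just inside the window boundaries still forces $F'[1]\notin\{F'[2],F'[3]\}$ and $F'[6]\notin\{F'[4],F'[5]\}$, and the maximality of $(2,5)$ forces $F'[1]\ne F'[6]$; since $F'[2],F'[3]$ carry two distinct symbols, this gives $\sigma(F')=4$, not $2$. In other words, your $F'$ coincides with the paper's $F'$ and has degree $\sigma(P_2)+2$ --- the conclusion of \cref{l:red-fin} as the paper proves and uses it, and the opposite of what you claim; the mechanism of \cref{l:add-one} that you invoke guarantees this in general, since the forcing structure it identifies is contained in the window and survives your restriction.
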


\begin{proof}
    At~\cref{c:add-two}, we showed that the reconstruction of a palindromic island requires at most $\sigma(P_1 \cup P_2 \cup P_3) + 2$ characters. Moreover, at~\cref{l:add-one}, we showed that new characters can only be introduced on indices adjacent to the heaviest palindrome $P_2$. As a consequence, it implies that if the indices adjacent to $P_2$ could have been reconstructed from the existing characters, the reconstruction degree would be lower. Consequently, they must have been introduced as new characters. Therefore, the heaviest palindrome with its two adjacent indices cannot be reconstructed with less than $\sigma(P_2)+2$ characters. 
\end{proof}

\begin{corollary}
    The number of characters required to reconstruct a fingerprint $F$ is the number of characters required to reconstruct its heaviest palindrome plus, at most, two.
\end{corollary}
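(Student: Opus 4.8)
The plan is to read this corollary as the capstone that repackages \cref{l:isl} and \cref{c:add-two} into a single statement, so I would prove it by chaining those two results together and then verifying that the two ``slack'' characters are never charged twice. First I would fix $P^{*}$ to be a maximal palindrome of $F$ of largest reconstruction degree, so that $\sigma(P^{*}) = \max_{P}\sigma(P)$ over all maximal palindromes $P$ occurring in $F$, and let $I^{*}$ be the island containing $P^{*}$. Since the maximal palindromes inside any single island are pairwise crossing, \cref{c:add-two} immediately gives $\sigma(I^{*}) \le \max_{P \subseteq I^{*}} \sigma(P) + 2 = \sigma(P^{*}) + 2$, and the identical inequality holds for every other island $I_j$ because its heaviest palindrome has degree at most $\sigma(P^{*})$.

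Next I would assemble the islands with \cref{l:isl}. The delicate point, and the step I expect to be the main obstacle, is that \cref{l:isl} contributes an extra $+1$ or $+2$ for the inter-island frontier on top of each island's internal degree, which naively threatens a weaker bound of $\sigma(P^{*}) + 4$. The way to avoid this double count is to observe, via \cref{l:add-one} together with the maximal-intervention property (\cref{o:maxint}), that the only positions where a genuinely new character can be forced are the two indices flanking the heaviest palindrome; the frontier characters that \cref{l:isl} charges between islands are exactly these two margin characters and not additional ones. Concretely I would split into cases. If the heaviest island is the single palindrome $P^{*}$, then the first case of \cref{l:isl} already yields $\sigma(F) = \sigma(P^{*}) + 2$ outright. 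If instead the heaviest island is a proper union of crossing palindromes, I must sharpen \cref{c:add-two} on that island from $+2$ to $+1$, after which the second case of \cref{l:isl} contributes the remaining $+1$ and again gives $\sigma(F) \le \sigma(P^{*}) + 2$.

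To justify that sharpening I would lean on \cref{l:red-fin}: it shows that whenever an island's degree actually attains $\sigma(P^{*}) + 2$, the slack is realized solely by the two positions adjacent to the heaviest palindrome, i.e.\ the heaviest palindrome together with those two flanking indices is already a fingerprint of length $|P^{*}|+2$ and degree $\sigma(P^{*})$, so no third surplus character can be forced elsewhere in the island. Combining this with the guarantee in \cref{l:add-one} that the interior of the merged island still uses at most $\sigma(P^{*})$ distinct characters closes the accounting. The conclusion $\sigma(F) \le \sigma(P^{*}) + 2$ then follows, with the two extra characters being precisely the pair bordering the heaviest palindrome.
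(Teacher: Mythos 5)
You follow the same route the paper itself intends (the corollary is stated there without an explicit proof, as the capstone of \cref{l:isl}, \cref{c:add-two}, \cref{l:add-one} and \cref{l:red-fin}), and you correctly isolate the real difficulty: naively composing the inter-island bound of \cref{l:isl} with the within-island bound of \cref{c:add-two} gives $\sigma(P^{*})+3$, not $\sigma(P^{*})+2$. The gap is in your patch. The sharpening you invoke in your second case --- that a heaviest island which is a proper union of crossing palindromes satisfies $\sigma(I^{*})\le\sigma(P^{*})+1$ --- is false. Take the island $I=121412131215121$: its palindromic representatives are $(1,7)=1214121$, $(5,11)=1213121$, $(9,15)=1215121$, each of standalone degree $3$; the island is not a palindrome (positions $4$ and $12$ are mirrored and forced unequal by maximality of $(5,11)$); yet its equality classes are the odd positions, $\{2,6,10,14\}$, $\{4\}$, $\{8\}$, $\{12\}$, and all ten pairs carry inequality constraints (maximality of $(5,7)$, $(9,11)$, $(5,11)$ separates $\{4\}$, $\{8\}$, $\{12\}$ from one another, and trivial centers separate the rest), so the restriction graph is $K_5$ and $\sigma(I)=5=\sigma(P^{*})+2$. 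With that sharpening gone, your case B only delivers $\sigma(F)\le\sigma(P^{*})+3$.

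Your appeal to \cref{l:red-fin} also inverts what that lemma provides: its proof concludes that the heaviest palindrome together with its two flanking indices already \emph{costs} $\sigma(P^{*})+2$; it localizes the slack, it does not shrink a non-palindromic island's degree to $\sigma(P^{*})+1$. What the corollary actually needs is a non-stacking argument at the island boundary: if the heaviest island attains $\sigma(P^{*})+2$, then the extra character of \cref{l:isl} is never forced, because a boundary position is constrained only by the flanks of the maximal palindromes ending at the island's edge, and all of those flanks except one lie inside the longest such palindrome $Q$; forbidding the boundary all $\sigma(P^{*})+2$ island characters would therefore force $Q$ to contain $\sigma(P^{*})+1$ mutually distinct classes, contradicting the choice of $P^{*}$ as heaviest. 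In the example above this is concrete: the position just right of $I$ must avoid only the classes of the odd positions, $\{2,6,10,14\}$, $\{8\}$ and $\{12\}$, so it can reuse the character at position $4$. This boundary-level analogue of \cref{l:add-one} is the missing idea; without it, or with the false $+1$ sharpening, your argument does not reach the stated bound.
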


The last corollary concludes the vital components of the proof, as it shows that any fingerprint can be reduced to a fingerprint with no crossing palindromes. The methodology of converting such a string without reducing the reconstruction degree or increasing the length of the fingerprint is as follows:
\begin{enumerate}
    \item At~\cref{l:isl} we show that the whole fingerprint is dominated by the heaviest island, so the fingerprint can be reduced to one island, with up to two additional characters.
    \item At~\cref{c:add-two} we show that a single island is dominated by the heaviest representative, so we can only keep the representative.
    \item Use this method recursively on the last representative, omitting the two additional characters added in step (1). Eventually, we will conclude that the optimal string has no crossing palindromes, as the crossing palindromes can always be removed, as shown at~\cref{l:red-fin}.
\end{enumerate}

And since the overall fingerprint has no crossing palindromes, we achieve the relaxed requirement of~\cref{l:m-pncp}, and the proof of~\cref{t:main} is completed \qed.

\section{Properties of the Results}
In this section, we present the general properties of our result, show that the algorithm presented by I et al. indeed outputs a globally minimal alphabet, and present applications on real-world data.

\begin{lemma}\label{l:clique}
Let $F_k$ be a minimal-length palindromic fingerprint that cannot be recovered with less than $k$ characters. The restriction graph of $F_k$ is a clique.
\end{lemma}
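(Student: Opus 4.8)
The plan is to reduce the statement to a vertex count by combining two facts. First, the restriction graph $G$ of $(F_k,n)$ satisfies $\chi(G)=\sigma(F_k)=k$: this is immediate from the bijection between reconstructions of $(F_k,n)$ and colorings of $G$ built into the definition of the restriction graph, together with the fact that $F_k$ is reconstructible with $k$ characters (via $S_k$) but not fewer. Second, I would invoke the elementary graph-theoretic fact that a graph on exactly $k$ vertices with chromatic number $k$ must be the complete graph $K_k$. Indeed, if such a graph were not complete it would contain two non-adjacent vertices; assigning them a common color and the remaining $k-2$ vertices distinct colors produces a proper $(k-1)$-coloring, contradicting $\chi=k$. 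Hence it suffices to prove that $G$ has exactly $k$ vertices, i.e. that the forced-equality classes of $\{1,\dots,n\}$ number exactly $k$.

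For the lower bound $|V(G)|\ge k$ I would use that $S_k$ is a preimage of $F_k$ over an alphabet of size $k$ (\cref{l:exps}). Every vertex of $G$ is a set of positions forced to be equal, so it is monochromatic in $S_k$; since all $k$ symbols occur, the vertices surject onto the alphabet and there are at least $k$ of them. The substantive part is the matching upper bound $|V(G)|\le k$, namely the \emph{rigidity} claim that all occurrences of a single symbol of $S_k$ are forced equal and therefore lie in one vertex. I would prove this by induction on $k$ along the decomposition $S_k=xLtLy$ of \cref{l:mrecursive}, where $L$ is a palindrome and $x,t,y\notin L$ each occur exactly once. The three unique symbols trivially each form a single class. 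For the symbols of $L$, the maximal central palindrome $(2,n-1)=LtL$ recorded in $F_k$ forces the left copy of $L$ to coincide with the right copy position by position; and since $xLt$ realizes $S_{k-1}$, the inductive hypothesis applied to $xLt$ already pins together all equal symbols inside a single copy of $L$. Composing the two identifications collapses each symbol of $L$ to one class, leaving precisely the $k$ classes $\{x\},\{t\},\{y\}$ and the $k-3$ classes of $L$. The base cases $S_3=123$ and $S_4=12324$, whose restriction graphs are $K_3$ and $K_4$, can be verified directly and anchor the induction.

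Putting the bounds together gives $|V(G)|=k$, and with $\chi(G)=k$ the graph-theoretic fact forces $G=K_k$, which is a clique.

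I expect the main obstacle to be the rigidity step, and specifically the bookkeeping needed to guarantee that the palindromes responsible for the internal equalities of a copy of $L$ remain maximal, and hence recorded in $F_k$, once $L$ is embedded in the longer string $S_k$. The delicate point is that extending $S_{k-1}=xLt$ to $S_k$ changes the maximal palindrome centered at $t$ into the long palindrome $(2,n-1)$, so I must argue that this only \emph{adds} equality constraints relating the two copies of $L$ and never destroys an equality internal to a single copy. Concretely, I would show that any palindrome lying strictly inside a copy of $L$ has the same left and right bordering characters in $S_k$ as in the standalone $S_{k-1}$, so its maximality is unchanged and it is recorded in both fingerprints; and that a palindrome which was maximal in $S_{k-1}$ only by reaching the boundary at $t$ merely extends in $S_k$ to a longer recorded palindrome whose equalities subsume the original ones.
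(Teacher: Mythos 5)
Your proof is correct, but it takes a genuinely different route from the paper's. The paper verifies completeness edge by edge, inductively: writing $S_{k+1}=xP_ktP_ky$, it gets a clique on the classes of $xP_kt\cong S_k$ by induction, gets all edges incident to $y$ because $tP_ky$ is again a copy of $S_k$, and gets the single remaining edge $x$--$y$ from the maximality of the central palindrome $P_ktP_k$. You instead reduce completeness to a counting statement: since $\chi(G)=\sigma(F_k)=k$ comes free from the coloring/reconstruction bijection, it suffices to show $|V(G)|=k$, because a $k$-vertex graph with chromatic number $k$ is $K_k$. All the substance then sits in the rigidity claim $|V(G)|\le k$, which you prove by the same $xLtLy$ induction the paper uses for edges. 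Your route has a real advantage in the bookkeeping you flag as the main obstacle: you only need \emph{equality} constraints to survive the embedding of $S_{k-1}$ into $S_k$, and equalities are monotone under palindrome extension (a boundary-maximal palindrome that grows only gains constraints). The paper's route needs the \emph{inequality} edges of the standalone $S_k$ to survive, which requires noting that every such edge is derived from an interior maximal palindrome whose bordering characters are unchanged -- a point its proof leaves implicit. As a byproduct, your argument makes the vertex count $|V(G)|=k$, and hence uniqueness of the reconstruction up to renaming of symbols, explicit, whereas the paper asserts this separately.
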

The proof is at~\cref{p:l:clique}.

\begin{lemma}
    The palindromic fingerprint reconstruction algorithm of~\cite{tomohiro:10} produces a minimal alphabet.
\end{lemma}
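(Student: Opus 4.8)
The plan is to reduce the statement to a claim about greedy graph coloring on the restriction graph $G$ of $(F,n)$. By the restriction-graph correspondence, every valid reconstruction of $(F,n)$ is a proper coloring of $G$ and vice versa, so the minimal alphabet size equals the chromatic number $\chi(G)$, which is exactly the reconstruction degree $\sigma(F,n)$ of \cref{d:rec-deg}. Since the algorithm of~\cite{tomohiro:10} always outputs a valid string, its output is a proper coloring of $G$ and therefore uses at least $\sigma(F,n)$ distinct characters. Hence the entire content of the lemma is the reverse inequality: the greedy algorithm uses \emph{at most} $\sigma(F,n)$ characters.

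First I would recast the algorithm as greedy coloring. Reading the positions left to right and assigning to each the lexicographically smallest admissible character is precisely greedy coloring of the vertices of $G$ in the order of their first occurrence, where a character is \emph{admissible} at a vertex $v$ iff no already-colored neighbor of $v$ carries it (equality constraints being resolved by the fact that the whole fingerprint, hence every forced equality, is known in advance). Consequently the algorithm opens a brand-new character at a vertex $v$ only when $v$ is adjacent, in $G$, to already-colored vertices bearing every character opened so far. Thus the number of characters produced equals $1+\max_v d^-(v)$, where $d^-(v)$ counts the distinct colors on the previously-processed neighbors of $v$. It therefore suffices to show that whenever the algorithm is forced to open its $t$-th character at some vertex $v$, the fingerprint already forces $\sigma(F,n)\ge t$.

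The key step is to exploit the nested structure of palindromic constraints to turn ``$v$ is adjacent to $t-1$ differently-colored vertices'' into ``$G$ contains a clique of size $t$.'' Here I would use the island/representative decomposition developed for the general case: by \cref{l:isl} the alphabet is governed by the heaviest island, and by \cref{c:add-two} together with \cref{l:red-fin} each island reduces, without changing its reconstruction degree, to its heaviest palindrome plus at most two bordering positions, while inequality relations in a palindromic fingerprint can be created only at palindrome borders. I would then argue by induction on the recursive $xLtLy$ structure of \cref{t:main}: when a new character must be introduced, the forced-distinct positions witnessing it are exactly the bordering positions identified in \cref{l:add-one} and \cref{l:red-fin}, and the analysis there shows these positions are pairwise in inequality relation. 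Combined with \cref{l:clique}, which already establishes that the extremal fingerprint's restriction graph is a clique, this yields a clique of size $t$ in $G$, so $\chi(G)=\sigma(F,n)\ge t$. Matching the greedy bound $1+\max_v d^-(v)$ against this clique lower bound gives that the algorithm uses exactly $\sigma(F,n)$ characters.

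The main obstacle I anticipate is the clique step: greedy coloring is not optimal on general graphs, so the entire burden falls on proving that the $t-1$ previously-colored neighbors forcing a new character are \emph{mutually} adjacent, not merely each adjacent to $v$. This mutual adjacency is not automatic and must be extracted from the palindromic structure, specifically from the fact (used throughout the crossing-palindrome lemmas) that a single position can accrue $t-1$ independent inequality constraints only by bordering a palindrome whose interior already realizes $t-1$ distinct characters in pairwise conflict. Making this precise, and verifying that it survives the merging of crossing palindromes in \cref{l:add-one}, is where the real work lies; in particular the left-to-right processing order of the algorithm must be shown compatible with the elimination order implicit in that recursive reduction.
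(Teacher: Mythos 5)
Your framing is correct and matches the paper's implicit setup: validity of the output gives the lower bound, so the entire lemma reduces to showing that the left-to-right greedy reconstruction uses at most $\sigma(F,n)$ characters. The gap is exactly the step you yourself flag as ``where the real work lies'': converting ``$v$ has previously-colored neighbors in $t-1$ distinct colors'' into a clique of size $t$ in the restriction graph. This mutual-adjacency claim is never established in your proposal, and it cannot be obtained the way you suggest. \cref{l:clique} applies only to the extremal fingerprints $F_k$ (minimal-length fingerprints of reconstruction degree $k$), not to an arbitrary input fingerprint, so invoking it ``combined with'' the reduction lemmas does not produce a clique inside the restriction graph of a general $(F,n)$; the reductions of \cref{l:red-fin} and \cref{c:add-two} modify the fingerprint while preserving its degree --- they do not exhibit pairwise inequality constraints among the original witnessing positions. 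Moreover, a clique witness is strictly stronger than what the lemma needs: it would amount to a perfection-type statement (chromatic number equals clique number for every restriction graph under the left-to-right order) that the paper never claims and nowhere proves. Your plan therefore rests its central step on a statement that is both unproven in your write-up and not derivable from the lemmas you cite.

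The paper's own proof avoids cliques entirely and is lighter. It reuses \cref{l:add-one} directly: that lemma shows that, when crossing palindromes are reconstructed in sequence, new characters can be forced only at positions bordering already-reconstructed palindromes, and that such introductions never push the total above the reconstruction degree of a single palindrome. The paper then observes that this analysis is order-invariant --- it holds whether one starts from the heaviest palindrome or from the first index --- and that the algorithm of~\cite{tomohiro:10} is exactly the ``start from the first index, introduce a new character only when forced'' instantiation of that setting, hence its output size is bounded by $\sigma(F,n)$. If you want to rescue your proposal, the missing piece is not clique extraction but precisely this order-invariance claim: show that the bound of \cref{l:add-one} on where and how often new characters can appear does not depend on processing the palindromes from the heaviest outward, and then observe that the greedy algorithm introduces a character only when every existing character is excluded by a derived constraint.
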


\begin{proof}
    At~\cref{l:add-one} we analyzed the positions that can introduce new characters to a reconstruction. We showed that these positions only occur on the edge of other palindromes, and we showed they cannot increase the overall reconstruction degree of a single palindrome. Therefore, instead of analyzing the reconstruction algorithm as if it started from the heaviest palindrome, the analysis can start from the first possible index, and~\cref{l:add-one} will still hold, and since the reconstruction algorithm~\cite{tomohiro:10} reconstructs the string from the first index, and since it only introduces new characters when necessary, its settings are the same as~\cref{l:add-one} and it outputs a minimal alphabet.
\end{proof}

\begin{lemma}[Open Problem by~\cite{tomohiro:10}]
    Finding a string that has a given set of maximal palindromes $(F,n)$ and contains exactly $k$ characters, where $k$ is a predefined parameter can be achieved in time proportional to the construction of the restriction graph.
\end{lemma}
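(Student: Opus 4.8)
The plan is to exploit the bijection, built into the restriction graph, between reconstructions of $(F,n)$ and proper colorings of $G=(V,E)$: under this correspondence the number of distinct characters of a reconstructing string equals the number of distinct colors used on the vertices of $G$. Hence the question ``is there a preimage of $F$ using exactly $k$ characters?'' becomes ``is there a proper coloring of $G$ with exactly $k$ colors?'', and producing the string amounts to reading off such a coloring. First I would identify the feasible range of $k$. The minimum is $\sigma(F)$, which by the optimality of the algorithm of~\cite{tomohiro:10} (established above) is computed in linear time. The maximum is $|V|$, attained by assigning every vertex its own color; this is a proper coloring precisely when $G$ has no self-loop, i.e.\ exactly when $F$ is reconstructible at all. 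Thus no preimage can use more than $|V|$ characters, and none can use fewer than $\sigma(F)$.

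The combinatorial heart is that every value in the interval $[\sigma(F),|V|]$ is realizable. I would prove this by a color-class splitting argument: given any proper coloring that uses $t<|V|$ colors, some color class contains at least two vertices; choosing one such vertex and recoloring it with a brand-new color $t+1$ keeps the coloring proper, since the new color conflicts with no neighbor, and raises the color count to exactly $t+1$. Starting from the optimal coloring with $\sigma(F)$ colors and iterating this step $k-\sigma(F)$ times yields a proper coloring with exactly $k$ colors whenever $\sigma(F)\le k\le |V|$; together with the two endpoint bounds this shows the achievable alphabet sizes are precisely the integers of $[\sigma(F),|V|]$.

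The algorithm then is: construct $G$; run the algorithm of~\cite{tomohiro:10} to obtain an optimal coloring and the value $\sigma(F)$; if $k<\sigma(F)$ or $k>|V|$, report that no such string exists; otherwise perform $k-\sigma(F)$ splits and translate the resulting coloring back to a string through the bijection. Building $G$ dominates the running time; the greedy coloring is linear, the feasibility test is $O(1)$ once $|V|$ is known, the splitting phase performs at most $|V|-\sigma(F)=O(n)$ single-vertex recolorings --- each $O(1)$ if we keep the color classes and their sizes in a simple list --- and reading off the final string is $O(n)$. Hence the total cost is proportional to constructing the restriction graph, as claimed.

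The routine parts (the endpoint bounds, the linear reconstruction via~\cite{tomohiro:10}, the final readout) are immediate from the machinery already developed; the only step needing care is the splitting argument, and there the one thing to verify is that a class of size at least two always exists while $t<|V|$ and that recoloring with a fresh color cannot violate any inequality constraint. Both are straightforward: the pigeonhole principle guarantees the oversized class, and freshness of the new color guarantees properness. The main practical obstacle is purely bookkeeping --- maintaining the color classes so that each split, and the final translation to a concrete character for every index, runs within the stated time bound rather than rescanning $G$ at every step.
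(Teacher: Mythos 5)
Your proposal is correct and follows essentially the same route as the paper's own proof: obtain the minimum via the algorithm of~\cite{tomohiro:10}, note the maximum is $|V|$, and interpolate by repeatedly recoloring one vertex of a duplicated color class with a fresh color. Your version is somewhat more careful (explicit pigeonhole justification and running-time bookkeeping), but the underlying argument is identical.
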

\begin{proof}
    Let us build the restriction graph $G$ of the input fingerprint $(F,n)$ and run the algorithm by~\cite{tomohiro:10} to achieve a coloring of $G$ using a minimal number of colors. The maximal number of characters in the resulting string $S$ is the number of vertices in $G$. The minimal number is the number of distinct characters returned by the algorithm. Any number not in this range is not achievable. To increase the number from the minimal number to the maximal, we can take two vertices that were colored with the same color, and color one of them with a new color. When all vertices have a distinct color, the maximal number is reached.
\end{proof}

\begin{corollary}
    Let $(F,n)$ be an arbitrary palindromic fingerprint. Let $G$ be the restriction graph of $F$. The chromatic number of the graph $\chi(G)$ satisfies $\chi(G) \le \log(n-1)+2$.
\end{corollary}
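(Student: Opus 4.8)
The plan is to reduce this to the main theorem by identifying the chromatic number of the restriction graph with the reconstruction degree of the fingerprint. By the defining property of the restriction graph there is a bijection between reconstructions of $(F,n)$ and proper colorings of $G$; in particular a reconstruction over an alphabet of size $c$ corresponds to a proper coloring with $c$ colors, and conversely. Hence the minimum number of colors needed to color $G$ equals the minimum alphabet size needed to reconstruct $(F,n)$, that is, $\chi(G) = \sigma(F,n)$.

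First I would set $k = \sigma(F,n) = \chi(G)$ and read \cref{t:main} in its contrapositive form. The theorem gives $IPF(k) = 2^{k-2}+1$ for $k \ge 2$ as the shortest length of any fingerprint that cannot be reconstructed with fewer than $k$ characters. By the very definition of $IPF$ as a minimal-length witness, every fingerprint of length strictly less than $IPF(k)$ admits a reconstruction with at most $k-1$ characters. Taking the contrapositive: since our fingerprint $(F,n)$ has reconstruction degree exactly $k$, and therefore cannot be reconstructed with fewer than $k$ characters, its length must satisfy $n \ge IPF(k) = 2^{k-2}+1$.

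From $n \ge 2^{k-2}+1$ the claimed bound follows by a one-line manipulation: $n-1 \ge 2^{k-2}$, hence $\log(n-1) \ge k-2$, i.e.\ $k \le \log(n-1)+2$. Substituting $k = \chi(G)$ yields $\chi(G) \le \log(n-1)+2$, as required.

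The only remaining care is the degenerate regime, which I would dispatch separately: when $k=1$ the restriction graph is edgeless, $\chi(G)=1$, and the inequality holds for every $n \ge 2$ (the only range in which $\log(n-1)$ is meaningful). I do not anticipate a genuine obstacle, since the whole content is already packed into \cref{t:main}; the work is merely to invert the closed form of $IPF$ and to record that $\chi(G)$ and $\sigma(F,n)$ coincide. The one point to watch is the direction of the inequality when inverting $IPF$ — one must use that $\sigma(F,n) \ge k$ \emph{forces} $n \ge IPF(k)$, not the reverse — but this is immediate from $IPF(k)$ being defined as the shortest such witness.
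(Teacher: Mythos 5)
Your proposal is correct and follows exactly the derivation the paper intends: the identification $\chi(G)=\sigma(F,n)$ via the coloring--reconstruction bijection (\cref{l:ctos}), followed by inverting the closed form $IPF(k)=2^{k-2}+1$ from \cref{t:main} to get $n\ge 2^{k-2}+1$ and hence $k\le\log(n-1)+2$. The paper states this corollary without an explicit proof (it is also announced in the introduction as an immediate consequence of the main theorem), and your argument, including the careful direction of the inversion and the degenerate case $k=1$, is precisely the intended one.
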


\begin{corollary}
    For every string $S$ over finite alphabet $\Sigma$, $k=|\Sigma|$, $PF(S)$ cannot have $PF(S_{k+1})$ as a subset with regard to any offset.
\end{corollary}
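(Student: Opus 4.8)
The plan is to reduce the statement to the defining property of $S_{k+1}$ via a single clean contradiction. I read the hypothesis ``$PF(S)$ has $PF(S_{k+1})$ as a subset with regard to offset $t$'' as asserting the existence of an offset $t$ for which the length-$|S_{k+1}|$ window $W = S[t+1 \ldots t+|S_{k+1}|]$ realizes the fingerprint of $S_{k+1}$, i.e.\ $PF(W) = PF(S_{k+1}) = F_{k+1}$ after shifting indices by $-t$. The whole proof then rests on the tension between two facts: such a window is a standalone preimage of $F_{k+1}$ and therefore expensive, while at the same time its letters are borrowed from the $k$-symbol alphabet of $S$.

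First I would record the cost of $F_{k+1}$. By \cref{t:main} the fingerprint $F_{k+1}=PF(S_{k+1})$ is, by construction, the shortest fingerprint that cannot be reconstructed with fewer than $k+1$ characters, so its reconstruction degree (\cref{d:rec-deg}) is exactly $\sigma(F_{k+1})=k+1$. Equivalently, \cref{l:clique} tells us that the restriction graph of $F_{k+1}$ is a clique on $k+1$ vertices, whose chromatic number is $k+1$; since the definition of the restriction graph gives a bijection between reconstructions of $(F_{k+1},|S_{k+1}|)$ and proper colourings of that graph, every preimage of $F_{k+1}$ must use at least $k+1$ distinct symbols. In particular the window $W$, being such a preimage, satisfies $|\Sigma_W|\ge k+1$.

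Next I would close the contradiction. Because $W$ is a contiguous substring of $S$, its alphabet is contained in that of $S$, so $\Sigma_W\subseteq\Sigma$ and hence $k=|\Sigma|\ge|\Sigma_W|\ge k+1$, which is impossible. As no offset $t$ can produce such a window, $PF(S)$ cannot realize $PF(S_{k+1})$ at any offset, proving the corollary.

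The hard part is not this final three-line argument but justifying that the hypothesised occurrence really forces $W$ to reproduce \emph{all} of the constraints of $F_{k+1}$, in particular the inequality constraints that make the restriction graph a clique. As the recursive structure of $S_{k+1}=xLtLy$ shows, the listed maximal palindromes supply the equalities and the maximality inequalities (for example the outermost border inequality $x\ne y$), but a large share of the clique's edges arises from the \emph{absence} of further palindromes: already for $S_4=12324$ the single listed descriptor $(2,4)$ yields only one equality and one inequality, and it is the absent palindromes that lift the restriction graph to $K_4$. Thus the delicate step is to verify that within the window these absence constraints survive—equivalently, to rule out that truncating a longer palindrome of $S$ at the window boundary introduces a spurious maximal palindrome inside $W$ that would alter $PF(W)$ or merge two clique vertices. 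Establishing that the full clique $K_{k+1}$ genuinely embeds in the restriction graph of $PF(S)$ at offset $t$, rather than merely a subset of its equality/maximality edges, is where the real work lies; once that is secured, the chromatic-number bound and the subalphabet inclusion finish the proof.
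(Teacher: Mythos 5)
Your core contradiction is exactly the intended derivation: the paper states this corollary without an explicit proof, and the argument it has in mind is the one you give --- by \cref{t:main} and \cref{l:clique} the restriction graph of $F_{k+1}=PF(S_{k+1})$ is a clique on $k+1$ vertices, the coloring--reconstruction bijection of \cref{l:ctos} forces any assignment of symbols satisfying the constraints derived from $F_{k+1}$ to use at least $k+1$ distinct symbols, and since the window $W$ is a substring of $S$ we get $k=|\Sigma|\ge|\Sigma_W|\ge k+1$. So your three-line argument is the whole proof, in the paper's spirit.

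However, the step you defer at the end --- whether the hypothesised occurrence really transfers the \emph{absence} constraints --- is not a technicality you may leave open: it decides whether the statement is true at all, and your instinct that it is the crux is correct. Read literally against the paper's definition of $PF$ (which lists only non-trivial maximal palindromes), the corollary is \emph{false}: take $k=3$ and $S=12323$, whose fingerprint is $PF(S)=\{(2,4),(3,5)\}$; it contains $PF(S_4)=PF(12324)=\{(2,4)\}$ as a subset at offset $0$, yet $S$ uses only three symbols. The extra maximal palindrome $(3,5)$ destroys precisely the inequality $S[3]\neq S[5]$ that the clique $K_4$ needs. The statement therefore has to be read center-wise, in line with the paper's remark in \cref{s:pre} that every center carries exactly one maximal palindrome once trivial ones are counted: an occurrence of $PF(S_{k+1})$ at offset $t$ means that for \emph{each} center $c$ of $S_{k+1}$, the maximal palindrome of $S$ at center $c+t$ is the shift of the maximal palindrome of $S_{k+1}$ at $c$, trivial where trivial. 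Under that reading the gap you worry about closes immediately: per-center uniqueness pins down every center of the window, so no spurious palindrome can arise at any window center, every equality and every maximality inequality derived from $F_{k+1}$ holds verbatim in $S$ (shifted), and in fact $PF(W)$ equals the shifted $F_{k+1}$ --- truncation at the window boundary is harmless because the pinned palindromes already fit inside the window. Note also that you never need the full equality $PF(W)=F_{k+1}$: constraint transfer alone already makes the window a proper coloring of $K_{k+1}$, which is all the contradiction uses. So the fix is to state the correct reading of the hypothesis and invoke per-center uniqueness, not to carry out a boundary-truncation analysis.
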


More specifically, we know that $S_5=512131214$ is only 9 characters long, but its fingerprint cannot appear in any fingerprint describing a DNA sequence, at any offset.

\section{Conclusion and Future Work}
We have defined a graph representation of palindromic fingerprints, the restriction graph, where the chromatic number of the graph equals the alphabet size of the preimage of a given palindromic fingerprint.

We have also defined the combinatorial structure of the shortest preimage for a valid palindromic fingerprint over an alphabet of size $k$.

While this study has provided valuable insights into palindromic fingerprints, several avenues remain unexplored. Future work on the topic includes:
\begin{enumerate}
    \item Finding an efficient algorithm to construct the restriction graph.
    \item Exploring the minimal presentation of a palindromic fingerprint as a method of data compression.
\end{enumerate}

\bibliographystyle{plainurl}
\bibliography{paper} 

\newpage
\appendix

\section{Graphs and Reconstructions}
\begin{claim}\label{c:colnp}
    Reconstructing a general string with positive and negative constraints over a minimal alphabet is NP-hard, even when negative constraints are limited to the form $A[i]\neq A[j]$.
\end{claim}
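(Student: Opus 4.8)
The plan is to reduce the decision version of graph coloring to minimal-alphabet string reconstruction, exploiting exactly the kind of correspondence between colorings and reconstructions already used for the restriction graph. First I would recall that deciding whether a graph $G=(V,E)$ admits a proper coloring with at most $k$ colors is NP-hard~\cite{GJS:74}. Given such an instance $(G,k)$, I would build a reconstruction instance as follows: allocate one string position for each vertex $v\in V$, so the string has length $n=|V|$, and for every edge $(u,v)\in E$ emit a single negative constraint $A[u]\ne A[v]$. No positive constraints are needed, and every negative constraint has the restricted form $A[i]\ne A[j]$, matching the hypothesis of the claim. This transformation is clearly computable in polynomial time.

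The heart of the argument is the correspondence between satisfying assignments and proper colorings. A reconstruction of this instance is precisely an assignment $A\colon V\to\Sigma$ of characters to positions respecting all inequalities, that is, a map with $A[u]\ne A[v]$ whenever $(u,v)\in E$. Identifying characters with colors, such an assignment is literally a proper coloring of $G$, and the number of distinct characters it uses equals the number of distinct colors. I would verify both directions explicitly: a reconstruction over an alphabet of size at most $k$ yields a proper $k$-coloring, and conversely any proper $k$-coloring, read as a character assignment, is a valid reconstruction. Hence a reconstruction over an alphabet of size at most $k$ exists if and only if $\chi(G)\le k$, so the minimal alphabet size of the constructed instance equals $\chi(G)$.

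I would then conclude that the reduction $(G,k)\mapsto(\{A[u]\ne A[v] : (u,v)\in E\},\,k)$ transfers the NP-hardness of coloring to the reconstruction problem over a minimal alphabet, even under the stated restriction to length-one negative constraints, which proves the claim.

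I do not anticipate a genuine obstacle, since the reduction is direct; the only point needing a line of care is confirming that the \emph{string} setting imposes no implicit relations beyond those we list, so that the positions behave as free slots whose only couplings are the constraints we add. Establishing this ensures that an arbitrary edge set is realizable and that the minimal-alphabet decision coincides exactly with the chromatic-number decision, closing the reduction.
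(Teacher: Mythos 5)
Your proposal is correct and uses exactly the same reduction as the paper: one position per vertex, a negative constraint $A[u]\neq A[v]$ for each edge, no positive constraints, and the observation that reconstructions correspond bijectively to proper colorings so the minimal alphabet size equals $\chi(G)$. The only difference is cosmetic --- you phrase it via the decision version with parameter $k$, while the paper argues directly about minimization --- so no further comparison is needed.
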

\begin{proof}
    Let us show a reduction from graph coloring: Given a graph $G=(V,E)$, for every $(v_i,v_j)\in E$, we add the restriction $T[i] \neq T[j]$. We add no positive restrictions to the set of restrictions. 

    Any string $S$ satisfying our restrictions is a valid coloring, with $v_i$ being colored with $S[i]$, and we never have two neighboring vertices with the same color, because we have the restriction $S[i]\neq S[j]$ for every edge $(v_i,v_j)$. For similar reasons, every coloring of $G$ is a possible reconstruction of the given restriction set. Hence, the alphabet size of the reconstructed string is equal to the number of colors, and hence, reconstructing using a minimal alphabet is equivalent to coloring with a minimal number of colors and is NP-hard.
\end{proof}

\subsection{Palindromic Fingerprint Reconstruction and the Restriction Graph}\label{ss:RG}

The {\em restriction graph} $G$ of a palindromic fingerprint $(F,n)$, is a graph whose coloring is bijective to a reconstruction of $(F,n)$. \\
The equivalence between restriction graph coloring and palindromic fingerprint reconstruction gives us trivial bounds on the number of labels required to reconstruct a given palindromic fingerprint - the lowest possible number is the chromatic number of the graph $\chi(G)$, and the highest possible number is the number of nodes in $G$, $|V|$.

Before starting, we remind the reader that Manacher's algorithm \cite{man:75} finds all maximal palindromes in a string in linear time, hence converting a string $S$ to its corresponding palindromic fingerprint $PF(S)$ is done in linear time.


Let $(i,j) \in F$. The pair $(i,j)$ imposes a set of equality and inequality constraints that a source text $S$ must satisfy. Namely, for $S[i\ldots j]$ to be a palindrome, we must have $S[i+r] = S[j-r]$ for every $r\in \{0\ldots (j-i)\}$. Furthermore, for $S[i\ldots j]$ to be a maximal palindrome, we must have $S[i-1] \neq S[j+1]$ (if these indices are well defined). We say that these constraints are {\em derived} from the pair $(i,j)$. We say that a constraint is derived from $F$ if it is derived from one of the pairs in $F$.

\begin{definition}[Equality Graph]\label{d:eqg}
The equality graph of a fingerprint $(F,n)$ denoted as $G_{=}(F,n)=(V, E)$ is an undirected graph defined as follows:\\
$V=\{i\}_{i=1}^n$, and $(i, j) \in E \iff S[i] = S[j]$ is a constraint derived from $F$, where $1\leq i<j\leq n$. 
\end{definition}

The following directly follows from the definition of $G_{=}(F,n)$.
\begin{observation}
Let $A_1,A_2, \ldots ,A_\ell$ to be the connected components of $G_{=}(F,n)$. Every source text $S$ with $PF(S) = F$ must have $S[i] = S[j]$ if $i,j \in A_k$, for any $1\leq k \leq\ell$.
\end{observation}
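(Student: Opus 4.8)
The plan is to prove the statement as an immediate consequence of two facts: that every edge of $G_{=}(F,n)$ encodes an equality that any preimage must satisfy, and that equality is transitive along paths within a connected component. So I would first reduce the claim at the level of components to a claim at the level of edges, and then glue edge-constraints together using connectivity.

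First I would observe that any source text $S$ with $PF(S)=F$ satisfies \emph{all} equality constraints derived from $F$. Indeed, each pair $(i,j)\in F$ is by hypothesis a maximal palindrome of $S$, so $S[i+r]=S[j-r]$ holds for every $r\in\{0,\ldots,j-i\}$; these are precisely the derived equality constraints. By \cref{d:eqg}, an edge $(i,j)\in E(G_{=}(F,n))$ exists exactly when $S[i]=S[j]$ is such a derived constraint, so for every edge $(i,j)$ we have $S[i]=S[j]$. This handles the base case of adjacent vertices.

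Next I would lift this from edges to components. Fix a component $A_k$ and any two indices $i,j\in A_k$. By definition of a connected component, there is a path $i=v_0,v_1,\ldots,v_m=j$ in $G_{=}(F,n)$ with each $(v_t,v_{t+1})\in E$. Applying the edge-level fact along the path gives $S[v_0]=S[v_1]=\cdots=S[v_m]$, and a trivial induction on $m$ (using transitivity of equality) yields $S[i]=S[j]$. Since $i,j\in A_k$ were arbitrary, $S$ is constant on each $A_k$.

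There is no genuine obstacle here; the statement is essentially a restatement of the definition together with transitivity, which is exactly why the authors preface it with ``directly follows.'' The only point that requires a moment's care is the reduction step, namely checking that the derived equality constraints are necessary conditions met by every preimage (not merely constraints imposed by the \emph{reconstruction algorithm}); this is immediate from the definition of a maximal palindrome, since any preimage realizes each $(i,j)\in F$ as an actual palindromic substring.
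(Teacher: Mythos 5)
Your proof is correct and matches the paper's intent exactly: the paper offers no explicit proof, stating only that the observation ``directly follows from the definition of $G_{=}(F,n)$,'' and your argument simply spells out that implied reasoning (edges encode derived equality constraints satisfied by any preimage, and connectivity plus transitivity of equality propagates these along paths within a component). Nothing is missing and no different route is taken.
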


We now define the restriction graph of $(F,n)$.
\begin{definition}[Restriction Graph]\label{d:rsg}
Let $G'$ be the restriction graph of a palindromic fingerprint $A$, i.e., $G' = G_{=}(F,n) = (V', E')$. The \textrm{restriction graph} $G=(V, E)$ is defined as follows:
\begin{itemize}
    \item $V=\{A_k\}_{k=1}^\ell$, where $A_k$ is a connected component of $G'$.
    \item $E=\{(A_{k_1}, A_{k_2}) ~|~ \exists i \in A_{k_1} \exists j \in A_{k_2} ~ s.t. ~ (i+1,j-1)\in F\}$. In words, it means that index $i$ cannot have the same symbol as index $j$ because of a palindromic restriction.
\end{itemize}
\end{definition}

\begin{observation}
    Let $G=(V,E)$ be a restriction graph for a palindromic fingerprint $(F,n)$. The vertices $V=(v_1,v_2,\ldots)$ form a partition of $[n]=\{1,2,\ldots,n\}$, i.e., $\forall v_i, v_i\subseteq [n]$, $\bigcup V = [n], \forall v_i\neq v_j,\text{ } v_i \cap v_j = \emptyset$.
\end{observation}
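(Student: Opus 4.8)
The plan is to reduce the statement to the standard fact that the connected components of any finite undirected graph partition its vertex set, and then to check that the underlying vertex set in question is exactly $[n]$. First I would recall from \cref{d:eqg} that the equality graph $G' = G_{=}(F,n)$ has vertex set $\{i\}_{i=1}^n = [n]$, and from \cref{d:rsg} that the vertices of the restriction graph $G$ are precisely the connected components $A_1,\ldots,A_\ell$ of $G'$. Hence it suffices to prove that the connected components of $G'$ partition $[n]$.

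Next I would make the connectivity structure explicit by introducing the reachability relation $\sim$ on $[n]$, where $i\sim j$ iff there is a path from $i$ to $j$ in $G'$, with every vertex reachable from itself by the trivial path. I would observe that $\sim$ is an equivalence relation: reflexivity is the trivial path, symmetry follows because the equality constraints $S[i]=S[j]$ are symmetric so $G'$ is undirected and paths can be reversed, and transitivity follows by concatenating paths. The connected components $A_k$ are then exactly the equivalence classes of $\sim$, which is the bridge that turns the graph-theoretic statement into a purely set-theoretic one.

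Then I would verify the three conditions listed in the observation directly from the equivalence-class description. Each $A_k$ is a nonempty subset of $[n]$, since it is an equivalence class of a relation defined on $[n]$ and contains at least its generating representative; this gives $v_i\subseteq[n]$. Every index $i\in[n]$ lies in its own class $[i]_\sim$, so $\bigcup_k A_k = [n]$. Finally, two distinct classes are disjoint by the standard argument: if $A_{k_1}\cap A_{k_2}\neq\emptyset$, a common element is $\sim$-related to every member of each, forcing $A_{k_1}=A_{k_2}$ by transitivity and symmetry, a contradiction.

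The only subtlety I anticipate is bookkeeping for isolated indices. I must ensure that the notion of connected component includes singletons, so that an index participating in no derived equality constraint still forms its own component $\{i\}$; otherwise such indices could be inadvertently omitted and $\bigcup V = [n]$ would fail. Once this is flagged, the statement is a direct instance of the equivalence-class/partition correspondence and needs no further machinery, so I do not expect any genuine obstacle beyond this careful reading of the definitions.
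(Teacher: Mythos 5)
Your proposal is correct and coincides with the paper's own (implicit) reasoning: the paper states this as an unproved observation that follows directly from \cref{d:rsg}, since the vertices of $G$ are by definition the connected components of the equality graph $G_{=}(F,n)$ on vertex set $[n]$, and connected components always partition the vertex set. Your elaboration via the reachability equivalence relation, including the remark that isolated indices form singleton components, is exactly the standard argument the paper takes for granted.
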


\begin{observation}\label{o:neqres}
If there is an edge $(A_{k_1}, A_{k_2})$, then every string $S$ reconstructing $(F,n)$ satisfies $\forall i\in A_{k_1}, \forall j\in A_{k_2}, S[i] \neq S[j]$.
\end{observation}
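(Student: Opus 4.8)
The plan is to reduce the universally quantified statement (over all $i \in A_{k_1}$ and $j \in A_{k_2}$) to a single witnessing pair supplied by the definition of the edge set, and then to propagate the resulting inequality across the two connected components using the equality property of $G_{=}(F,n)$.

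First I would unpack the hypothesis using Definition~\ref{d:rsg}. The edge $(A_{k_1}, A_{k_2}) \in E$ exists precisely because there are witnesses $i_0 \in A_{k_1}$ and $j_0 \in A_{k_2}$ with $(i_0+1, j_0-1) \in F$. I would then invoke maximality: since $S[i_0+1 \ldots j_0-1]$ is a maximal palindrome, the negative constraint $S[(i_0+1)-1] \neq S[(j_0-1)+1]$, that is $S[i_0] \neq S[j_0]$, is derived from $F$. Hence every string $S$ reconstructing $(F,n)$ satisfies $S[i_0] \neq S[j_0]$ at these particular indices.

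Next I would promote this single inequality to all of $A_{k_1} \times A_{k_2}$. Fix arbitrary $i \in A_{k_1}$ and $j \in A_{k_2}$. Since $A_{k_1}$ is a connected component of $G_{=}(F,n)$ containing both $i$ and $i_0$, the earlier observation on connected components forces $S[i] = S[i_0]$ in every reconstruction; symmetrically $S[j] = S[j_0]$. Chaining these yields $S[i] = S[i_0] \neq S[j_0] = S[j]$, which is exactly the claim.

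I do not anticipate a genuine obstacle, as the statement is essentially a bookkeeping consequence of two established facts (the maximality-induced inequality and the component-wise equality). The only point requiring care is confirming that the inequality attached to the edge transports across the entire components rather than holding only at the witnessing indices $i_0, j_0$; this is precisely what the connected-component equality observation guarantees, so the chain $S[i] = S[i_0] \neq S[j_0] = S[j]$ closes the argument with no case analysis.
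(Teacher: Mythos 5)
Your proposal is correct and is precisely the intended justification: the paper states this as an unproved observation, and the natural argument is exactly your two-step one --- extract the witnessing pair $(i_0+1,j_0-1)\in F$ from Definition~\ref{d:rsg} to get the maximality constraint $S[i_0]\neq S[j_0]$, then propagate via the connected-component equality observation for $G_{=}(F,n)$ to obtain $S[i]=S[i_0]\neq S[j_0]=S[j]$ for all $i\in A_{k_1}$, $j\in A_{k_2}$. No gaps; the argument matches what the paper leaves implicit.
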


From ~\cref{d:rsg}, there is an edge between each pair of connected components which has an inequality constraint from $F$. The next step is to color the restriction graph such that there are no connected vertices colored in the same color.


\begin{lemma}\label{l:ctos}
Let $\psi$ be a coloring of the restriction graph $G$ on input $(F, n)$, one can reconstruct a string $S=\psi'(1)\psi'(2)\ldots \psi'(n)$, where $\psi'(i)=\psi(A_k) \iff i \in A_k$, and $A_k$ is a vertex of $G$.
Then, the translation from $G$ to $S$ is a bijection. 
\end{lemma}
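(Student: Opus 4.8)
The plan is to prove that the map $\psi \mapsto S = \psi'(1)\psi'(2)\cdots\psi'(n)$ is a bijection between proper colorings of the restriction graph $G$ and strings $S$ with $PF(S)=F$. A bijection claim naturally splits into three obligations: the map is well-defined (a proper coloring yields a string that actually reconstructs $F$), it is injective, and it is surjective. I would handle each in turn, and in fact it is cleanest to construct the inverse map explicitly and show the two compositions are identities, since that simultaneously settles injectivity and surjectivity.

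\textbf{Well-definedness.} First I would verify that $\psi'$ is a function: every index $i$ lies in exactly one connected component $A_k$ of $G_{=}(F,n)$ (the $A_k$ partition $[n]$ by the observation following Definition~\ref{d:rsg}), so $\psi'(i)=\psi(A_k)$ is unambiguous. Then I must check $PF(S)=F$, which means verifying all constraints derived from $F$. Every equality constraint $S[i]=S[j]$ derived from a pair in $F$ puts $i,j$ in the same component $A_k$ by Definition~\ref{d:eqg}, so $\psi'(i)=\psi(A_k)=\psi'(j)$, giving the equality. Every inequality constraint $S[i-1]\neq S[j+1]$ coming from maximality of a pair $(i,j)\in F$ corresponds, by Definition~\ref{d:rsg}, to an edge between the components containing $i-1$ and $j+1$; since $\psi$ is a proper coloring, those two components receive distinct colors, yielding the inequality. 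The one subtle point is the converse direction: I must argue that $S$ has \emph{no} maximal palindromes beyond those listed in $F$, i.e., that respecting exactly the derived constraints forces $PF(S)=F$ rather than merely $PF(S)\supseteq F$; this follows because each center's maximal palindrome is pinned down by the equalities (extending it) together with the boundary inequality (stopping it), and the restriction graph encodes precisely these.

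\textbf{Injectivity and surjectivity via an explicit inverse.} Given any string $S$ with $PF(S)=F$, define a candidate coloring by $\phi_S(A_k)=S[i]$ for any $i\in A_k$. This is well-defined because all indices in a single component $A_k$ are forced equal in $S$ (the observation after Definition~\ref{d:eqg}), and it is a proper coloring because any edge $(A_{k_1},A_{k_2})$ forces $S[i]\neq S[j]$ for $i\in A_{k_1}, j\in A_{k_2}$ by Observation~\ref{o:neqres}. Then I would check that $\phi_{(\cdot)}$ and $\psi\mapsto\psi'$ are mutually inverse: starting from $\psi$, forming $S$, and reading off $\phi_S(A_k)=S[i]=\psi(A_k)$ recovers $\psi$; conversely, starting from $S$, forming $\phi_S$, and reconstructing $S'[i]=\phi_S(A_k)=S[i]$ recovers $S$. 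These two computations give injectivity and surjectivity simultaneously.

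\textbf{Main obstacle.} The routine algebra (indices in a component share a color, edges force inequality) is immediate from the definitions. The genuinely delicate step is showing that respecting the derived equality/inequality constraints produces \emph{exactly} $F$ and not a proper superset of palindromes; I would address this by a center-by-center argument establishing that, for each of the $2n-1$ centers, the derived constraints determine the maximal palindrome at that center, so that no unlisted maximal palindrome can arise and no listed one can fail to be maximal. Once that is in hand, the bijection follows formally from the explicit inverse.
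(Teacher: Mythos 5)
Your proposal is correct and takes essentially the same approach as the paper: define the coloring-to-string map and its explicit inverse (reading $\phi_S(A_k)=S[i]$ off a preimage string), with well-definedness in both directions following from the equality-component partition and Observation~\ref{o:neqres}. You are in fact more careful than the paper's own proof, which silently assumes that satisfying the derived constraints (including those of the implicit trivial palindromes at uncovered centers) yields $PF(S)=F$ exactly rather than a superset — the subtlety you correctly flag and resolve with the center-by-center argument.
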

\begin{proof}
Converting a coloring to a parameterized string is straightforward. Let $\psi$ be the coloring of $G$. Define $\psi':V\to \mathbb{N}$ as $\forall v' \in V'~\forall v \in v', \psi'(v)=\psi(v')$. Now, $S=\psi(1)\psi(2)\ldots \psi(n)$ is a string, satisfying all the restrictions of $(F, n)$. $V'$ is the vertex set of the equality graph, i.e., each vertex represents an index. 

Converting a parameterized string to a coloring can be done similarly. Let $S$ be a preimage of $(F,n)$. $\forall A \in V', i \in A, \psi(i)=S[i]$. The previous definition is well defined, since by the construction of the graph, $\forall A \in V', \forall i, j \in A, S[i]=S[j]$. $\psi$ is a valid coloring, as shown by observation~\ref{o:neqres}.
\end{proof}

\begin{corollary}
\cref{l:ctos} implies that there is at least one preimage for a given palindromic fingerprint if and only if the restriction graph does not have self-loops.
\end{corollary}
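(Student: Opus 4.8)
The plan is to combine the bijection of \cref{l:ctos} with the elementary fact that a graph is properly colorable precisely when it has no self-loop. Since \cref{l:ctos} gives a bijection between the set of colorings of the restriction graph $G$ and the set of preimages of $(F,n)$, the existence of a preimage is equivalent to the existence of a coloring; it then remains to characterize colorability of $G$ by the absence of self-loops.

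First I would unwind the definition of a coloring: a map $\psi:V\to\mathbb{N}$ with $\psi(u)\neq\psi(v)$ for every edge $(u,v)\in E$. If $G$ has a self-loop $(v,v)\in E$, this condition demands $\psi(v)\neq\psi(v)$, which no map can satisfy, so $G$ has no coloring at all. Conversely, if $G$ is free of self-loops, a coloring always exists---for instance, assign every vertex its own distinct color, so that each edge $(u,v)$ with $u\neq v$ automatically gets $\psi(u)\neq\psi(v)$. This establishes that $G$ admits a coloring if and only if $G$ contains no self-loop.

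I would then chain the two equivalences. By \cref{l:ctos}, the map $\psi\mapsto S$ is a bijection of sets, so the set of preimages is nonempty exactly when the set of colorings is nonempty. Together with the colorability criterion above, this yields: a preimage of $(F,n)$ exists $\iff$ a coloring of $G$ exists $\iff$ $G$ has no self-loop, which is the assertion of the corollary.

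The only subtlety worth spelling out is the self-loop case. When $G$ does carry a self-loop, both the set of colorings and the set of preimages are empty, and the bijection of \cref{l:ctos} holds vacuously between two empty sets, so no separate argument is needed there. The main---and quite mild---point to get right is that one should read \cref{l:ctos} as a bijection of \emph{sets} rather than merely a construction from a fixed coloring, so that emptiness of one side transfers to the other; this is what lets the existence statement, and not just individual objects, pass through the correspondence.
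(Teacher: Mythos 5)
Your proposal is correct and matches the paper's intent: the paper states this corollary as an immediate consequence of \cref{l:ctos} with no separate proof, and your argument simply spells out that implication the natural way (bijection of colorings and preimages, plus the elementary fact that a graph is properly colorable exactly when it has no self-loop, witnessed by the all-distinct coloring). Nothing in your write-up deviates from or adds to what the paper's reasoning requires.
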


A detailed example can be found at~\cref{e:rsg}.





\section{Proof Supplementary Components}

\subsection{Proof Overview}\label{a:pov}
This section highlights the proof for~\cref{t:main}. 

The proof is divided into two parts. We first prove that a fingerprint that allows no {\em crossing palindromes} satisfies our requirements. Based on that lemma, we prove that the theorem is also true for a general string. 

To prove the lemma on restricted fingerprints that do not allow crossing palindromes, we use an inductive approach where we establish that an optimal string is formed by combining two smaller optimal strings. The form we assume for strings of length $\ge 3$ is $xLtLy$, where $L$ is a palindrome, all $x,t,y$ are unique characters, i.e., characters that do not occur in $L$, and both $xLt$ and $tLy$ are of the same recursive form.

In the first part of the proof, we introduce {\em palindromic islands}. A palindromic island is a palindrome not contained in a longer palindrome. Using the new definition, we prove a key lemma - that every two islands can be reconstructed almost independently, that is to say, the number of characters required to reconstruct a string is the number of characters required to reconstruct its {\em heaviest} island plus up to two additional characters, where heaviest refers to the island that requires more characters to reconstruct. Using the previous key observation, we analyze these two additional characters and prove that they must be introduced on the indices adjacent to the previously mentioned island.

After proving that the overall number of characters in the reconstructed string is the same as the number of characters in its heaviest island and the two surrounding indices, we conclude that an optimal fingerprint (as defined at~\cref{d:optf}) must contain at most one island of length $>1$. \\
Knowing that there can only be one island of length $>1$, we show that there must be two other islands of length one and that both of these new islands present a new character to the reconstructed string. The overall structure of an optimal fingerprint at this point is $xPy$, where $x\neq y, x,y\notin P$, and $P$ is a palindrome. \\
Lastly, we show that $P$ can be rewritten as $P'tP'$, where $P'$ is a palindrome and $t \notin P'$, supporting the induction hypothesis. 

We conclude the first part of the proof by showing that the reconstructed string has a unique fingerprint and that its restriction graph is a clique and, hence, has a unique parameterized reconstruction.

The second part of the proof generalizes the theorem to a general string. Similar to the first proof, we will subdivide the string into islands and the islands into locally maximal palindromes. Almost identical to the first part, we will show that different islands can be reconstructed almost independently. Later, We will discuss the prerequisites of a palindrome to introduce a new character to the reconstruction and will show that some crossing palindromes don't change the overall reconstruction degree. The last part shows that reconstructing two adjacent crossing palindromes does not introduce many more characters than reconstructing the heavier of the two and will generalize it to any sequence of crossing palindromes.

In conclusion, we will show that the number of characters required to reconstruct a general string is very close to the number of characters necessary to reconstruct the heaviest palindrome within, eventually reducing the problem to that of a string with no crossing palindromes and concluding the proof.

Some of the lemmas are rather technical and are left in the appendix.

Before starting the proof, we will introduce some definitions that will be used throughout the proof.

\subsection{Technical Proofs}

\begin{proofcount}[Proof for~\cref{l:odd}]\label{p:odd}
Assume that $|P|$ is even. Denote the even-sized palindromic descriptor $P$ as $LttR$, where $L,R,t$ are symbols representing indices of $P$.

Let $P'$ be the longest maximal sub-palindrome of $P$ that contains the middle symbol $t$. $P'$ might be of length one (i.e., containing only $t$). Let us rewrite $P=L'P'P'R'$. We can reduce $P$ to $P''=L'P'R'$, and $P''$ has an isomorphic restriction graph - all inequalities are kept, as $L$ can only share palindromic restrictions with the first instance of $P'$, and $R$ with the second instance. As a result, squashing them into one instance can not lift any labeling restriction on either $L,P'$ or $R$.

Also, there cannot be a palindrome $L'pref(P')$ where $|pref(P')|>|P'|/2$, as it would imply that $P'$ has an internal crossing palindrome. Because the longest possible palindrome prefix is at most $|P''|/2$, it is kept as a non-crossing palindrome after the reduction.

Thus, we found $P''$ with the same restriction graph; therefore, $PF(P)$ is not optimal. 
\end{proofcount}

\begin{proofcount}[Proof for~\cref{l:uniqedeg}]\label{p:l:uniqedeg}
    Because of the maximality of $P$, $x\neq y$.

    If $x\in P$, then the form $xP$ is not optimal, as it could be reduced by one character to the string $P$ without losing any restrictions. The same can be claimed for $xPy$, where $x,y \in P$. \\
    We now look at the case where $x \in P$ but $y \notin P$.

    We observe that $x$ is the only possible character in $P$ that could have been chosen because if there were two options, we could use the second for $y$.

    By~\cref{l:intpal}, we claim that $L=R$ must be palindromes. The number of distinct characters labeled for $xLt$ cannot go below $k-1$, but we assumed that $F_{k-1}$ is both optimal and unique, hence $xLt$ is either longer than $S_{k-1}$ or equal to it, which implies $x,t\notin L$, and that is a contradiction.
\end{proofcount}

\begin{proofcount}[Proof for~\cref{c:gstruc}]\label{l:c:gstruc}
    As we have seen at~\cref{l:np2}, when $|\Sigma|\ge 5$, the optimal string can be represented with one island and at most one adjacent character on each side, leaving us with the following three possible structures to examine: \\
    $xPy,xP$ and $P$. In all cases, $x,y\notin P$, due to~\cref{l:uniqedeg}. 

    Let us assume $xP$ is possibly the optimal form. Since $x\notin P$ and because the reconstruction is optimal, we learn that we could not have chosen $x$ to be equal to any character $c\in P$. Rewrite $P=LtR$, we know that all palindromic inequality restrictions on $x$ can be derived from $Lt$ as well, which disqualifies $xP$ from being optimal. \\
    Similarly, let us assume $P$ is the optimal form. Let us split $P=LtR$, due to the palindromic fingerprint not having crossing palindromes, rewriting $P'=Lt$ will not change the restrictions for any character in $Lt$, hence, $P$ is not optimal.
\end{proofcount}

\begin{proofcount}[Proof for~\cref{l:isl}]\label{p:l:isl}
    We already proved at~\cref{l:np2} that the statement holds if $I_i$ is a palindrome. So, we only consider the case where $I_i$ is not a palindrome. In this case, $I_i$ is of length at least four and contains at least two different characters, as if it only had one character, it was a palindrome, and if it was of length three or less, then the two crossing palindromes are of length 2, which can only occur if the string is $aaa$. 
    
    After showing that $I_i$ is of length four or more and contains at least two different characters, let us assume to the contrary that $\sigma(I_i)+2$ characters are required to reconstruct $PF$. As we showed at~\cref{l:np2}, only four characters have inequality relations with $I_i$ - the two adjacent characters on each side of the island. The characters in positions $end(I_i)+2,start(I_i)-2$ (if exist) only have an inequality relation with the last and first characters of $I_i$ accordingly, hence can be chosen for the other character in $I_i$. However, the characters next to $I_i$ can have an unbounded number of inequality restrictions. Let us assume a new character $\$$ is added to the left of $I_i$. Since $I_i$ is not a palindrome, the character on the right can also be $\$$, as $\$$ is a new character that cannot violate any palindromic restriction within $I_i$. Since $I_i$ is not a palindrome, it neither violates a restriction with the previously assigned new character $\$$. Neither equals any further characters, as they were chosen as characters inside $I_i$.
\end{proofcount}

\begin{proofcount}[Proof for~\cref{l:contained}]\label{p:l:contained}
    By definition of crossing palindromes, $start(P_1)<start(P_2)$ and $end(P_1)<end(P_2)$. Therefore, the inequalities defined by the palindromes $P_1,P_2$ do not interfere with the reconstruction of $P_1$ or $P_2$. Also, by definition of contained palindromes, after reconstructing $P_1$, we already reconstructed all characters in $P_2$, but no new character inequality restriction was introduced.
\end{proofcount}

\begin{proofcount}[Proof for~\cref{l:add-mid}]\label{p:l:add-mid}
    To the contrary - we should reconstruct $P_1$ first, by~\cref{l:add-one}, $P_2$ can only introduce one new character to the reconstruction. At~\cref{l:add-one}, we showed that $P_2$ has at most one new character and at most $\sigma(P_1)$ distinct characters overall. We now need to show that $P_3$ does not introduce another character. Because $P_2$ is ordered before $P_3$, we know that $start(P_2)<start(P_3)$ and $end(P_2)<end(P_3)$. Therefore, all characters common to $P_1$ and $P_3$ are also common to $P_1$ and $P_2$. We showed that either $P_2$ introduced no new characters, or a character exists in $P_1$ but not in $P_2$. If $P_2$ introduced no new characters, then the previous lemma applies directly to $P_3$. Otherwise, there exist two characters $\$_1,\$_2$ in $P_1$ and $P_2$ (resp.) that are unique. The character $\$_1$ does not have any palindromic inequality with characters in $P_3$ that are not reconstructed, therefore whenever a new character needs to be introduced, $\$_1$ can be chosen. We previously showed that no more than one character will be needed, keeping an overall of a single added character.
\end{proofcount}

\begin{proofcount}[Proof for~\cref{l:clique}]\label{p:l:clique}
    The proof is simple and inductive. The induction claim is that the restriction graph of $F_k$ is a clique. Note that the restriction graph of a string and the graph of its fingerprint produce the same graph.
    
    We know that $S_3=123$, and by~\cref{o:adjr}, the restriction graph of $S_3$ is a clique.

    We now prove for $k+1$: By the recursive structure of $F_{k+1}$, we know that if $S_k=xP_kt$, then $S_{k+1}=xP_ktP_ky$. By the induction claim, $S_k=xP_kt$ has a clique restriction graph. Moreover, because of the recursive structure, we know that $y$ must be connected to all nodes in the restriction graph representing $P_k$ and $t$. What is left is to show that $x$ and $y$ are connected, which is trivial, as $P_ktP_k$ is a palindrome.
\end{proofcount}
\section{Supplementary Examples}
\begin{example}[Crossing Palindromes - \cref{d:crosspl}]\label{e:crosspl}
    Let $S=1213431211$ be a string. The non-trivial palindromes in the string are $(1,3),(1,9),(7,9),(9,10)$. By the definition of crossing palindromes, $(1,3)$ and $(1,9)$ do not cross because the palindromes share the same starting point. For similar reasons, $(1,9)$ and $(7,9)$ are not crossing. However, the palindromes $(i_1=1,i_2=9),(i_1=9,i_2=10)$ are crossing, because $i_1<i_2 \land i_2\le j_1 < j_2$.
\end{example}

\begin{example}[Main problem - Problem~\ref{p:main}]\label{e:p1}
    Let $F=(\{\}, 10)$ be a palindromic fingerprint. A possible reconstruction of $F$ is the string $S_1=0123456789$. Another possible reconstruction is $S_2=0120120120$.

    Let $F=(\{(2,4)\}, 5)$ be a palindromic fingerprint. A possible reconstruction for this fingerprint is $S=01213$. Later in the paper, we prove that all possible reconstructions for $F$ parameterize match to $S$.

    Let $F=(\{(1,4),(2,4),(2,5)\}, 5)$ be a palindromic fingerprint. No string $S$ exists such that $PF(S)=F$. Proof: 
    \begin{equation*}
    \begin{split}
        (1,4)=>S[1]=S[4],S[2]=S[3]\\ (2,4)=>S[2]=S[4] \\ (2,5) => S[2]=S[5],S[3]=S[4]
    \end{split}
    \end{equation*}
    hence, $S[1]=S[5]$, but by maximality of $(2,4)$, $S[1] \neq S[5]$.
\end{example}

\begin{example}[Palindromic fingerprint - \cref{d:palf}]\label{e:palf}
Let $S=abacabada$. The maximal palindromes of $S$ are
$$
\{(1, 7), (1, 3),(5,7), (7, 9)\} \cup $$
$$ \{(1,1), (3,3), (5,5), (7,7), (9,9)\}
$$
and thus the palindromic fingerprint is $PF(S) = \{(1, 7), (1, 3),(5,7), (7, 9)\}$. Palindromes are visualized at~\cref{f:pal}

\begin{figure}[ht]
    \caption{The palindromes in $S$ }
    \includegraphics[width=140pt]{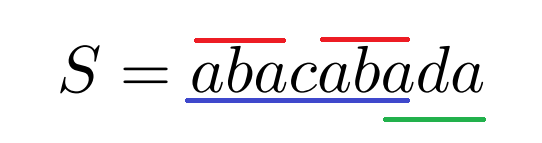}
    \label{f:pal}
\end{figure}

The example above presents the fingerprint as a union of two sets. The set on the bottom of the equation consists of trivial maximal palindromes - palindromes of length one \footnote{Some authors refer to empty palindromes as well, in this paper we do not regard them at all.}, hence we ignore them in the fingerprint. The set on top consists of the rest of the maximal palindromes. Non-maximal palindromes, like $(2,2)$, were omitted.

It is worthwhile to mention that either $(1, 3)$ or $(5, 7)$ could have been omitted because each, in addition to the $(1, 7)$ palindrome, implies the other. However, we do not omit such palindromes from the fingerprint. Such palindromes are called {\em implied maximal palindromes}.

An example of a valid palindrome that must not participate in the fingerprint is $(2, 6)$, because it is not maximal.

\end{example}

\begin{example}[Restriction graph - \cref{d:rsg}]\label{e:rsg}
    Consider the string $S=41213121566757$.

    The fingerprint of the string is $PF(S)=\{(2,4),(2,8),(6,8),(10,11),(12,14)\}$, which is all the maximal palindromes. 

    The equality graph, $G_{=}$, is a graph where each index in the original string is assigned to a node. In the graph, we only connect nodes that ought to be connected by a palindromic restriction. The equality graph is illustrated at~\cref{f:eqg}.

    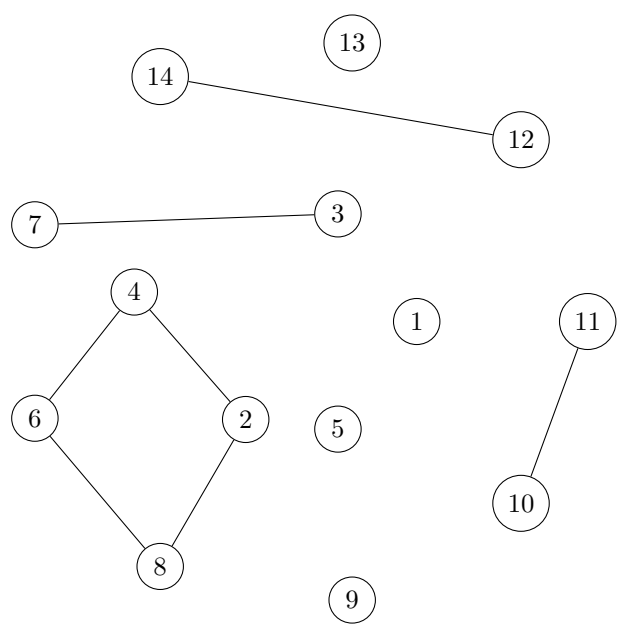
\begin{figure}
        \centering
        \begin{tikzpicture}[scale=1.5]

            \foreach \i in {1,5}{
                \node[circle, draw] (\i) at ({72*(\i - 1)}:1) {\i};
            }

            \foreach \i in {8,...,14}{
                \node[circle, draw] (\i) at ({40*(\i-2)}:2.5) {\i};
            }

            \node[circle, draw] (2) at ({80*(4-1)}:1) {2};
            \node[circle, draw] (3) at ({72*(2-1)}:1) {3};
            \node[circle, draw] (4) at ({85*(3-1)}:1.5) {4};
            \node[circle, draw] (6) at ({40*(7-2)}:2.5) {6};
            \node[circle, draw] (7) at ({40*(6-2)}:2.5) {7};

            \draw (3) -- (7);
            \draw (12) -- (14);
            
            \draw (2) -- (4);
            \draw (4) -- (6);
            \draw (2) -- (8);
            \draw (6) -- (8);

            \draw (10) -- (11);
          \end{tikzpicture}
          \caption{The equality graph of $S$.}
    \label{f:eqg}
    \end{figure}

    The restriction graph $G$ shrinks every connected component to only one node. We labeled the nodes with their original indices, comma separated. In the restriction graph, we draw an edge between two nodes that should be reconstructed using another label. Note that although we omitted trivial palindromes in the fingerprint, we still know they are there, and when considering the restriction graph we also consider the restrictions of trivial palindromes. For example, we can see that there is no palindrome centered at 10, which means that the palindrome centered at ten is trivial, hence $S[9] \neq S[11]$. The restriction graph is at~\cref{f:restg}.

    Our last step towards reconstruction is coloring the graph. We can see that the subgraph with nodes $\{(1), (2,4,6,8), (3,7), (5), (9)\}$ is the complete graph $K_5$, hence there is no coloring with less than five colors. Also, there are only eight nodes, implying no coloring with more than eight colors exists.

    Our original string $S$, is a valid coloring using 7 colors. The string $S=41213121566787$ is the naïve coloring that assigns each node with another color. The string $S=45253525133212$ represents an optimal coloring of the graph, i.e., coloring with the fewest possible colors. The coloring is presented at~\cref{f:optc}.

\begin{figure}
    \begin{subfigure}{0.45\textwidth}
        \centering
        \begin{tikzpicture}[scale=1.5]
        
            \node[circle, draw] (1) at (-1.5,-1.5) {$v_1$};
            \node[circle, draw] (2) at (-2,0) {$v_2$};
            \node[circle, draw] (3) at (-1,2) {$v_3$};
            \node[circle, draw] (4) at (0,-2) {$v_4$};
            \node[circle, draw] (5) at (0,2) {$v_5$};
            \node[circle, draw] (6) at (2,-1.5) {$v_6$};
            \node[circle, draw] (7) at (2,0) {$v_7$};
            \node[circle, draw] (8) at (2,2) {$v_8$};
            
            \foreach \i in {2,4,5,7,8}{
                \foreach \j in {2,4,5,7,8}{
                    \ifnum\i<\j
                        \draw[line width=0.5mm, red] (\i) -- (\j);
                    \fi   
            }}

            \foreach \i in {1,3,6}{
                \foreach \j in {1,3,6}{
                    \ifnum\i<\j
                        \draw (\i) -- (\j);
                    \fi   
            }}            
            \draw (1) -- (2);
            \draw (2) -- (6);
            \draw (6) -- (7);
          \end{tikzpicture}
          \caption{The restriction graph of $S$, biggest clique is highlighted}
          \label{f:restg}
        \end{subfigure}
        \hfill        
        \begin{subfigure}{0.45\textwidth}
          
          \begin{tikzpicture}[scale=1.5, every node/.style={circle, draw, inner sep=1pt, line width=1.5pt}]
        
            \node[circle, draw, color=red] (1) at (-1.5,-1.5) {$v_1$};
            \node[circle, draw, color=orange] (2) at (-2,0) {$v_2$};
            \node[circle, draw, color=orange] (3) at (-1,2) {$v_3$};
            \node[circle, draw, color=blue] (4) at (0,-2) {$v_4$};
            \node[circle, draw, color=darkgray] (5) at (0,2) {$v_5$};
            \node[circle, draw, color=blue] (6) at (2,-1.5) {$v_6$};
            \node[circle, draw, color=violet] (7) at (2,0) {$v_7$};
            \node[circle, draw, color=red] (8) at (2,2) {$v_8$};
            
            \foreach \i in {2,4,5,7,8}{
                \foreach \j in {2,4,5,7,8}{
                    \ifnum\i<\j
                        \draw (\i) -- (\j);
                    \fi   
            }}

            \foreach \i in {1,3,6}{
                \foreach \j in {1,3,6}{
                    \ifnum\i<\j
                        \draw (\i) -- (\j);
                    \fi   
            }}            
            \draw (1) -- (2);
            \draw (2) -- (6);
            \draw (6) -- (7);
          \end{tikzpicture}
          \caption{Optimal coloring of $G$}
          \label{f:optc}
    \end{subfigure}
    \caption{$$v_1=\{9\},v_2=\{12,14\},v_3=\{13\},v_4=\{5\},v_5=\{1\}, v_6=\{10,11\},v_7=\{2,4,6,8\},v_8=\{3,7\}$$}
\end{figure}

\end{example}

\begin{example}[Palindromic islands - \cref{d:isl}]\label{e:isl}
    Let $S$ be a string, $S=eaabbaadeed$, with its palindromes highlighted. The islands are maximal palindromes not contained in any other palindromes, and they are highlighted in blue at~\cref{f:isl}.
    \begin{figure}[H]
        \caption{The palindromes and islands of $S$}
            \includegraphics[width=300pt]{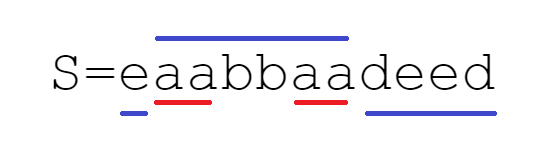}
        \label{f:isl}
    \end{figure}
    
\end{example}

\begin{example}[Standalone reconstruction - \cref{d:sr}]\label{e:sr}
Let $S$ be a string, $S=eaabbaadeed$. The palindromic islands of $S$ in start-length notations are $\{(1,1),(2,6),(8,4)\}$, or $I_1=(1,1),I_2=(2,6),I_3=(8,4)$. A standalone reconstruction of $I_1$ can be $aaaaaaaaaaa$, and can also be $a**********$. A standalone reconstruction of $I_2$ can be $aaabbaaaaaa$, and can also be $*ccaacc****$, but cannot be $eacbbcadeed$, because the violated palindromic restriction $(2,2)$ is within $I_2$.
\end{example}

Using a general reconstruction algorithm $A$, it is straightforward to independently reconstruct an island $I=(j,\ell)$. All that needs to be done is to reconstruct the fingerprint of $T[j..j+\ell-1]$, where $T$ represents the reconstructed original fingerprint. It is important to note that this explanation is informal and provided solely to give an intuitive understanding.

\end{document}